\documentclass[oneside,11pt,letterpaper]{amsart}

\usepackage[usenames,dvipsnames]{xcolor}
\usepackage[colorlinks=true,linkcolor=Blue,citecolor=NavyBlue]{hyperref}
\usepackage{amsmath,amssymb,epsf}
\usepackage{amsfonts,amsbsy,amscd,stmaryrd}
\usepackage{latexsym,amsopn}
\usepackage{amsthm,mathtools}
\usepackage{mathrsfs}
\usepackage{slashed}

\newcommand{\ep}{\epsilon}

\allowdisplaybreaks

\setlength{\textheight}{8.50in} \setlength{\oddsidemargin}{0.4in}
\setlength{\evensidemargin}{1.00in
} \setlength{\textwidth}{5.8in}
\setlength{\topmargin}{0.00in} \setlength{\headheight}{0.18in}
\setlength{\marginparwidth}{1.0in}
\setlength{\abovedisplayskip}{0.2in}
\setlength{\belowdisplayskip}{0.2in}
\setlength{\parskip}{0.05in}

\usepackage{graphicx}

\usepackage{color}
\definecolor{Red}{rgb}{1.,0.,0.}
\newcounter{smallarabics}
\newenvironment{arabicenumerate}
{\begin{list}{{\normalfont\textrm{(\arabic{smallarabics})}}}
  {\usecounter{smallarabics}\setlength{\itemindent}{0cm}
   \setlength{\leftmargin}{5ex}\setlength{\labelwidth}{4ex}
   \setlength{\topsep}{0.75\parsep}\setlength{\partopsep}{0ex}
   \setlength{\itemsep}{0ex}}}
{\end{list}}

\newcommand*{\wbar}[1]{\overbracket[0.7pt][-2pt]{#1\!}\,}

\newcommand{\ben}{\begin{arabicenumerate}}  
\newcommand{\een}{\end{arabicenumerate}}

\newtheorem{theorem}{Theorem}[section]

\newtheorem{proposition}[theorem]{Proposition}
\newtheorem{lemma}[theorem]{Lemma}
\newtheorem{corollary}[theorem]{Corollary}

\theoremstyle{definition}
\newtheorem{definition}[theorem]{Definition}
\newtheorem{remark}[theorem]{Remark}
\newtheorem{example}[theorem]{Example}
\newcommand{\beq}{\begin{equation}}
\newcommand{\eeq}{\end{equation}}
\newcommand{\bea}{\begin{aligned}}
\newcommand{\eea}{\end{aligned}}
\newcommand{\bex}{\begin{example}}
\newcommand{\eex}{\end{example}}
\def\bel{\begin{lemma}}
\def\eel{\end{lemma}}
\def\bet{\begin{theorem}}
\def\eet{\end{theorem}}
\def\bed{\begin{definition}}
\def\eed{\end{definition}}
\def\ber{\begin{remark}}
\def\eer{\end{remark}}
\newcommand{\qeds}{\qed\medskip}
\def\proof{
\noindent{\it Proof.}\ \ }

\def\rr{{\mathbb R}}
\def\zz{{\mathbb Z}}
\def\cc{{\mathbb C}}
\def\nn{{\mathbb N}}

\def\ss{{\mathbb S}}

\usepackage{calrsfs}
\DeclareMathAlphabet{\pazocal}{OMS}{zplm}{m}{n}

\def\cD{{\pazocal D}}
\def\cE{\pazocal{E}}

\def\cH{{\pazocal H}}

\def\cU{{\pazocal U}}

\def\cM{{\pazocal M}}

\def\cZ{{\pazocal Z}}
\def\cV{{\pazocal V}}

\newcommand{\Yspace}{\pazocal{Y}}
\newcommand{\Zspace}{\pazocal{Z}}
\newcommand{\pa}{\partial}

\newcommand{\RR}{\mathbb{R}}

\newcommand{\abs}[1]{{\lvert{#1}\rvert}}
\newcommand{\ang}[1]{{\langle{#1}\rangle}}

\def\c{{\rm c}}

\def\sp{{\rm sp}}

\def\loc{{\rm loc}}

\let\Im\relax
\let\Re\relax

\DeclareMathOperator{\Im}{Im}
\DeclareMathOperator{\Re}{Re}
\DeclareMathOperator{\Dom}{Dom}
\DeclareMathOperator{\Tr}{Tr}

\DeclareMathOperator{\Ran}{Ran}

\DeclareMathOperator{\supp}{supp}

\def \p{ \partial}

\def\14{\frac{1}{4}}

\def\12{\frac{1}{2}}

\newcommand{\one}{\boldsymbol{1}}

\def\bep{\begin{proposition}}
\def\eep{\end{proposition}}
\def\b{{\rm b}}

\newcommand{\bra}{\langle} 
\newcommand{\ket}{\rangle}
\newcommand{\leftw}{{:}}
\newcommand{\rightw}{{:}}
\renewcommand{\geq}{\geqslant}
\renewcommand{\leq}{\leqslant}

\makeatletter
\newsavebox\myboxA
\newsavebox\myboxB
\newlength\mylenA

\newcommand*\xoverline[2][0.75]{%
    \sbox{\myboxA}{$\m@th#2$}%
    \setbox\myboxB\null
    \ht\myboxB=\ht\myboxA%
    \dp\myboxB=\dp\myboxA%
    \wd\myboxB=#1\wd\myboxA
    \sbox\myboxB{$\m@th\overline{\copy\myboxB}$}
    \setlength\mylenA{\the\wd\myboxA}
    \addtolength\mylenA{-\the\wd\myboxB}%
    \ifdim\wd\myboxB<\wd\myboxA%
       \rlap{\hskip 0.5\mylenA\usebox\myboxB}{\usebox\myboxA}%
    \else
        \hskip -0.5\mylenA\rlap{\usebox\myboxA}{\hskip 0.5\mylenA\usebox\myboxB}%
    \fi}
\makeatother

\newcommand*{\defeq}{:=}

\def\WF{{\rm WF}}

\def\cf{C^\infty}

\def\zero{{\rm\textit{o}}}
\def\be{{}^{\rm b}}

\def\inti{{\circ}}

\def\wf{{\rm WF}}

\newcommand{\orega}[1]{{\cE'(M)\to \cf(M)}}

\def\loc{{\rm loc}}

\DeclareMathOperator{\sgn}{sgn}

\newcommand{\open}[1]{\mathopen{}\mathclose{\left]#1 \right[}}
\newcommand{\clopen}[1]{\mathopen{}\mathclose{\left[#1 \right[}}
\newcommand{\opencl}[1]{\mathopen{}\mathclose{\left]#1 \right]}}
\newcommand{\opens}[1]{\mathopen{}\mathclose{]#1 [}}
\newcommand{\clopens}[1]{\mathopen{}\mathclose{[#1 [ }}
\newcommand{\opencls}[1]{\mathopen{}\mathclose{]#1 ]}}

\newcommand{\ccf}{C_{\rm c}^\infty}

\newcommand{\norm}[1]{\left\|{#1}\right\|}
\newcommand{\normm}[1]{\big\|{#1}\big\|}

\newcommand{\cosb}{\be S^*M}

\newcommand{\Mwithvalues}{M;\cc^4}

\newcommand{\Hlinf}{H_{\b,\loc}^{1,\infty}(\Mwithvalues)}
\newcommand{\Hcminfd}{H_{\b,\c}^{-1,-\infty}(\Mwithvalues)}

\newcommand{\dHcminf}{H_{\b,\c}^{1,-\infty}(\Mwithvalues)}
\newcommand{\dHlminf}{H_{\b,\loc}^{1,-\infty}(\Mwithvalues)}
\newcommand{\dHcinf}{H_{\b,\c}^{1,\infty}(\Mwithvalues)}
\newcommand{\dHlinf}{H_{\b,\loc}^{1,\infty}(\Mwithvalues)}

\newcommand{\Hl}[1]{H_{\b,\loc}^{1,#1-1}(\Mwithvalues)}
\newcommand{\Hcd}[1]{H_{\b,\c}^{-1,#1+1}(\Mwithvalues)}

\newcommand{\cotwice}{\cosb\times\cosb}

\def\st{{ \ |\  }}

\newcommand{\module}[1]{\left|#1\right|}

\newcommand{\Sig}{\Sigma}
\newcommand{\dSig}{\dot\Sigma}
\newcommand{\Sigin}{\dot\Sigma_{\rm in}}
\newcommand{\Sigout}{\dot\Sigma_{\rm out}}

\author{Dean Baskin}
\address{Department of Mathematics, Texas A\&M University, College Station, TX, USA}
\email{dbaskin@math.tamu.edu}

\author{Micha{\l} Wrochna} 
\address{Mathematical Institute, Universiteit Utrecht, Utrecht, The Netherlands \vspace{-0.3cm}} \address{Mathematics \& Data Science, Vrije Universiteit Brussel, Brussels, Belgium}
 \email{{m.wrochna@uu.nl}}

\author{Jared Wunsch}
\address{Department of Mathematics, Northwestern University, Evanston, IL, USA}
\email{jwunsch@math.northwestern.edu}

\begin{document}

\title{Singularities of Dirac--Coulomb propagators}

\begin{abstract}
In this paper we study singularities of propagators and $N$-point functions for Dirac fields in a Coulomb potential, possibly with a $t$-dependent smooth part for $|t|<T<\infty$. We show  that the \emph{in} and \emph{out} Dirac--Coulomb vacua are Hadamard states for $r\neq 0$. Furthermore, we prove that the relative charge density of any two Hadamard states is well-defined as a locally integrable function including near $r=0$. The results are based on a diffractive  propagation of singularities theorem for the Dirac--Coulomb system previously obtained by the first and third authors, generalized here to the case of $t$-dependent potentials.
\end{abstract}

\maketitle

\section{Introduction and main result}

\subsection{Introduction}  Second quantization of Dirac fields in external potentials and the construction of perturbative series in interacting Quantum Field  Theory require a precise description of singularities of propagators and $N$-point functions. In  related models in relativistic Quantum Mechanics it is often possible to introduce  an ultra-violet cutoff that  improves the short-distance  behavior (see e.g.~\cite{Fournais2020,Sere2022} for the Dirac--Fock model, and \cite{Hainzl2005,Hainzl2007} and references therein for the Bogoliubov--Dirac--Fock model), but    removing the cutoff can then be a very delicate issue.  In fully relativistic external field QED one  inevitably has to face  ultra-violet divergences originated in singularities of Schwartz kernels of operators; see \cite{Brouder2002,Marecki2003,Indelicato2016,Deckert2023} for expository works focused on different formalisms and regimes.          

If the external potentials are smooth and possibly depend on time, singularities of propagators are best dealt with by  methods of microlocal analysis: these were initially applied in the  similar setting of Dirac fields on smooth Lorentzian spin manifolds \cite{Hollands2001a,Sahlmann2001} (cf.~\cite{Gerard2022a,Gerard2022,Drago2022,Capoferri2025} for more recent developments) and then reworked for external electromagnetic potentials \cite{Marecki2003,Wrochna2012a,Zahn2014,Zahn2015a,Galanda2025}. As in the case of scalar fields a primary role is played by the Hadamard condition \cite{radzikowski1996micro}, which implies short-distance behavior suitable for renormalization of perturbative series \cite{Brunetti2000a,Hollands2001b}.

If the potential is singular at $r=0$ and one is given  a  Hadamard state away from $r=0$,  then the constructions are still valid for $r\neq 0$ and in principle one has some well-defined renormalized quantities and perturbative series.  However, there are  two main problems with this:
\begin{enumerate}
\item It is not clear if a Hadamard state exists at all (or more to the point, it is not known if the natural physical states occurring in this context are Hadamard).\smallskip
\item Renormalized quantitities such as the charge density could behave at $r=0$ bad enough  so that the quantum charge is not locally finite.
\end{enumerate}
The latter eventuality would mean infinite charge accumulation near $r=0$, and in consequence the back-reaction of external electro-magnetic fields with Dirac fields would no longer be negligible, thus undermining the consistency of the theory. 

The case of singular potentials is of particular importance in view of their role in modeling atoms, yet it is mathematically much more difficult because of their ability to disrupt regularity properties of propagators.  We remark that there  has been significant recent progress in studying spectral properties of Dirac--Coulomb operators (see e.g.~\cite{Esteban2020,Fournais2020,Esteban2021a,Esteban2021,Derezinski2022,Dolbeault2023}), but the methods employed so far are not suitable for analyzing local properties of singular Schwartz kernels.
 
\subsection{Main results}\label{ss:main} In the present paper we resolve the two
issues (1)--(2) in the case of \emph{Coulomb-like potentials},
possibly with a time-dependent smooth part. Namely, we consider a
Dirac--Coulomb Hamiltonian on $\rr^{3}$, i.e.~an operator of the
form
\beq\label{eq:as1}
H(t)=\sum_{k=1}^{3} \alpha_k \big(i^{-1} \p_k + A_k(t)\big) - A_0(t) + m \beta,
\eeq
where $m\in\rr$, $A_k\in C^\infty (\rr^{1,3})\cap C_\c^\infty(\rr_t; L^\infty(\rr^3_{\bar{x}}))$ and the potential $A_0$ is
of the form 
\beq\label{eq:as2}
A_0(t)=\frac{\kappa}{\module{\bar x}}+A_{0,\infty}(t), \quad {A_{0,\infty}\in C^\infty (\rr^{1,3})\cap C_\c^\infty(\rr_t; L^\infty(\rr^3_{\bar{x}}))},
\eeq 
with $\kappa\neq 0$ a constant.  We will suppose in particular that
\beq\label{eq:as3}
\supp A_k \cup \supp A_{0,\infty} \subset \{t \in \open{t_-, t_+}\}
\eeq
for some $t_-<t_+$.
Above, $\beta,\alpha_k$ are the $4\times 4$ matrices
$$
\beta= \begin{pmatrix} I &  0 \\ 0 & -I \end{pmatrix}, \quad \alpha_k= \begin{pmatrix} 0 & \sigma_k \\  \sigma_k & 0 \end{pmatrix}, \quad k=1,2,3,
$$
 where $\sigma_k\in M_{2}(\cc)$ are the Pauli matrices. Throughout the paper we assume that $ |\kappa |<\12$, in which case $H(t)$ is well-known to be essentially self-adjoint in $L^2(\rr^3;\cc^4)$.
  The Dirac--Coulomb  equation reads $Du=0$, where we adopt the convention that
  $$
  D=i \p_t + H(t).
  $$
    The corresponding Schrödinger evolution operator (which is simply $e^{i (t-s) H }$ if $H(t)\equiv H$ is $t$-independent) is denoted by $U(t,s)$.
  
In this setting, quantum states and second quantized fields  are  obtained  from a \emph{density matrix} at some reference time, i.e.~an operator $\gamma\in B(L^2(\rr^3;\cc^4))$ s.t.~$0\leq \gamma\leq \one$. If $\gamma(s)$ is the density matrix at time $s$, then the density matrix at time $t$ is given by
$$
\gamma(t)=U(t,s) \gamma(s) U(s,t). 
$$

The most important example is when $H(t)\equiv H$ is $t$-independent. Then the spectral projection  $\gamma=\one_{\opencl{-\infty,0}}(H)$ defines the fundamental physical state, called the  \emph{Dirac--Coulomb vacuum}. More generally, if  $H(t)\equiv H(t_-)$ is $t$-independent  in the past and $H(t)\equiv H(t_+)$ in the future as assumed in \eqref{eq:as2}--\eqref{eq:as3}, there are two natural possibilities. One can either take the Dirac-Coulomb vacuum $\one_{\opencl{-\infty,0}}(H(t_-))$ in the past, and  propagate it to the future, obtaining the trace density $\gamma_{\rm in}(t)$ of the so-called \emph{in}-vacuum, or take $\one_{\opencl{-\infty,0}}(H (t_{+}))$ in the future, obtaining the \emph{out}-vacuum. The  difference $\gamma_{\rm out}(t)-\gamma_{\rm in}(t)$ is used to compute the charge created by the intermediary $t$-dependent potential, in fact the trace density 
\beq\label{eq:charge1}
\rho_{\gamma_{\rm out}, \gamma_{\rm in}}(t,\bar x):=\Tr_{\cc^4}(\gamma_{\rm out}(t)-\gamma_{\rm in}(t))(\bar x,\bar x)
\eeq
is the relative quantum charge density at $(t,\bar x)\in\rr^{1,3}$. In the absence of a cut-off, $\gamma_{\rm in / out}(t)$ are not trace-class and the trace densities (integrands of the trace, or Schwartz kernels restricted to the diagonal) do not exist, so the only hope for \eqref{eq:charge1} to be finite is that the singularities cancel out.

To understand how these singularities evolve in time, it is advantageous to think in terms of space-time quantities (as in QFT on curved spacetimes). Namely, to $\gamma$ one associates the spatio-temporal \emph{two-point functions} $\Lambda^+$ and $\Lambda^-$, i.e., the operators
$$
\big(\Lambda^- f\big)(t)=\int_{-\infty}^\infty U(t,t_0)\gamma(t_0)  U(t_0,s) f(s)ds, \ \ f\in \ccf(\rr^{1,3};\cc^4),
$$
and similarly for $\Lambda^+$, with $\gamma$ replaced by $\one -
\gamma$.  The name comes from the fact that in QFT terminology, the
space-time smeared field operators $\ccf(\rr^{1,3};\cc^4)\ni f\mapsto \hat\psi(f),\hat\psi^*(f)$
associated to $\gamma$ satisfy
$$
\bra \Omega, \hat\psi(f) \hat\psi^*(g) \Omega\ket_{\cH} = \bra f, \Lambda^+ g \ket_{L^2}, \quad   \bra \Omega, \hat\psi^*(f) \hat\psi(g) \Omega\ket_{\cH} = \bra f, \Lambda^- g \ket_{L^2},
$$
where $\cH$ is the Hilbert space and $\Omega\in \cH$ the vacuum vector in the second quantized theory, see \S\ref{s:appA1}. The Schwartz kernel of $\Lambda^\pm$ is necessarily singular, but we can hope it is as regular as possible. In the setting with smooth potentials this occurs when $\Lambda^\pm$ satisfy the \emph{Hadamard condition}. This condition can be formulated as follows: for each non-zero $(t_i,\bar x_i;\tau_i, \bar \xi_i)\in \rr^{1,3}\times \rr\times \rr^3$ with $\pm\tau_i>0$, $i=1,2$, there exists pseudodifferential operators $B_1, B_2\in \Psi^0$ elliptic at $(t_1,\bar x_1;\tau_1,\bar \xi_1)$, $(t_2,\bar x_2;\tau_2,\bar \xi_2)$ respectively, such that $B_1 \Lambda^\pm B_2^*$ is regularizing (i.e., has smooth Schwartz kernel). Thus $\Lambda^\pm$ behave like smoothing operators after restricting localized Fourier transforms in time to $\pm \tau_i>0$, and  since they also satisfy $D\Lambda^\pm=0$ and $\Lambda^\pm D=0$, this can be traded for some  regularity in the spatial variables.

In the singular setting, we adopt the point of view that the Hadamard condition in the above formulation is still adequate for $r\neq 0$. We say that a density matrix $\gamma$  is \emph{Hadamard} if the associated two-point functions $\Lambda^\pm$ are Hadamard for $r\neq 0$. Our first main result says that the \emph{in}-vacua (and in analogy, \emph{out}-vacua) are Hadamard in this sense.

\begin{theorem}\label{thm:main1} Let $H(t)$ be the Dirac--Coulomb Hamiltonian in an electro-magnetic potential $A_0(t),\dots,A_3(t)$ satisfying \eqref{eq:as1}--\eqref{eq:as3} with $ |\kappa |<\12$, in particular the potential is time-independent for $t\leq t_-$.  Suppose $0\notin \sp(H(t_-))$. Then the  density matrix
\beq
\gamma_{\rm in}(t):=U(t,t_-) \one_{\opencl{-\infty,0}}(H (t_{-})) U(t_-,t), \quad t\in \rr,
\eeq
of the $\rm in$-vacuum is Hadamard. 
\end{theorem}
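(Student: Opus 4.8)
The plan is to recast the Hadamard condition as a statement about the b-wavefront set of the space-time two-point functions $\Lambda^\pm$ of $\gamma_{\rm in}$, to verify it in the static past $t\le t_-$ by a direct spectral argument, and to propagate it forward in time using the diffractive propagation of singularities theorem for $D$ in the $t$-dependent form established in this paper; the conceptual point is that the generalized broken bicharacteristic flow of $D$ respects the splitting of the characteristic set into its two frequency components, even across the Coulomb singularity at $r=0$. Concretely: since $U(\cdot,s)$ solves $Du=0$ one has $D\Lambda^\pm=0$, while $\Lambda^\pm D=0$ follows by integrating by parts in $s$ in the defining integral; unitarity of $U(t,s)$ on $L^2(\rr^3;\cc^4)$ makes $\Lambda^\pm$ well defined as maps $\ccf(\rr^{1,3};\cc^4)\to\cD'(\rr^{1,3};\cc^4)$ lying, near the (blown-up) singular worldline $\{r=0\}$, in the weighted b-Sobolev spaces that carry the b-wavefront set. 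Hence $\WF'(\Lambda^\pm)\subset\Char(D)\times\Char(D)$; as the Coulomb term has order $0$ it does not affect the principal symbol of $D$, so away from the zero section $\Char(D)$ is, over all of $\rr^{1,3}$, the disjoint union of two closed frequency sheets $\Gamma_+\sqcup\Gamma_-$ (namely $\Char(D)\cap\{\pm\tau>0\}$ for an appropriate sign). Fixing the labelling so as to match the convention in the definition, the assertion that $\gamma_{\rm in}$ is Hadamard is exactly that $\WF'(\Lambda^\pm)$, over $\{r\neq 0\}$, is contained in $\Gamma_\pm\times\Gamma_\pm$.

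For $t\le t_-$ (and likewise $s\le t_-$) the potential reduces to the time-independent Dirac--Coulomb Hamiltonian $H(t_-)$, and there $\Lambda^\pm$ coincides with the static two-point function $\Lambda^\pm(t,s)=e^{i(t-s)H(t_-)}P_\pm$, where $P_-=\gamma_{\rm in}(t_-)=\one_{\opencl{-\infty,0}}(H(t_-))$ and $P_+=\one_{\open{0,\infty}}(H(t_-))=\one-P_-$ (using $0\notin\sp(H(t_-))$). Since $H(t_-)$ is negative on $\Ran P_-$ and positive on $\Ran P_+$, while $i\partial_t$ acts as $-H(t_-)$ on solutions of $Du=0$, the space-time frequency content of the kernel of $\Lambda^\pm$ is confined to the sheet $\Gamma_\pm$ in each factor --- over all of $\rr^{1,3}$, not just over $\{r\neq 0\}$. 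In particular the static Dirac--Coulomb vacuum is Hadamard for $r\neq0$; more to the point, this identifies the microlocal structure of $\Lambda^\pm$ on $\{t\le t_-,\ s\le t_-\}$.

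Now apply the diffractive propagation of singularities theorem for $D$ --- the $t$-dependent generalization obtained in this paper --- to the bisolutions $\Lambda^\pm$. It gives that $\WF'(\Lambda^\pm)$ is a union of maximally extended generalized broken bicharacteristics of $D$ in each factor, including those that reach and leave $\{r=0\}$, and that singularities present over times $\le t_-$ propagate forward along them. The decisive fact is that the frequency sheet is constant along every such bicharacteristic: $\Gamma_+$ and $\Gamma_-$ are disjoint and closed even over $\{r=0\}$ (at fiber infinity $|\tau|=|\xi|\to\infty$ with a fixed sign), and Coulomb diffraction occurs at a fixed time and a fixed sign of $\tau$, so it cannot move a bicharacteristic from one sheet to the other. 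Since generalized broken bicharacteristics are complete over compact time intervals --- the underlying Minkowski null geodesics move at unit speed, so they neither escape to spatial infinity in finite time nor linger at $\{r=0\}$, which they cross transversally --- the wavefront set confined to $\Gamma_\pm$ in the static past is carried forward, through $t\in\open{t_-,t_+}$ and on to all later times, to a wavefront set still confined to $\Gamma_\pm$ in each factor over $\{r\neq 0\}$. This is the Hadamard condition for $r\neq 0$, so $\gamma_{\rm in}$ is Hadamard; the out-vacuum is handled identically, with $t_+$ in place of $t_-$ and the roles of past and future exchanged.

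The crux is the last step: one must check that $\Lambda^\pm$ satisfies the hypotheses of the $t$-dependent diffractive theorem --- membership in the appropriate weighted b-Sobolev spaces near $\{r=0\}$, the precise b-sense of $D\Lambda^\pm=0=\Lambda^\pm D$, and the required second-microlocal regularity --- and then extract from the theorem the sheet-preservation across $\{r=0\}$; this is exactly where the Coulomb singularity endangers regularity and where the passage beyond the static setting must be made. A secondary technical point is the completeness, over compact time intervals, of the generalized broken bicharacteristics used to globalize the propagation, i.e.\ ruling out accumulation at $\{r=0\}$ and escape to spatial infinity in finite time.
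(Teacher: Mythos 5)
Your overall strategy is the same as the paper's: identify the Hadamard condition with a bound on the (operatorial) $\b$-wavefront set of $\Lambda^\pm$, verify that bound in the static region $t\le t_-$ by the spectral calculus for $H(t_-)$ (positivity/negativity of $H(t_-)$ on $\Ran(\one\mp\gamma_{\rm in}(t_-))$ confining the time-frequency to $\pm\tau>0$), and then push it forward with the diffractive propagation theorem, using that $\tau$ is constant along diffractive bicharacteristics so the two sheets $\dSig^\pm$ cannot mix at $r=0$. You also correctly identify where the danger lies. But the step you dispose of in one clause --- ``unitarity of $U(t,s)$ on $L^2$ makes $\Lambda^\pm$ well defined as maps \dots lying, near $\{r=0\}$, in the weighted b-Sobolev spaces that carry the b-wavefront set'' --- is precisely the part that does not follow from unitarity and that consumes most of the paper's Section~\ref{sec2}. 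To apply propagation of singularities in operator form one needs $\Lambda^\pm$ to be \emph{regular} in the sense of Definition~\ref{def:reg}, i.e.\ bounded $H^{-1,m+1}_{\b,\c}\to H^{1,m-1}_{\b,\loc}$ for \emph{all} $m\in\rr$, in particular for arbitrarily negative conormal order. Unitarity gives only the $L^2$-level statement. The paper obtains the full range by writing $\Lambda^+=T(\one-\gamma(t_0))T^*$ and proving $T:\cD^s(t_-)\to H^{1,s-1}_{\b,\loc}$ for all $s\in\rr$ (Lemmas~\ref{lemma:positive} and \ref{lemma:negative}); this requires a time-smeared energy estimate, an elliptic $\b$-estimate reducing conormal regularity to control of $\p_t$-derivatives, and for negative $s$ a Hahn--Banach/duality construction of weak solutions --- none of which is routine, because the domains $\cD^s(t)$ are $t$-dependent for $s>1$ and cannot be propagated naively through the non-static region.

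A second, smaller gap: you apply the propagation theorem to ``$\WF'(\Lambda^\pm)$ in each factor'' as if the scalar statement transferred directly to the Schwartz kernel. Because diffraction at $r=0$ only propagates regularity of the whole spread $S_q$ (not of individual points $q$), and because one must hold the other variable fixed uniformly, the paper has to formulate and prove an operator version (Theorem~\ref{opp}) with uniform estimates over bounded sets, and separately prove Proposition~\ref{prop:stronghad} to upgrade the interior Hadamard condition to a $\b$-wavefront bound over $r=0$ before propagating in $t$. Your sheet-preservation and completeness remarks are correct and match the paper's use of the constancy of $\tau$ along diffractive bicharacteristics; the missing content is the quantitative regularity input, not the microlocal geometry.
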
 

In consequence, for $r\neq 0$ all local constructions in perturbative QFT based on  \emph{in}-vacua proceed \emph{exactly} as in the non-singular case. In particular, the results on QFT in external potentials summarized in \cite{Zahn2014} and the earlier work \cite{Marecki2003} (building on similar results for Dirac fields on globally hyperbolic spacetimes) apply verbatim to this situation, hence the following immediate corollary.

\begin{corollary}\label{cor:QFT}  For any polynomial interaction, the Epstein--Glaser--Brunetti--Fre\-den\-ha\-gen perturbative interacting fermionic QFT based on $\gamma_{\rm in}$ is well-defined for $r\neq 0$.  
\end{corollary}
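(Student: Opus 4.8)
The plan is to obtain the corollary by feeding the Hadamard property of $\gamma_{\rm in}$, established in Theorem~\ref{thm:main1}, into the by now standard Epstein--Glaser--Brunetti--Fredenhagen construction of perturbative interacting theories. The key observation will be that every step of that construction is \emph{local}, so that it goes through over the open set $\{r\neq0\}$ even though the external potential is singular at $r=0$ and $\{r\neq0\}$ carries no Cauchy surface. Indeed, over $\{r\neq0\}$ the operator $D$ is a smooth Dirac-type operator — the singular Coulomb part of $A_0$ lives at $r=0$ and all other components of the potential are smooth — while the underlying causal structure is the flat Minkowski one, because a potential, in contrast with a metric perturbation, leaves the principal symbol of $D$, hence $\Char D$, i.e.\ the light cones, unchanged. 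Thus, restricted to $\{r\neq0\}$, the free Dirac field in the external electromagnetic potential $A_0,\dots,A_3$ together with the state $\gamma_{\rm in}$ is exactly a smooth external-field Dirac QFT equipped with a Hadamard two-point function, of the kind treated in \cite{Zahn2014,Marecki2003}, themselves built on the Dirac-field and general frameworks of \cite{Hollands2001a,Brunetti2000a,Hollands2001b}.

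It then remains to recall why that framework is blind both to $r=0$ and to the failure of global hyperbolicity of $\{r\neq0\}$. First, Wick powers and normal-ordered products of the free fields are defined by point-splitting against the two-point functions $\Lambda^{\pm}$ followed by subtraction of the geometric Hadamard parametrix; the coinciding-point limits exist precisely because $\WF'(\Lambda^{\pm})$ has, over $\{r\neq0\}$, the Hadamard form — this is where Hörmander's criterion for multiplying distributions is used — and the parametrix itself is manufactured locally from the Minkowski causal structure on $\{r\neq0\}$. Second, the time-ordered products are built inductively: away from coinciding points they are fixed by causal factorization through lower orders and the free two-point functions, and their extension to coinciding points succeeds because the relevant microlocal scaling degrees are finite at every order, finiteness being ensured by the Hadamard condition together with power counting, which applies since the interaction is polynomial in the Dirac field. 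The renormalization freedom is, order by order, a finite-dimensional space of local counterterms, and the interacting fields (and, for the $t$-dependent part of the potential, the $S$-matrix) are then the usual Bogoliubov formal power series. All of these operations are performed over open subsets of spacetime, are compatible with restriction to smaller regions, and glue consistently; hence they can be carried out on $\{r\neq0\}$ by covering it with globally hyperbolic sub-regions, running the construction there, and patching — at no stage is the global structure of $\{r\neq0\}$, or any information about $r=0$, used. This transcribes the constructions of \cite{Zahn2014,Marecki2003} verbatim and yields the corollary.

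The step requiring the most attention is not analytic but organizational: the construction must be phrased through the locality and covariance of the Epstein--Glaser--Brunetti--Fredenhagen framework — for instance as a net of local algebras over $\{r\neq0\}$, or by localizing in globally hyperbolic diamonds — rather than through a single global existence statement, since $\{r\neq0\}=\rr_t\times(\rr^{3}\setminus\{0\})$ is not globally hyperbolic. One also uses that the causal (Pauli--Jordan) propagator $\Lambda^{+}+\Lambda^{-}$ and the Hadamard parametrix coincide modulo $C^{\infty}$ over $\{r\neq0\}$, which again follows from Theorem~\ref{thm:main1} and the smoothness of $D$ there. Beyond this bookkeeping, the corollary is a direct transcription of the existing machinery, the only genuinely new ingredient being the Hadamard property supplied by Theorem~\ref{thm:main1}.
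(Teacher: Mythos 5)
Your proposal is correct and follows essentially the same route as the paper, which treats the corollary as immediate from Theorem~\ref{thm:main1} together with the locality of the constructions in \cite{Zahn2014,Marecki2003}: once the Hadamard condition holds over $\{r\neq 0\}$, the Epstein--Glaser--Brunetti--Fredenhagen machinery applies verbatim there. Your write-up merely fills in more of the standard details (localization in globally hyperbolic diamonds, scaling-degree extension of time-ordered products) than the paper does; the only small imprecision is attributing the ``parametrix modulo $C^\infty$'' property to the causal propagator $\Lambda^++\Lambda^-$ rather than to the two-point functions $\Lambda^\pm$ themselves, but this does not affect the argument.
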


This result as stated does still not say much about the behavior of renormalized quantities at $r=0$. However, we get a more precise statement  at least for relative charge densities.  Our second main result says in fact that renormalizing by subtracting a reference Hadamard density matrix yields a well-defined charge that  behaves reasonably well at $r=0$.

\begin{theorem}\label{thm2} Suppose $\gamma,\gamma_{\rm ref}\in L^2(\rr^3;\cc^4)$ are two regular Hadamard density matrices at some arbitrary time $t\in \rr$. Then the relative charge density
$$
\rho_{\gamma, \gamma_{\rm ref}}(t,\bar x):= \Tr_{\cc^4}(\gamma(t)-\gamma_{\rm ref}(t))(\bar x,\bar x)\in L^1_{\rm loc}(\rr^3)\cap C^\infty(\rr^3\setminus\{0\} )
$$
is well defined, behaves like $O(r^{-2-\epsilon})$ near $r=0$ and is smooth in $t$.
\end{theorem}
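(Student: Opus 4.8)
\medskip
\noindent\textit{Outline of the proof.}\ \ The plan is to express $\rho_{\gamma,\gamma_{\rm ref}}$ as the restriction to the diagonal of a single space-time bisolution, and then to control that bisolution by the diffractive propagation of singularities theorem near $r=0$ and by the Hadamard condition away from it. Set $E:=\Lambda^{-}_\gamma-\Lambda^{-}_{\gamma_{\rm ref}}$, the difference of the two-point functions attached to the two density matrices. Since the Schwartz kernel of $\Lambda^{-}_\gamma$ restricted to equal times $s=t$ is $\gamma(t)(\bar x,\bar y)$, one has $\rho_{\gamma,\gamma_{\rm ref}}(t,\bar x)=\Tr_{\cc^4}\big(\iota^{*}E\big)(t,\bar x)$, where $\iota\colon(t,\bar x)\mapsto(t,\bar x;t,\bar x)$ is the diagonal embedding. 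Thus it suffices to show that $E$ is smooth off $\{r=0\}$, that it has a controlled b-conormal behaviour at $\{r=0\}$, and that these two facts allow $E$ to be pulled back along $\iota$. Two structural inputs are used throughout. First, $E$ is a \emph{bisolution}: $DE=0$ and $ED=0$, because $D\Lambda^{\pm}=\Lambda^{\pm}D=0$. Second, by the canonical anticommutation relations recalled in \S\ref{s:appA1}, $\Lambda^{+}+\Lambda^{-}$ equals a fixed multiple of the causal propagator of $D$ and in particular does not depend on the density matrix; hence $E=-\big(\Lambda^{+}_\gamma-\Lambda^{+}_{\gamma_{\rm ref}}\big)$ as well. (The regularity assumption on $\gamma,\gamma_{\rm ref}$ ensures that all these objects are genuine distributions with good behaviour at spatial infinity, so that the sole issue is the Coulomb singularity at $r=0$.)

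\emph{Step 1: $E$ is smooth wherever both spatial arguments are nonzero.} There $D$ has smooth coefficients, so by elliptic regularity $\WF(E)$ is contained in the characteristic set of $D$ in each factor; on that set the Hadamard condition places $\WF\big(\Lambda^{-}_\gamma-\Lambda^{-}_{\gamma_{\rm ref}}\big)$ in the half where $\tau_1<0$ and $\WF\big(-(\Lambda^{+}_\gamma-\Lambda^{+}_{\gamma_{\rm ref}})\big)$ in the half where $\tau_1>0$, and these halves are disjoint. Since $E$ equals both distributions, $\WF(E)=\emptyset$ away from $\{r=0\}$, so $E$ is smooth there; in particular $\rho_{\gamma,\gamma_{\rm ref}}(t,\cdot)\in C^{\infty}(\rr^{3}\setminus\{0\})$.

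\emph{Step 2: behaviour of $E$ at $\{r=0\}$.} Here I would invoke the diffractive propagation of singularities theorem for $D$ obtained by the first and third authors, in the $t$-dependent form established earlier in this paper, applied in each of the two variables separately. Since $E$ is already smooth off $\{r=0\}$, it carries no incoming singularity along any generalized broken bicharacteristic reaching the Coulomb singularity, and it produces no diffracted singularity there that could escape to $\{r\neq0\}$ (consistent with Step 1). The propagation theorem then upgrades this into a quantitative statement: $E$ lies, jointly in the two variables, in a weighted b-Sobolev space $r_{\bar x}^{a}r_{\bar y}^{a}H^{\infty}_{\b}$ for every $a$ below a critical weight (this is where the regularity assumption on $\gamma,\gamma_{\rm ref}$ enters, ruling out the more singular indicial branch). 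The critical weight is governed by the indicial roots of the Dirac--Coulomb operator at $r=0$ — in the standard channel decomposition these are $\pm\sqrt{\kappa_j^{2}-\kappa^{2}}$ for the Dirac angular momentum numbers $\kappa_j$ — which under $\module{\kappa}<\tfrac12$ are all real with modulus $\geq\sqrt{1-\kappa^{2}}>\tfrac{\sqrt3}{2}$, precisely the regime in which the critical weight is favourable (and in which $H(t)$ is essentially self-adjoint). Since $\partial_t$ is a b-vector field tangent to $\{r=0\}$ and the smooth part of the potential depends smoothly on $t$, the space $H^{\infty}_{\b}$ also contains all $t$-derivatives of $E$, which yields smoothness of $\rho_{\gamma,\gamma_{\rm ref}}$ in $t$ up to and including $r=0$.

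\emph{Step 3: restriction to the diagonal.} The pullback $\iota^{*}E$ is legitimate: off $\{r=0\}$ because $E$ is smooth, and at $\{r=0\}$ because $\iota$ lifts to a b-map and the b-regularity from Step 2 is transverse to the b-conormal of the diagonal in each factor, so $\iota^{*}E$ is a b-conormal distribution on $\rr^{1,3}$ whose weight in $r$ is the sum of the two per-variable weights. A short computation with the critical weight then gives $\iota^{*}E=O(r^{-2-\epsilon})$ near $r=0$ for every $\epsilon>0$; since $r^{-2-\epsilon}\in L^{1}_{\rm loc}(\rr^{3})$, the trace density $\rho_{\gamma,\gamma_{\rm ref}}=\Tr_{\cc^4}\iota^{*}E$ has all the stated properties. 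I expect the main obstacle to be Step 2: passing from the plain smoothness of $E$ off $\{r=0\}$ to a \emph{quantitative, joint} b-conormal bound at the Coulomb singularity, which requires the full strength of the diffractive theorem (in particular that no diffracted front survives), its simultaneous use in both variables, and a careful identification of the admissible b-Sobolev weights in terms of $\sqrt{\kappa_j^{2}-\kappa^{2}}$ under the constraint $\module{\kappa}<\tfrac12$.
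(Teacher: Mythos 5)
Your overall architecture (form the difference $E$ of two-point functions, use the disjointness of the $\tau>0$ and $\tau<0$ wavefront-set bounds coming from $E=\Lambda^-_\gamma-\Lambda^-_{\gamma_{\rm ref}}=-(\Lambda^+_\gamma-\Lambda^+_{\gamma_{\rm ref}})$ to force $E$ to be regular, then restrict to the diagonal) is the same as the paper's. But there is a genuine gap in Steps 2--3, and it sits exactly at the quantitative heart of the theorem. The diffractive propagation theorem used in this paper (Theorems \ref{propagation} and \ref{opp}) propagates \emph{conormal} ($\b$-Sobolev) regularity measured against the fixed background spaces $H^{1}$ and $H^{-1}$; it says nothing about weights at $r=0$, and no indicial-root analysis of the Dirac--Coulomb operator appears anywhere in the argument. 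Your claim that propagation of singularities ``upgrades'' the smoothness of $E$ off $\{r=0\}$ to membership in $r_{\bar x}^{a}r_{\bar y}^{a}H^{\infty}_{\b}$ for $a$ up to a critical weight $\sqrt{\kappa_j^{2}-\kappa^{2}}$ is not something the cited theorem delivers, and establishing such a weighted statement would require a separate normal-operator/Mellin analysis of the Dirac--Coulomb model at $r=0$ that is not carried out here. What the paper actually extracts from propagation (via Proposition \ref{prop:stronghad} and Lemma \ref{emptyWF}) is that $\wf'_\b(E)=\emptyset$, hence $E: H^{-1,-\infty}_{\b,\c}\to H^{1,\infty}_{\b,\loc}$ — full conormal regularity relative to $H^{1}$, with \emph{no} gain in the radial weight.

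The exponent $-2-\epsilon$ then does not come from adding two per-variable critical weights; it comes from the dimensional content of the ambient spaces. In Proposition \ref{prop:td} the paper pairs the kernel against $r_1^{s}Y_{lm}$ and $r_2^{s'}Y_{l'm'}$, using that $r^{s}\in L^2_{\rm loc}(\rr^3)$ for $\Re s>-3/2$ and $r^{s'}\in H^{-1}_{\rm loc}$ for $\Re s'>-5/2$, interprets the radial integrals as Mellin transforms, integrates by parts with $\b$-vector fields to get rapid decay in the Mellin duals and in $l,l'$, and then inverts to conclude $|r_1^{3/2+\delta}r_2^{1/2+\delta}\Lambda|\leq C$, whence $\Lambda|_{\rm diag}=O(r^{-2-2\delta})$. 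This is also where the diagonal restriction is actually justified; your appeal to ``transversality to the b-conormal of the diagonal'' and to ``summing the per-variable weights'' replaces the one genuinely technical computation of the proof with an assertion. So: Step 1 is fine and matches the paper, but Step 2 invokes a mechanism (indicial roots feeding into weighted estimates) that neither the paper's propagation theorem nor your own argument supplies, and Step 3 presupposes the output of that mechanism.
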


Here \emph{regular}  means that the associated two-point functions
$\Lambda^\pm,\Lambda^\pm_{\rm ref}$ (viewed as operators) preserve
conormal regularity (i.e., regularity as measured by acting with $r\p_r$ and angular derivatives, as opposed to $\p_r$ and angular derivatives); see Definition \ref{def:reg}.

We show the regularity condition is satisfied by the \emph{in}- and \emph{out}-vacua, and as they are Hadamard by Theorem \ref{thm:main1}, Theorem \ref{thm2} applies to this case. This implies that the total charge created by turning on and then turning off a smooth time-dependent potential  is finite in any bounded region. 

\subsection{Method of proof, structure of paper} The main ingredient
of the proof is the propagation of singularities theorem for solutions
of the Dirac--Coulomb equation $Du=0$, obtained by two of us in
\cite{Baskin2023}, and generalized here to potentials with
time-dependent smooth part. 

Propagation of singularities in singular
settings has a long history: we mention here the work of
Cheeger--Taylor \cite{Cheeger1982}   (on wave equations on product cones),
Melrose--Wunsch \cite{Melrose2004} (on more general manifolds with
conical singularities), Vasy \cite{vasy2008propagation} (manifolds with corners) and Melrose--Vasy--Wunsch \cite{melrose2008propagation} (edge singularities). The Dirac--Coulomb equation poses significant
challenges as compared to the scalar case (cf.~Hintz \cite{Hintz2022}
for a parameter-dependent singular elliptic framework and hyperbolic
results adapted in particular to the Dirac--Coulomb operator,
Baskin--Booth--Gell-Redman \cite{Baskin2023a} for results on the
asymptotic behavior of solutions with microlocal methods, and Hintz \cite{Hintz2024}
for broad existence, uniqueness and regularity results applying to the
Dirac--Coulomb setting). On the other hand the
generalization to time-dependent potentials and lower a priori
regularity requires then only relatively mild modifications of the
arguments in \cite{Baskin2023} which we explain in the proof of
Theorem \ref{propagation}. The theorem uses Melrose's $\b$-calculus,
which permits us to localize singularities arriving at $r=0$ in time,
and to microlocalize them in energy, but not to distinguish among
singularities arriving at or leaving from $r=0$ along rays at
different angles; this permits \emph{diffractive} effects.

Instead of dealing directly with Schwartz kernels of operators we use an operator $\b$-wavefront set and show an operator version of propagation of singularities in  Theorem \ref{opp}. Its use is then twofold. First, we apply it to prove that if a bi-solution is  Hadamard for $r\neq 0$, then this implies a bound on its $\b$-wavefront set at $r=0$. Second, we show that the corresponding $\b$-wavefront set refinement of the Hadamard condition propagates well. This allows us to propagate regularity of $\gamma_{\rm in}$ and $\gamma_{\rm out}$ from the respective regions where $H(t)$ is time-independent. 

Once we establish control of operator $\b$-wavefront sets of
space-time quantities, we can use arguments similar to other singular
settings \cite{holographic,gannotwrochna} combined with a new
criterion (Proposition \ref{prop:td}) to conclude Theorem
\ref{thm2}. An important prerequisite, however, is the regularity
property, that is, we need good enough mapping properties of two-point
functions and propagators from $H^{-1}(\rr^{1,3};\cc^4)$-based
$\b$-Sobolev spaces of arbitrary order to
$H^{1}(\rr^{1,3};\cc^4)$-based $\b$-Sobolev spaces. We show this to
boil down to the well-posedness of the Cauchy problem in $\b$-Sobolev
spaces of arbitrary order including a good control of negative order
regularity; that turns out however to be problematic (we note in
particular that the results in \cite{Hintz2022,Hintz2024} do not
appear to give the regularity discussed here).The issue is that in our setting, domains of powers of
$H(t)$ are $t$-dependent in a way that makes them of limited use in
comparison to the $t$-independent setting,  and their interaction with
$\b$-Sobolev space analysis arguments turns out to be subtle. Our
strategy is to base the well-posedness theory on Cauchy data in the
static region where the use of domains of powers is sound, and to
derive energy estimates  smeared in time in the non-static region
rather than emphasizing Cauchy data there.

The paper is organized as follows. 

 Section \ref{sec2} provides the key results on regularity of solutions (in \S\ref{sec:propagator}) and propagation of singularities  in the case of $t$-dependent potentials (in \S\ref{ss:prop}), also in the operator sense (in \S\ref{ss:op}). Section \ref{sec3} proves the main results on Hadamard two-point functions and gives more precise estimates on their $\b$-wavefront sets. Finally, Appendix \ref{s:appA} recalls preliminaries on second quantization for Dirac fields needed for the formulation of our results and discusses corollaries for the renormalized quantum current and charge operators.

\section{Regularity of solutions and propagation of singularities} \label{sec2}

\subsection{Radial coordinates and blowup} In polar coordinates
$(r,\theta)$, the singularity at $\{r=0\}\subset \rr^3$ is dealt with
by replacing $\rr^3$ by the blown-up space 
$$
X=[\rr^3;\{0 \}]=\clopen{0,\infty}_r \times \ss_\theta^2
$$
equipped with the \emph{blowdown map}
$b: X\to \rr^3$ given by $b:(r,\theta)\mapsto r \theta$. We work on the blown-up spacetime $M=\rr_t \times X$, which can also be identified with the blow-up $[\rr^{1,3};\rr\times \{0\}]$.

Then, $X$ and $M$ are manifolds with boundary, and we denote by $X^\inti=\open{0,\infty}_r \times \ss_\theta^2$ and $M^{\inti}=\rr_t\times X^\inti$ their respective interior.  

\subsection{Dirac--Coulomb propagator} \label{sec:propagator}
Throughout this section and the rest of the paper, we use the analogous notation $H^1(M;\cc^4)$ for the first order Sobolev space in four dimensions, and denote the dual by $H^{-1}(M;\cc^4)$.  The corresponding spaces with conormality of order $s\in \rr$, $s=+\infty$ or $s=-\infty$ are denoted by $H_\b^{1,s}(M;\cc^4)$ and $H_\b^{-1,s}(M;\cc^4)$; see \cite[\S3.1]{Baskin2023} for the definition.  In brief, $s$ stands for regularity with respect to taking $r\p_r$, $\p_t$, and $\p_\theta$    derivatives.  Note that conormality is always measured with respect to  $H^1(M;\cc^4)$  or its dual, so in particular $H_\b^{1,0}(M;\cc^4)$ equals $H^{1}(M;\cc^4)$ (rather than $H^{1}_\b(M;\cc^4)$, as the notation could perhaps wrongly suggest). In addition we add a subscript $H_{\b,\c}^{k,s}(M;\cc^4)$, resp.~$H_{\b,\loc}^{k,s}(M;\cc^4)$, to indicate the corresponding compactly supported and  local spaces, and often write in short $H_{\b,\c}^{k,s}$ and $H_{\b,\loc}^{k,s}$   when this creates no risk of confusion (and we abbreviate other spaces similarly).

For each $t\in \rr$, $H(t)$ is essentially self-adjoint on
$\cf_\c(X^\inti;\cc^4)$ in $L^2(X;\cc^4)=L^2(\rr^3;\cc^4)$.  As it
differs from the Dirac--Coulomb Hamiltonian by a bounded smooth
potential, its domain agrees with Kato's
characterization~\cite{Kato} of the domain of the Dirac--Coulomb
Hamiltonian.  In particular, the domain of $H(t)$ is independent of
$t$ and given by $H^{1}(X; \cc^{4})$, the closure of
$\cf_{\c}(X^{\inti} ; \cc)$ with respect to
$\norm{b_{*}\,\cdot}_{H^{1}}$, where $\norm{\cdot}_{H^{1}}$ is the
usual $H^{1}(\rr^{3};\cc^{4})$ Sobolev norm. Additionally,
\cite[Lem.~7]{Baskin2023} says that the norms of $H^1(X;\cc^4)$ and
$\cD$ are  equivalent for $r$ bounded.  

We use the notation $\bra\lambda\ket=(1+\module{\lambda}^2)^{\frac{1}{2}}$ and define for $s\in\rr$
$$
\cD^s(t)= \Dom \bra H(t) \ket^s, \quad \cD^\infty(t)=\textstyle\bigcap_{s\in\rr} \cD^s(t),  \quad \cD^{-\infty}(t)=\textstyle\bigcup_{s\in\rr} \cD^s(t),
$$
where $\cD^s(t)$ is  equipped with the Hilbert space norm $\| \cdot \|_{\cD^s(t)}=\| \bra H(t))\ket^s \cdot \|$, and then $\cD^\infty(t)$  and $\cD^{-\infty}(t)$ are Fréchet spaces
topologized in the usual way.  These spaces are, in general,
time-dependent.  For all $t$, \cite[Lem.~14]{Baskin2023} implies that
$$
\cD^s(t)\cap \cE'(X^\inti;\cc^4) = H^s_\loc (X;\cc^4) \cap \cE'(X^\inti;\cc^4). 
$$
Because $\cD^{s}(t)$ regularity with large $s$ encodes partial power series
expansions of functions at $r=0$, though, the spaces
$\cD^{s}(t)$ at different times do not generally agree near the the pole
of the potential.  We therefore
employ these spaces primarily at times $t \leq t_{-}$ for which $A_{k},
A_{0, \infty} \equiv 0$.  For convenience, we will drop the $t$ from the
$\cD^s(t)$ notation when a time has been fixed; we also omit it (as we have
done above) from
$\cD\equiv \cD^1(t)$, which is time independent.

If $A_{0,\infty}=A_{j}=0$, so that $H(t)$ is time-independent, the spectral
theorem implies that there is a unique solution of the Cauchy problem for the
homogeneous Dirac equation with
initial data in any of the spaces $\cD^s$.  For time-varying operators, however, this method is
unavailable, and we need more flexible tools (although it is useful
that solutions
via the spectral theorem still exist in the range of times when $A_{0,\infty}=0$).
 
\begin{lemma}\label{lem:Cauchy} For each $t_0\in \rr$, consider
    the Cauchy problem
$$
\begin{cases}
Du=f, \\ u|_{t=t_0}=u_0.
\end{cases}
$$

If $f \in C^{0}(\rr ; L^{2})$ and $u_{0} \in L^{2}$, then
 there is a unique solution $u \in C^{0}(\rr ; L^{2})$.  In addition, if $f \in C^{1}(\rr ; L^{2})$ and $u_{0}\in \cD$, then $u
  \in C^{1}(\rr ; L^{2}) \cap C^{0}(\rr ; \cD)$.  In both
  cases, $u$ is given by
  \begin{equation}
    \label{eq:formut}
    u(t) = U(t,t_{0})u_{0} + \int_{t_{0}}^{t}U(t,t_{2})f(t_{2})\,dt_{2},
  \end{equation}
where $U(t_{1},t_{2})\in B(L^{2})$ satisfies for all $t, t_{1},
t_{2}\in \rr$:
\beq
\bea
&U(t,t) = \one, \quad U(t_1, t)U(t,t_2) = U(t_1, t_2), \\
& \p_t U(t,t_2) = i H(t)U(t,t_2), \quad \p_t U(t_1,t) = - U(t_1,t)iH(t),
\eea
\eeq
as a $\cD$-preserving strongly continuously differentiable family in
$B(\cD, L^{2})$.
\end{lemma}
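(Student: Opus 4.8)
The plan is to prove existence and uniqueness of the evolution family $U(t,s)$ first, and then deduce the statement about the Cauchy problem by a Duhamel argument. The key structural fact is that, by Kato's characterization, all the operators $H(t)$ share the \emph{common} domain $\cD = H^1(X;\cc^4)$, and $H(t)-H(t_-)$ is multiplication by the smooth, compactly-supported-in-time, bounded matrix-valued potential $\slashed A(t) := \sum_k \alpha_k A_k(t) - A_{0,\infty}(t)$; in particular $t\mapsto H(t)$ is a norm-continuous (indeed $C^\infty$) family in $B(\cD, L^2)$, and each $iH(t)$ is skew-adjoint. This places us squarely in the setting of Kato's theory of linear evolution equations of hyperbolic type (see Kato, \emph{Linear evolution equations of ``hyperbolic'' type}): a stable family of generators of $C_0$-semigroups (here, unitary groups, so stability is automatic with constants $M=1$, $\beta=0$) with a common, time-independent domain on which $t\mapsto H(t)$ is strongly $C^1$. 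First I would invoke this to obtain the unique propagator $U(t,s)\in B(L^2)$, strongly continuous jointly in $(t,s)$, with $U(t,t)=\one$, the cocycle property $U(t_1,t)U(t,t_2)=U(t_1,t_2)$, unitarity (since $iH(t)^*=-iH(t)$ so that $U(t,s)^{-1}=U(s,t)=U(t,s)^*$), the mapping property $U(t,s)\cD\subset\cD$, and the differentiation formulas $\p_t U(t,s) = iH(t)U(t,s)$, $\p_s U(t,s) = -U(t,s)iH(s)$ valid strongly on $\cD$, with $(t,s)\mapsto U(t,s)$ strongly $C^1$ as a map into $B(\cD,L^2)$.

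Next, with $U$ in hand, I would verify formula \eqref{eq:formut}. Uniqueness: if $u\in C^0(\rr;L^2)$ solves $Du=f$ with $u|_{t=t_0}=u_0$, one differentiates $t\mapsto U(t_0,t)u(t)$; formally $\p_t(U(t_0,t)u(t)) = -U(t_0,t)iH(t)u(t) + U(t_0,t)\p_t u(t) = -U(t_0,t)iH(t)u(t) + U(t_0,t)(iH(t)u(t)+if(t)) = iU(t_0,t)f(t)$, whence $U(t_0,t)u(t) = u_0 + i\int_{t_0}^t U(t_0,t_2)f(t_2)\,dt_2$ and applying $U(t,t_0)$ gives \eqref{eq:formut}. (The convention $D=i\p_t+H(t)$ means $Du=f$ reads $\p_t u = iH(t)u - if$; I would keep careful track of this sign and the factors of $i$ throughout — this is a place to be careful but not deep.) The subtlety is that for merely $L^2$ data this manipulation is not literally justified since $u(t)\notin\cD$ in general; the standard fix is to approximate $u_0$ by elements of $\cD$ and $f$ by $C^1(\rr;L^2)$ functions, use the already-established regular theory, and pass to the limit using $\|U(t,s)\|_{B(L^2)}=1$. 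Existence: define $u$ by \eqref{eq:formut} and check directly, differentiating under the integral sign (Leibniz, using $\p_t U(t,t_2)=iH(t)U(t,t_2)$) that $Du=f$ and $u|_{t=t_0}=u_0$; continuity/regularity statements follow from the corresponding properties of $U$ and the hypotheses on $f$.

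The main obstacle is establishing the evolution family $U(t,s)$ with the stated $\cD$-preserving strong $C^1$ regularity — everything else is a routine Duhamel/approximation argument once $U$ exists. In principle one could quote Kato's hyperbolic evolution theorem as a black box, but since $\cD$ here is the Kato domain of the Dirac--Coulomb operator (not a classical Sobolev space) and since the paper will need the fine mapping properties of $U$ on the higher spaces $\cD^s$ later, I would instead give a self-contained construction: since $H(t)=H(t_-)+\slashed A(t)$ with $\slashed A(t)$ a \emph{bounded} perturbation, one can construct $U(t,s)$ directly from the free unitary group $e^{i(t-s)H(t_-)}$ via the Dyson--Duhamel series $U(t,s)=\sum_{n\ge 0} i^n\int_{s\le t_n\le\cdots\le t_1\le t} \widetilde U(t,t_1)\slashed A(t_1)\widetilde U(t_1,t_2)\cdots$ — wait, more simply, solve the integral equation $U(t,s)=e^{i(t-s)H(t_-)}+i\int_s^t e^{i(t-\sigma)H(t_-)}\slashed A(\sigma)U(\sigma,s)\,d\sigma$ by iteration in $B(L^2)$, converging by the boundedness and compact time-support of $\slashed A$. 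Unitarity then follows from skew-adjointness of $iH(t)$ (equivalently, $\frac{d}{dt}\|U(t,s)\varphi\|^2=0$ for $\varphi\in\cD$); $\cD$-invariance and the strong $C^1$ property in $B(\cD,L^2)$ follow because $e^{itH(t_-)}$ preserves $\cD=\Dom H(t_-)$, because $\slashed A(t)\in B(L^2)$ with $t\mapsto\slashed A(t)$ smooth, and because conjugation/commutation with $\langle H(t_-)\rangle$ propagates through the iteration (using that $[\langle H(t_-)\rangle, \slashed A(t)]$ and its higher analogues are controlled since $\slashed A(t)$ is a smooth multiplication operator); the cocycle identity is read off from uniqueness of solutions to the integral equation.
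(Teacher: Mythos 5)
Your primary route is exactly the paper's: the proof there consists of citing Pazy's Chapter 5 theorems (which package Kato's hyperbolic evolution theory) after checking precisely the hypotheses you identify --- self-adjointness of each $H(t)$ on the common, $t$-independent domain $\cD$, stability (automatic with $M=1$, $\beta=0$ since the generators are skew-adjoint), and continuous differentiability of $t\mapsto H(t)u$ in $L^2$ for $u\in\cD$; the Duhamel formula and the inhomogeneous well-posedness are contained in the cited theorems, so the paper does not spell them out. Your supplementary self-contained construction, solving $U(t,s)=e^{i(t-s)H(t_-)}+i\int_s^t e^{i(t-\sigma)H(t_-)}V(\sigma)U(\sigma,s)\,d\sigma$ with $V(t)=H(t)-H(t_-)$ bounded, is a legitimate alternative the paper does not pursue; it buys an explicit convergent series but costs one extra estimate, namely that propagating $\cD$-invariance and the strong $C^1$ property through the iteration needs $[H(t_-),V(t)]$ (essentially the spatial gradients of the potentials) to be bounded, which is marginally more than the literal hypotheses \eqref{eq:as1}--\eqref{eq:as3} state, whereas the abstract theorem delivers $\cD$-invariance with no commutator input. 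One triviality to fix in your uniqueness computation: from $Du=f$ one gets $\p_t u=iHu-if$ (since $1/i=-i$), so the inhomogeneous term in \eqref{eq:formut} should carry a unimodular constant $-i$ rather than your $+i$; this constant is in fact also dropped in the paper's displayed formula and affects nothing downstream.
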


\proof This
can be shown to follow from general semigroup theory; in particular it
suffices to check the hypotheses of \cite[\S 5,~Thm.~3.1]{pazy}, \cite[\S 5,~Thm.~4.8]{pazy} and
\cite[\S 5,~Thm.~5.3]{pazy}. For each $t\in \rr$, $H(t)$ is
self-adjoint on the $t$-independent domains $\cD$ in the Hilbert
space $L^{2}$. It follows that $\{ H(t)\}_{t\in \rr}$ is a stable family
of infinitesimal generators of a $C^0$-semigroup on $L^{2}$ in the
terminology of \cite[\S 5, Def.~2.1]{pazy} (one can easily check it
directly, or invoke the more general \cite[\S
5,~Thm.~2.4]{pazy}). Furthermore, for all $u\in \cD$,
$H(\cdot)u$ is continuously differentiable in $L^{2}$, so all the
hypotheses are met.  \qeds

We now record a useful energy estimate which applies a priori to
solutions with higher regularity in $t$:
\begin{lemma}
  \label{lemma:energy-est}
Suppose $u \in {C}^k(\RR; \cD)$.   For $[t_{1}, t_{2}]\subset \rr$,  $k \in \nn$, and $i = 1$ or $i=2$, we have the uniform estimate
  \begin{equation*}
    \normm{\p_{t}^{k}u}_{L^{\infty}([t_{1},t_{2}] ; L^{2})} \lesssim
    \sum_{j\leq k}\left(\normm{\p_{t}^{j} Du}_{L^{1}([t_{1},t_{2}] ; L^{2})} +
    \normm{\p_{t}^{j}u(t_{i}, \bullet)}_{L^{2}}\right).
  \end{equation*}
\end{lemma}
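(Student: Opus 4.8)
The plan is to induct on $k$. For the base case $k=0$, I would invoke the Duhamel representation \eqref{eq:formut}: for $t\in[t_1,t_2]$,
$$
u(t)=U(t,t_i)u(t_i)+\int_{t_i}^{t}U(t,s)(Du)(s)\,ds .
$$
Since each $H(s)$ is self-adjoint, differentiating $U(t,s)^{*}U(t,s)$ using $\partial_t U(t,s)=iH(t)U(t,s)$ shows $U(t,s)^{*}U(t,s)=\one$, so $\|U(t,s)\|_{B(L^2)}=1$. Taking $L^2$ norms above and using that $[t_i,t]\subset[t_1,t_2]$ for $t_i\in\{t_1,t_2\}$ gives $\|u(t)\|_{L^2}\le\|u(t_i)\|_{L^2}+\|Du\|_{L^1([t_1,t_2];L^2)}$, and taking the supremum over $t\in[t_1,t_2]$ is exactly the $k=0$ estimate.

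For the inductive step I would differentiate the equation in $t$. Write $H(t)=H_0+V(t)$, where $H_0$ is the $t$-independent Dirac--Coulomb operator and $V(t)=-A_{0,\infty}(t)+\sum_{l=1}^{3}\alpha_l A_l(t)$ is a smooth, compactly-$t$-supported family of bounded multiplication operators on $L^2$; consequently $\partial_t^{j}H(t)=\partial_t^{j}V(t)\in B(L^2)$ for $j\ge 1$ with $\sup_{t\in[t_1,t_2]}\|\partial_t^{j}H(t)\|_{B(L^2)}<\infty$. Applying $\partial_t^{k}$ to $i\partial_t u+H(t)u=Du$ and using the Leibniz rule, $v:=\partial_t^{k}u\in C^{0}(\rr;\cD)$ solves
$$
Dv=g,\qquad g:=\partial_t^{k}(Du)-\sum_{j=1}^{k}\binom{k}{j}\bigl(\partial_t^{j}H(t)\bigr)\bigl(\partial_t^{k-j}u\bigr).
$$

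Applying the $k=0$ estimate to $v$, with Cauchy data $v(t_i)=\partial_t^{k}u(t_i)$, gives
$$
\|\partial_t^{k}u\|_{L^{\infty}([t_1,t_2];L^2)}\le\|g\|_{L^{1}([t_1,t_2];L^2)}+\|\partial_t^{k}u(t_i)\|_{L^2},
$$
and, using boundedness of the $\partial_t^{j}H(t)$,
$$
\|g\|_{L^{1}([t_1,t_2];L^2)}\le\|\partial_t^{k}(Du)\|_{L^{1}([t_1,t_2];L^2)}+C\,|t_2-t_1|\sum_{j=1}^{k}\|\partial_t^{k-j}u\|_{L^{\infty}([t_1,t_2];L^2)} .
$$
Since $k-j<k$ for $j\ge 1$, the induction hypothesis bounds each $\|\partial_t^{k-j}u\|_{L^{\infty}([t_1,t_2];L^2)}$ by $\sum_{l\le k-j}\bigl(\|\partial_t^{l}Du\|_{L^{1}([t_1,t_2];L^2)}+\|\partial_t^{l}u(t_i)\|_{L^2}\bigr)\le\sum_{l\le k}(\cdots)$, and combining the three displays closes the induction.

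The step I expect to be the main obstacle is making the $k$-fold differentiation of the equation rigorous: since a priori $u\in C^{k}(\rr;\cD)$ only, $\partial_t^{k+1}u$ need not exist pointwise and $Du$ need not be $C^{k}$ in $t$. I would handle this by mollifying $u$ in the $t$-variable against an approximate identity $\chi_\delta$, proving the estimate for the resulting $t$-smooth approximants, and letting $\delta\to 0$; the mollification commutes with $\partial_t$, and its commutator with $H(t)$ equals $\int\chi_\delta(t-s)\bigl(V(t)-V(s)\bigr)u(s)\,ds$, which is $O(\delta)$ in $B(L^2)$ because $t\mapsto V(t)$ is Lipschitz. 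The reason this causes no trouble — and the reason the argument stays entirely at the level of $\cD=\cD^1$ and $L^2$ — is that $H(t)$ differs from the $t$-independent $H_0$ only by a bounded, smoothly-$t$-dependent operator, so every commutator that arises is bounded on $L^2$; the genuinely $t$-dependent higher domains $\cD^{s}(t)$, $s>1$, which are the source of difficulty elsewhere in the paper, play no role here.
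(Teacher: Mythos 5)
Your proof is correct and follows essentially the same route as the paper's: induct on $k$, note that $[\p_t^k,H(t)]$ involves only the bounded, smoothly $t$-dependent part of the potential and hence contributes lower-order terms controlled by the inductive hypothesis, and close with the basic $L^2$ energy bound — you phrase that bound via Duhamel and unitarity of $U(t,s)$, whereas the paper differentiates $\|\p_t^ku\|^2$ and integrates the resulting Gronwall inequality, which is the same identity in integrated versus differential form. Your concern about justifying the $k$-fold differentiation is legitimate but is resolved in the paper not inside this lemma but at the point of application (Lemma \ref{lemma:positive}), where the solution is first regularized in $t$ by $\Lambda_\ep$ so that the estimate is only ever invoked for $t$-smooth functions.
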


\begin{proof}
  Note that for $k=0$, the statement follows from
  Lemma~\ref{lem:Cauchy} above.
  
  Since $\p_{t} = -i Du + iHu$, we compute for $k\in \nn$, recalling
  that $H(t)$ is self-adjoint for each $t$:
  \begin{align*}
    \p_{t}\normm{\p_{t}^{k}u}^{2} &= 2 \Im \ang{ \p_{t}^{k}H(t)u,
                                   \p_{t}^{k}u} - 2 \Im
                                   \ang{\p_{t}^{k}Du, \p_{t}^{k}u} \\
    &= 2 \Im \ang{[\p_{t}^{k},H(t)]u, \p_{t}^{k}u} - 2 \Im
      \ang{\p_{t}^{k}Du, \p_{t}^{k}u}.
  \end{align*}
  Now $[\p_{t}^{k}, H(t)] = [\p_{t}^{k},-A_{0,\infty}] +
  \sum_{\ell=1}^{3}\alpha_{\ell}[\p_{t}^{k} , A_{\ell}]$ and so
  $[\p_{t}^{k}, H(t)] = \sum_{j < k}C_{j}\p_{t}^{j}$, where $C_{j}$
  are {uniformly bounded} matrix-valued smooth functions.  We thus get the bound
  \begin{align*}
    \p_{t}\normm{\p_{t}^{k}u}^{2}\lesssim
    \sum_{j=0}^{k-1}\normm{C_{j}\p_{t}^{j}u}\normm{\p_{t}^{k}u} + \normm{\p_{t}^{k}Du}\normm{\p_{t}^{k}u}
  \end{align*}
  and hence
  \begin{equation*}
    \p_{t}\normm{\p_{t}^{k}u} \lesssim
    \sum_{j=0}^{k-1}\normm{\p_{t}^{j}u} + \normm{\p_{t}^{k}Du}.
  \end{equation*}
  Integrating from $t_{i}$ and using induction then finishes the
  proof. \qeds
\end{proof}

We now fix $t_{-}< t_{-}'< t_{-}''$ so that $H(t)$ is $t$-independent for $t
\leq t_{-}''$.  We define the homogeneous solution operator $T$ by
\begin{equation*}
  Tu_{0} = U(t,t_{-})u_{0},
\end{equation*}
where $U(t,s)$ is as in Lemma~\ref{lem:Cauchy}.  In other words,
$Tu_{0}$ is the solution of the Cauchy problem posed at $t_{-}$ with
initial data $u_{0}$.  By Lemma~\ref{lem:Cauchy} we know $T: L^{2} \to
C^{0}_{\loc}(\rr ; L^{2})$ and $T: \cD \to C^{0}_{\loc}(\rr ;
\cD)\cap C^{1}_{\loc}(\rr ; L^{2})$.

The rest of this section is devoted to establishing that
$T : \cD ^{s}(t_-)\to H^{1,s-1}_{\b,\loc}$ for all $s \in \rr$.  We start
with the following elliptic lemma, which states that
$\b$-regularity of a solution can be verified by controlling its time
derivatives
(cf.~\cite[Corollary 3.3]{Baskin2023}). 
\begin{lemma}
  \label{lemma:elliptic-estimate}
  Suppose $k \in \zz$.  If $u \in H^{k}_{\loc}(\rr ; \cD)$ and $Du \in
  H^{0,k}_{\loc}$, then $u \in H^{1,k}_{\loc}$.
\end{lemma}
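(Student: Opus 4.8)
The plan is to prove the statement by induction on $k\ge 0$; the case $k<0$ then follows by duality. For $k=0$ the argument is direct: $u\in H^{0}_\loc(\rr;\cD)=L^2_\loc(\rr;H^1(X;\cc^4))$ already yields $\nabla_{\bar x}u\in L^2_\loc$ and $H(t)u\in L^2_\loc$ (using $\cD=\Dom H(t)=H^1(X;\cc^4)$, which holds by Kato's theorem since $|\kappa|<\frac{1}{2}$), and then $\p_t u=i\bigl(H(t)u-Du\bigr)\in L^2_\loc$ because $Du\in H^{0,0}_\loc=L^2_\loc$; hence $u\in H^1_\loc(M)=H^{1,0}_\loc$.

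For the inductive step, I would first apply the statement at level $k-1$ to $u$ itself, getting $u\in H^{1,k-1}_\loc$, and then to $v:=\p_t u$. The hypotheses for $v$ hold since $\p_t^j v=\p_t^{j+1}u\in L^2_\loc(\rr;\cD)$ for $j\le k-1$, while $Dv=\p_t Du-(\p_t H(t))u$ lies in $H^{0,k-1}_\loc$: indeed $\p_t Du\in H^{0,k-1}_\loc$, and $\p_t H(t)$ is a smooth, bounded, $t$-compactly supported matrix potential — the essential point being that the singular term $\kappa/|\bar x|$ of $H(t)$ is $t$-independent, so commuting the equation with $\p_t$ creates no new singularity. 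This gives $\p_t u\in H^{1,k-1}_\loc$, and it then remains only to gain one more $\b$-derivative in the \emph{spatial} directions, i.e.\ to show $(r\p_r)^a\p_\theta^b u\in H^1_\loc(M)$ for $a+b=k$; an order-$k$ $\b$-derivative of $u$ involving a $\p_t$ already reduces to $\p_t u\in H^{1,k-1}_\loc$.

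For this I would invoke the $\b$-ellipticity of the Dirac operator on the blowup. Writing $\p_{x_\ell}=\hat x_\ell\p_r+r^{-1}\Xi_\ell$ with $\Xi_\ell$ tangent to the spheres, and $\omega:=\sum_\ell\alpha_\ell\hat x_\ell$ (so $\omega^2=\Id$), the operator $\cP:=ir\omega\,H(t)$ is a first-order $\b$-differential operator on $X$ equal to $r\p_r+\cA_{\ss^2}-i\kappa\omega$ plus multiplication by a smooth matrix-valued function that vanishes at $r=0$, where $\cA_{\ss^2}$ is the (elliptic) angular Dirac operator on $\ss^2$. Its $\b$-principal symbol is the Dirac symbol, invertible off the zero section, so $\cP$ is $\b$-elliptic on $X$; and since $\cP u=ir\omega\,H(t)u=ir\omega\,(Du-i\p_t u)$ expresses $\cP u$ in terms of $Du$, $\p_t u$ and $u$, the $\b$-elliptic estimate for $\cP$ on $X$ (with $t$ a parameter) then propagates the $r\p_r$- and $\p_\theta$-conormal regularity of $u$ from what is already known.

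The main obstacle is that this step controls conormal derivatives only in the $L^2$-based $\b$-Sobolev scale, whereas $H^{1,k}_\loc$ demands the genuine $H^1$-based scale, which near $r=0$ sees $\p_r$ and $r^{-1}\p_\theta$ rather than $r\p_r$ and $\p_\theta$. I would bridge this gap again through Kato's identification $\cD=H^1(X;\cc^4)$: for a spatial $\b$-operator $Q$ of order $k$, $Qu(t,\cdot)\in H^1(X;\cc^4)$ for a.e.\ $t$, with $L^2$-control in $t$, as soon as $Qu$ and $H(t)Qu=QDu+[H(t),Q]u-i\p_t Qu$ lie in $L^2_\loc$. The only delicate terms are the $r^{-1}$-singular contributions to $[H(t),Q]u$, arising from $[\kappa/|\bar x|,Q]$ and from the polar form of $\sum_\ell\alpha_\ell\p_{x_\ell}$; I expect to absorb these via the Hardy inequality $\|r^{-1}f\|_{L^2(\rr^3)}\lesssim\|\nabla f\|_{L^2(\rr^3)}$ applied to the order-$(k-1)$ $\b$-derivatives of $u$ — which are in $H^1_\loc$ by the inductive hypothesis — after re-expressing the genuinely top-order singular pieces through the honest derivatives $\p_r$, $r^{-1}\p_\theta$ (components of $\nabla_{\bar x}$) and through the equation itself, which writes $\sum_\ell\alpha_\ell\p_{x_\ell}u$ in terms of $Du$, $\p_t u$ and $r^{-1}u$ of known regularity. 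This interplay of $\b$-Sobolev analysis with the singular domain is exactly the analysis carried out in \cite{Baskin2023} (see in particular Corollary~3.3 there) in the $t$-independent case; the only modifications needed here are the commutators with $\p_t$ and the weaker a priori hypothesis, both handled above.
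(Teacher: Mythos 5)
Your induction for $k\geq 0$ is a workable alternative to the paper's argument: commuting with $\p_t$ (exploiting that only the bounded smooth part of $H(t)$ is $t$-dependent) reduces matters to the purely spatial $\b$-derivatives, and these can indeed be controlled through Kato's identification $\cD=H^1(X;\cc^4)$ together with Hardy's inequality and the commutator structure of $[H(t),Q]$ for $Q$ a product of $r\p_r$ and rotation fields. The paper proceeds differently: it does not induct, but instead reuses wholesale the quadratic-form elliptic estimate of \cite[Lem.~28, Cor.~33]{Baskin2023} based on $\Re\langle \widetilde D^2 v,v\rangle$, checking only that the time-dependent perturbation and the conjugation error preserve the structure of that estimate. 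Your route is more hands-on at the spatial step; the paper's buys uniformity with the rest of the machinery it imports. Either is acceptable for $k\geq0$, though you should commit to one mechanism for the spatial gain (the domain-plus-Hardy argument is cleaner than invoking a $\b$-elliptic parametrix for $r\omega H(t)$, which by itself only yields regularity on the $L^2$-based $\b$-scale and would still leave the conversion to the $H^1$-based scale to be done).

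The genuine gap is the case $k<0$, which you dispatch with the single clause ``the case $k<0$ then follows by duality.'' It does not. The hypothesis ($u\in H^{k}_\loc(\rr;\cD)$, $Du\in H^{0,k}_\loc$) and the conclusion ($u\in H^{1,k}_\loc$) do not dualize into an instance of the positive-order statement: to estimate $\langle u,\phi\rangle$ for $\phi$ in the predual of $H^{1,k}_\loc$ you would need to solve the adjoint equation with a gain of conormal regularity, which is exactly the assertion being proved. Your upward induction cannot reach negative $k$ either, since the base case sits at $k=0$. This is not a removable corner case: the negative orders are precisely what Proposition \ref{lemma:negative} feeds into this lemma, and hence what the regularity of $S$ and of the two-point functions rests on. The paper's substantive device here, which your proposal lacks, is conjugation by an invertible, $SO(3)$-invariant, elliptic $\Upsilon\in\Psi^{-m}_\b(M)$: setting $v=\Upsilon u$ converts a solution of low a priori $\b$-regularity into an $H^1_\loc$ solution of the perturbed equation $\widetilde D v=(D-[D,\Upsilon]\Upsilon^{-1})v=\Upsilon Du$, and one then verifies that $\widetilde D$ still satisfies the positive-order elliptic estimate (the error $\widetilde D-D$ being of the admissible form $\Psi^0_\b+\Psi^{-1}_\b r^{-1}+\Psi^{-1}_\b\p_r+\Psi^{-1}_\b r^{-1}\beta K$). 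You need this, or some equally concrete order-shifting argument compatible with the $t$-dependence of $H(t)$, to close the negative-order half of the lemma.
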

 {(We remark for context that
  that for $k \in \mathbb{N}$, $H_{\b,\loc}^{1,k} \subset
  H_{\loc}^{k}(\rr; \cD)$, but the reverse inclusion does not hold unless
  we restrict our attention to solutions of the Dirac equation, as the
  functions in the space on the left enjoy higher spatial regularity.)}

\begin{proof}
  We begin by noting that $u \in H^{1,-\infty}_{\b,\loc}$.  Indeed,
  for $k > 0$, this follows immediately from the inclusion
  $H^{k}_{\loc}(\rr ; \cD) \subset H^{1}_{\loc}$, while for $k=0$, we note
  that $\p_{t}u = -iDu + iH(t)u \in L^{2}$ and so $u \in H^{1}$.  For
  $k < 0$, we write $u = \sum_{j\leq k+1}\p_{t}^{j}v_{j}$ for
  $v_{j}\in H^{1}_{\loc}(\rr ; H^{1})\subset H^{1}_{\loc}$, so that $u \in H^{1,-k-1}_{\b,\loc}$.

  As the two notions of Sobolev regularity agree away from the
  singularity, it suffices to work near $r=0$.  In the
  time-independent case when $k\geq 0$, the statement is an immedate
  consequence of Corollary~33 of~\cite{Baskin2023}.\footnote{There is
    a mistake in the statement of Corollary 33 in~\cite{Baskin2023}.
    The term $\widetilde{G}\eth u$ on the right side of the inequality
    should be estimated in $L^{2}$ rather than in $H^{1}$.}  The main
  changes needed in this setting are to extend the results to
  time-dependent coefficients and to relax the assumption of
  $u \in H^{1}$ to $u \in H^{1,-\infty}_{\b,\loc}$.  We employ the
  $\b$-pseudodifferential calculus, as described
  in~\cite[\S3]{Baskin2023} and the references therein.

  Given $u \in H^{1,-m}_{\b,\loc}$, we set $v = \Upsilon u$, where
  $\Upsilon \in \Psi^{-m}_{\b}(M)$ is an invertible, $SO(3)$-invariant,
  elliptic $\b$-pseudodifferential operator of order $-m$.  As
  $\Upsilon$ is invertible and elliptic, measuring $v$ in $H^{1,k+m}_{\b,\loc}$ is
  equivalent to measuring $u$ in $H^{1,k}_{\b,\loc}$ near $r=0$.
  The equation satisfied by $v$ is then
  \begin{equation*}
    \widetilde{D}v = \left( D - [D,\Upsilon]\Upsilon^{-1}\right)v =
    \Upsilon Du.
  \end{equation*}
  Using now an abbreviated notation for $\Psi_{\b}^{k}$, a computation
  in the differential $\b$-calculus shows that
  \begin{equation*}
    \widetilde{D}^{2} -D^{2} \in \Psi^{1}_{\b} +
    \frac{1}{r}\Psi^{0}_{\b} + \frac{1}{r^{2}}\Psi^{-1}_{\b} +
      \Psi^{-1}_{\b} \p_{r}^{2} + \Psi^{-1}_{\b}\frac{1}{r}\p_{r} + \Psi^{-1}_{\b}\frac{1}{r^{2}}\Delta_{\theta}.
  \end{equation*}

  The elliptic estimate~\cite[Lemma~28, Corollary~33]{Baskin2023} is
  based on the real part of the quadratic form $\ang{D^{2}u, u}$ and
  in fact holds for a wide class of Klein--Gordon-type operators.
  Inspection of the proof reveals that all of the
  components~\cite[\S5.2.2]{Baskin2023} of the elliptic estimate hold
  for the time-dependent operator $\widetilde{D}^{2}$; the key is
  Lemma~26, which still holds exactly as stated.  The remaining
  estimates in that section treat the remainder terms using only their
  order, which is unchanged here.  This yields an estimate for $v$ in
  $H^{1,k+m}_{\b,\loc}$ and thus an estimate for $u$ in
  $H^{1,k}_{\b,\loc}$, as desired.
  \qeds
\end{proof}

\begin{lemma}\label{lemma:positive}
  For $s \geq 0$, $T: \cD^{s}(t_-) \to H^{1,s-1}_{\b, \loc}$.
\end{lemma}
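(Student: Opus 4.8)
For integer orders $s=\ell\geq 1$ the plan is to deduce the claim from the elliptic estimate of Lemma~\ref{lemma:elliptic-estimate}, by first establishing the stronger regularity $Tu_0\in C^{\ell-1}_{\loc}(\rr;\cD)$ for $u_0\in\cD^\ell(t_-)$; fractional orders and the endpoint $s=0$ are then obtained by interpolation. The case $s=1$ is immediate, since $\cD^1(t_-)=\cD$ and $H^{1,0}_{\b,\loc}=H^1_{\loc}$, while Lemma~\ref{lem:Cauchy} already yields $T:\cD\to C^0_{\loc}(\rr;\cD)\subset H^1_{\loc}$.

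The core of the argument is a bootstrap in the time-regularity scale for $\ell\geq 2$. Fix $u_0\in\cD^\ell(t_-)$, set $u=Tu_0$ and $w_j=\p_t^j u$. In the static region $t\leq t_-''$, uniqueness in Lemma~\ref{lem:Cauchy} identifies $u$ with the spectral-theory solution $e^{i(t-t_-)H}u_0$, so $w_j(t)=(iH)^j e^{i(t-t_-)H}u_0\in\cD^{\ell-j}$; in particular $w_j(t_-'')\in\cD^{\ell-j}(t_-)\subset\cD$ whenever $j\leq\ell-1$. Differentiating $Du=0$ and using $D=i\p_t+H(t)$ gives $Dw_j=-[\p_t^j,H(t)]u=-\sum_{i<j}C_{i}w_i$, where, exactly as in the proof of Lemma~\ref{lemma:energy-est}, the $C_i$ are uniformly bounded smooth matrix-valued functions supported in $\{t_-<t<t_+\}$. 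One now inducts on $j=0,\dots,\ell-1$: granting $w_i\in C^0_{\loc}(\rr;\cD)\cap C^1_{\loc}(\rr;L^2)$ for $i<j$, the right-hand side $-\sum_{i<j}C_i w_i$ lies in $C^1_{\loc}(\rr;L^2)$; since $\p_t^j u$ exists in $C^0_{\loc}(\rr;L^2)$ (because $w_{j-1}\in C^1_{\loc}(\rr;L^2)$) and solves this Cauchy problem with data $w_j(t_-'')\in\cD$, Lemma~\ref{lem:Cauchy} and its uniqueness clause give $w_j=\p_t^j u\in C^0_{\loc}(\rr;\cD)\cap C^1_{\loc}(\rr;L^2)$. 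Hence $u\in C^{\ell-1}_{\loc}(\rr;\cD)\subset H^{\ell-1}_{\loc}(\rr;\cD)$; as $Du=0\in H^{0,\ell-1}_{\loc}$, Lemma~\ref{lemma:elliptic-estimate} with $k=\ell-1$ yields $u\in H^{1,\ell-1}_{\b,\loc}$. The quantitative form of Lemmas~\ref{lemma:energy-est} and~\ref{lemma:elliptic-estimate}, together with $\|\p_t^j u(t_-'')\|_{\cD}\lesssim\|u_0\|_{\cD^\ell(t_-)}$ from the spectral calculus, makes $T$ bounded $\cD^\ell(t_-)\to H^{1,\ell-1}_{\b,\loc}$.

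For non-integer $s>1$, write $s=(1-\vartheta)\ell+\vartheta(\ell+1)$: complex interpolation identifies $[\cD^\ell(t_-),\cD^{\ell+1}(t_-)]_\vartheta=\cD^s(t_-)$ (spectral theorem) and $[H^{1,\ell-1}_{\b,\loc},H^{1,\ell}_{\b,\loc}]_\vartheta=H^{1,s-1}_{\b,\loc}$ (using cutoffs for the local spaces), giving $T:\cD^s(t_-)\to H^{1,s-1}_{\b,\loc}$. The endpoint $s=0$, i.e.\ $T:L^2\to H^{1,-1}_{\b,\loc}$, follows from $Tu_0\in C^0_{\loc}(\rr;L^2)$ and $D(Tu_0)=0$ — in the static region one reduces again to Lemma~\ref{lemma:elliptic-estimate}, applied to $\langle H\rangle^{-1}Tu_0\in C^0_{\loc}(\rr;\cD)$, and propagates by a smeared-in-time energy estimate — and interpolating between $s=0$ and $s=1$ fills in $0<s<1$.

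The main obstacle is the bootstrap of the second paragraph, which is precisely the difficulty flagged in the introduction: the spectral theorem cannot be run past $t_-''$ because $\cD^s(t)$ with $s>1$ is genuinely $t$-dependent near the pole of $A_0$. The way around it is that $\cD=\cD^1$ \emph{is} $t$-independent, so regularity can be transported through the dynamical region in the scale $C^k_{\loc}(\rr;\cD)$ — trading spatial for temporal derivatives via the commutators $[\p_t^k,H(t)]$ — and then converted back near $r=0$ by the elliptic estimate. The subtle point needing care is verifying that the functions $w_j$ produced by the induction genuinely are the successive $t$-derivatives of $u$ carrying the Cauchy data at $t_-''$ prescribed by the static-region expansion, which is forced by the uniqueness statement of Lemma~\ref{lem:Cauchy}.
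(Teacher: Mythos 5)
Your proof is correct in outline and reaches the stated conclusion, but the key regularity-upgrade step is run quite differently from the paper. The paper decomposes $u=\chi v+w$, where $v=e^{i(t-t_-)H(t_-)}u_0$ is the static solution (already in $H^{1,k-1}_{\b,\loc}$ by Lemma~\ref{lemma:elliptic-estimate}) and $w$ solves the inhomogeneous problem $Dw=[i\p_t,\chi]v$ with zero data; it then regularizes $w$ in time by a family $\Lambda_\ep\in\Psi^{-\infty}(\rr_t)$ bounded in $\Psi^0$, applies the energy estimate of Lemma~\ref{lemma:energy-est} to the genuinely smooth $\Lambda_\ep w$, and bootstraps $w\in H^k_{\loc}L^2$ by taking weak limits as $\ep\to 0$. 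You instead differentiate $Du=0$ in $t$, feed the commuted equations $Dw_j=-\sum_{i<j}C_iw_i$ back into Lemma~\ref{lem:Cauchy} with Cauchy data at $t_-''$, and identify the regular solution so produced with $\p_t^ju$ via uniqueness. Both routes then land on Lemma~\ref{lemma:elliptic-estimate} and interpolation in the same way. What the paper's regularization buys is that the energy estimate is only ever applied to functions that are a priori in $C^k(\rr;\cD)$, as its hypotheses require; what your route buys is that you avoid the $\Lambda_\ep$ machinery entirely, at the cost of leaning on the uniqueness clause of Lemma~\ref{lem:Cauchy}.

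That reliance is the one point you should make fully explicit: to conclude $\p_t^ju=\tilde w_j$, you need uniqueness in the class of $C^0(\rr;L^2)$ \emph{distributional} solutions, whereas a priori $\p_t^ju$ is only known to solve $Dw_j=-\sum_{i<j}C_iw_i$ in the sense of distributions, not in the mild/integral sense \eqref{eq:formut} for which semigroup uniqueness is usually stated (verifying the classical equation for $\p_t^ju$ would require the derivative $\p_t^{j+1}u$ you have not yet constructed). This is a standard fact for such evolution equations (mollify in time, or argue by duality as the paper does in Lemma~\ref{lemma:negativeorders}), so it is a fillable gap rather than an error, but it is precisely the circularity that the paper's $\Lambda_\ep$-regularization is designed to sidestep. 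Separately, your treatment of the endpoint $s=0$ is vaguer than necessary: since $u\in C^0(\rr;L^2)$ and $H(t)u=-i\p_tu\in H^{-1}_{\loc}(\rr;L^2)$, one has $u\in H^{-1}_{\loc}(\rr;\cD)$ directly, and Lemma~\ref{lemma:elliptic-estimate} with $k=-1$ gives $u\in H^{1,-1}_{\b,\loc}$ without any smeared energy estimate.
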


\begin{proof}
  We first consider $s = k\in \nn$.  If $u_{0}\in \cD^{k}(t_-)$, then we
  set $v = e^{i(t-t_-)H(t_{-})}u_{0}$ for $t \leq t_{-}$.  Because $H$
  independent of $t$ in this region, we know $v \in C^{k}(L^{2})\cap
  C^{0}(\cD^{k}(t_-))$ there.  Moreover,
    Lemma~\ref{lemma:elliptic-estimate} immediately yields $v \in H_{\b,\loc}^{1,k-1}$.

Let $\chi (t) \in C^{\infty}(\rr)$ satisfy $\chi(t) \equiv 1$ for
  $t\leq t_{-}$ and $\chi(t) \equiv 0$ for $t > t_{-}'$.  We seek a
  solution of the form $u = \chi v + w$ to $Du=0$, i.e., a solution of
  \begin{equation}\label{7.16:1}
    Dw =  [i\p_{t}, \chi]v\in H_{\b,\loc}^{1,k-1}
  \end{equation}
  with zero initial data.  We can of course find an $L^2_\loc L^2$ (indeed,
  ${C}^0L^2$) solution $w$
  by the above semigroups result, Lemma~\ref{lem:Cauchy}; this is all
  we ask if $k=0$ (as the solution also lies in $H^{-1}_{\loc}
    (\rr; \cD)$), so w.l.o.g.~we can now take $k \geq 1$.
  We wish to use the energy estimate, Lemma~\ref{lemma:energy-est}, to
  obtain higher regularity on $w$, and to this
  end we regularize $w$ as follows.  Let $\Lambda_\ep$ be a family of
  translation invariant, properly-supported pseudodifferential
  operators of order $-\infty$ on $\RR_t$, bounded in $\Psi^0(\RR)$,
  and  with $\Lambda_\ep \to \one$ strongly on $H^s$ for all $s$ as $\ep\to 0$.  Then $\Lambda_\ep w
  \in {C}^\infty H^1$, and
\begin{equation}\label{smoothed}
D \Lambda_\ep w = [D,\Lambda_\ep] w+\Lambda_\ep [i\p_{t}, \chi]v.
\end{equation}
Note that since only the zero'th order parts of $D$ are time-dependent, the
operators $[D,\Lambda_\ep]$ are bounded in $\Psi^{-1}(\RR)$ (and
still smoothing for $\ep>0$).

Now we apply the energy estimate to \eqref{smoothed}, which we may do
since both $\Lambda_\ep w$ and the right-hand side are smooth in $t$.  We know that $w$ is in
$L_\loc^2L^2$.  Inductively assuming $w \in H_\loc^{j-1} L^2$ (with $j\leq k$),
Lemma~\ref{lemma:energy-est} yields
$$
\normm{\pa^j_t \Lambda_\ep w}_{L^2_\loc L^2}\lesssim \norm{[D,\Lambda_\ep]
  w}_{H_\loc^j L^2} + \norm{\Lambda_\ep [i\p_{t}, \chi]v}_{H_\loc^j L^2}.
$$
As $\ep \to 0$, both terms on the RHS are bounded by the inductive
assumption and the assumption on $v$, so $\norm{\Lambda_\ep
  w}_{H_\loc^j L^2}$ is uniformly bounded as
$\ep \to 0$, hence has a weak sequential limit in $H_\loc^j L^2$.
Since $\Lambda_\ep w \to w$ in $L_\loc^2$, we have inductively obtained $w
\in H_\loc^k L^2$.  As
$H(t)w = -i \p_{t}w+ [i\pa_t, \chi] v \in H^{k-1}_{\loc}L^{2}$, we also obtain
$$
w \in H^{k-1}_{\loc}(\rr; \cD).
$$
Now since \eqref{7.16:1} gave
$$
D w \in H_{\b,\loc}^{1,k-1} \subset H_{\b,\loc}^{0,k-1},
$$
Lemma~\ref{lemma:elliptic-estimate} yields $u \in H_{\b,\loc}^{1,k-1}$ as
desired.  Interpolation then finishes the proof
  for general $s \geq 0$.
\end{proof}

The proof of Lemma~\ref{lemma:positive} also establishes the following
consequence for the inhomogeneous problem, which we state in
time-reversed form in anticipation of our later application of it:
\begin{lemma}
  \label{lem:inhomog-unique}
  If $k \geq 0$ and $f \in H^{k}_{\loc}L^{2}$ is supported in $t \leq T$, then there
  exists $v \in H^{k}_{\loc}L^{2}$ with $Dv = f$ and $v = 0$ for $t \geq T$.
\end{lemma}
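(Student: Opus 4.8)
The plan is to run the argument in the proof of Lemma~\ref{lemma:positive} with time reversed, taking vanishing Cauchy data in the future and the forcing term to be the given $f$ in place of $[i\p_t,\chi]v$. First I would produce a rough solution: since $f\in H^{k}_{\loc}L^{2}\subset L^{1}_{\loc}(\rr;L^{2})$, the Duhamel formula \eqref{eq:formut} of Lemma~\ref{lem:Cauchy}, posed at $t_{0}=T$ with $u_{0}=0$, gives $v\in C^{0}_{\loc}(\rr;L^{2})$ with $Dv=f$, namely $v(t)=-\int_{t}^{T}U(t,t_{2})f(t_{2})\,dt_{2}$. Since $f$ is supported in $\{t\le T\}$, this formula also shows $v\equiv 0$ for $t\ge T$. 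As $C^{0}L^{2}\subset H^{0}_{\loc}L^{2}$, this already establishes the case $k=0$, so from now on assume $k\ge 1$.

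To upgrade the time-regularity I would repeat the bootstrap around \eqref{7.16:1}--\eqref{smoothed}. Regularize with operators $\Lambda_{\ep}$ as in that proof, but chosen with kernel supported in $\{0\le t-t'\le\ep\}$, so that $v_{\ep}:=\Lambda_{\ep}v$ is smooth in $t$, remains supported in $\{t\le T\}$, and satisfies $Dv_{\ep}=[D,\Lambda_{\ep}]v+\Lambda_{\ep}f$, where $[D,\Lambda_{\ep}]$ is bounded in $\Psi^{-1}(\rr_{t})$ (only the zeroth-order part of $D$ being $t$-dependent) and smoothing for $\ep>0$. Granting that $v_{\ep}\in C^{\infty}(\rr;\cD)$ (justified in the last paragraph), Lemma~\ref{lemma:energy-est} applied on $[t_{1},t_{2}]$ with $t_{2}>T$, where $v_{\ep}$ and all its $t$-derivatives vanish, yields for each $j\le k$
\[
\normm{\p_{t}^{j}v_{\ep}}_{L^{\infty}([t_{1},t_{2}];L^{2})}\lesssim\sum_{i\le j}\normm{\p_{t}^{i}\big([D,\Lambda_{\ep}]v\big)}_{L^{1}([t_{1},t_{2}];L^{2})}+\sum_{i\le j}\normm{\p_{t}^{i}\Lambda_{\ep}f}_{L^{1}([t_{1},t_{2}];L^{2})}.
\]
Now induct on $j=1,\dots,k$ with hypothesis $v\in H^{j-1}_{\loc}L^{2}$ (the base case $j=1$ being the $k=0$ result). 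For $i\le j$ the operators $\p_{t}^{i}[D,\Lambda_{\ep}]$ and $\p_{t}^{i}\Lambda_{\ep}$ are bounded, uniformly in $\ep$, in $\Psi^{i-1}(\rr_{t})$ and $\Psi^{i}(\rr_{t})$ respectively; applying them to $v\in H^{j-1}_{\loc}L^{2}$ and $f\in H^{k}_{\loc}L^{2}$ produces terms bounded, uniformly in $\ep$, in $L^{2}_{\loc}L^{2}\subset L^{1}_{\loc}L^{2}$. Hence $\p_{t}^{j}v_{\ep}$ is bounded in $L^{2}_{\loc}L^{2}$; since $v_{\ep}\to v$ in $L^{2}_{\loc}L^{2}$, every weak-$L^{2}_{\loc}$ subsequential limit of $\p_{t}^{j}v_{\ep}$ equals $\p_{t}^{j}v$, so $v\in H^{j}_{\loc}L^{2}$. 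Iterating to $j=k$ finishes the proof.

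The only step that is not mere bookkeeping — and the only place the time-dependence of $H$ is really used — is the promotion of $v_{\ep}$ from $C^{\infty}(\rr;L^{2})$ (immediate from the $t$-smoothing of $\Lambda_{\ep}$) to $C^{\infty}(\rr;\cD)$, needed to invoke Lemma~\ref{lemma:energy-est}. Here $Dv_{\ep}\in C^{\infty}(\rr;L^{2})$, so $H(t)v_{\ep}(t)=-i\p_{t}v_{\ep}(t)+i(Dv_{\ep})(t)\in L^{2}$ for every $t$, whence $v_{\ep}(t)\in\Dom H(t)=\cD$ and $t\mapsto H(t)v_{\ep}(t)$ is smooth into $L^{2}$. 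Since the $t$-dependent part of $H(t)$ is a bounded smooth potential, one has $\normm{(H(t)-H(t'))w}\lesssim|t-t'|\,\normm{w}$, and combining this with the $L^{2}$-smoothness of $v_{\ep}$ and of $H(\cdot)v_{\ep}(\cdot)$ shows that $v_{\ep}$ and all its $t$-derivatives are continuous into $\cD$ equipped with its graph norm. I expect this to be the only place one has to pause; everything else is verbatim as in the proof of Lemma~\ref{lemma:positive} (and uniqueness of such a $v$, not needed here, follows from Lemma~\ref{lem:Cauchy}).
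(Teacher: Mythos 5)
Your proposal is correct and follows the paper's own route: the paper obtains this lemma precisely as a time-reversed byproduct of the proof of Lemma~\ref{lemma:positive}, exactly as you do (Duhamel posed at $t_0=T$ with zero data, then the $\Lambda_\ep$-regularized energy-estimate bootstrap, with the domain upgrade to $C^\infty(\rr;\cD)$ filled in via the characterization of $\Dom H(t)$). One trivial slip: a kernel supported in $\{0\le t-t'\le\ep\}$ averages over \emph{past} times, so $\Lambda_\ep v$ is only supported in $\{t\le T+\ep\}$; either take the support in $\{-\ep\le t-t'\le 0\}$ instead, or simply apply the energy estimate with $t_2>T+\ep$ — the conclusion is unaffected.
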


Proving the analogue of Lemma~\ref{lemma:positive} for  negative $s$ requires a bit more
functional analysis.  For a fixed $k \in \nn$, we let $T \gg 0$ and set
\begin{equation*}
  \Yspace = \{ \phi \in \overline{H}^{k}(\clopens{t_{-}', \infty}; L^{2}(X))
  \mid \supp \phi \subset [t_{-}, T]\},
\end{equation*}
so that $\Yspace$ consists of extendible (in $t$) distributions at
$t_{-}'$ and supported distributions at $T$.  We also let
\begin{equation*} 
  \Zspace = \overline{H}^{k}([t_{-}', T]; L^{2}(X)),
\end{equation*}
and equip both $\Yspace$ and $\Zspace$ with the usual
norms. 

The following lemma follows immediately from the energy
estimate (Lemma \ref{lemma:energy-est}) with $t_{2} = T$:
\begin{lemma}
  \label{lem:energy-est-renamed}
  For all test functions $\phi \in C^{\infty}_{\rm c}(\rr \times X^{\inti})\cap \Yspace$,
  \begin{equation*}
    \norm{\phi}_{\Yspace} \lesssim \norm{D^{*}\phi}_{\Zspace}.
  \end{equation*}
\end{lemma}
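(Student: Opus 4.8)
\textit{Proof proposal.} The plan is to read this off directly from the energy estimate of Lemma~\ref{lemma:energy-est}, run backwards in time from $t=T$. The point is that a $\phi\in C^{\infty}_{\rm c}(\rr\times X^{\inti})\cap\Yspace$ is a smooth function whose support lies in $\{t_{-}\leq t\leq T\}\times X^{\inti}$; consequently $\p_{t}^{j}\phi$ vanishes identically at $t=T$ for every $j$, so \emph{all} of the endpoint terms at $t_{2}=T$ in the energy estimate are zero. Moreover each slice $\phi(t,\bullet)$ is a compactly supported smooth function of $X^{\inti}$ — hence lies in $\cD=H^{1}(X;\cc^{4})$ — and depends smoothly on $t$, so $\phi\in C^{k}(\rr;\cD)$ and Lemma~\ref{lemma:energy-est} genuinely applies, with $D$ replaced throughout by $D^{*}$: since $i\p_{t}$ and each $H(t)$ are self-adjoint, $D^{*}$ has the same structure as $D$ — a uniformly bounded smooth zeroth-order perturbation of $\pm i\p_{t}+H(t)$ — and the short computation proving Lemma~\ref{lemma:energy-est} goes through verbatim, the sign of the $\p_{t}$-term merely dictating that $T$ is the convenient endpoint from which to integrate.

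Concretely, I would apply Lemma~\ref{lemma:energy-est} on $[t_{1},t_{2}]=[t_{-}',T]$ with $i=2$. For each $\ell$ with $0\leq\ell\leq k$ this gives
\begin{equation*}
  \normm{\p_{t}^{\ell}\phi}_{L^{\infty}([t_{-}',T];L^{2})}\ \lesssim\
  \sum_{j\leq\ell}\normm{\p_{t}^{j}D^{*}\phi}_{L^{1}([t_{-}',T];L^{2})}\ \leq\
  \sum_{j\leq k}\normm{\p_{t}^{j}D^{*}\phi}_{L^{1}([t_{-}',T];L^{2})},
\end{equation*}
the boundary contributions at $T$ having dropped out. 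On the bounded interval $[t_{-}',T]$ one has $L^{2}_{t}\hookrightarrow L^{1}_{t}$ and $L^{\infty}_{t}\hookrightarrow L^{2}_{t}$ with constants depending only on $T-t_{-}'$, so the right-hand side is $\lesssim\sum_{j\leq k}\normm{\p_{t}^{j}D^{*}\phi}_{L^{2}([t_{-}',T];L^{2})}=\norm{D^{*}\phi}_{\Zspace}$, while the left-hand side controls $\normm{\p_{t}^{\ell}\phi}_{L^{2}([t_{-}',T];L^{2})}$. Summing over $0\leq\ell\leq k$ and recalling that, since $\phi$ (restricted to $[t_{-}',\infty)$) is supported in $[t_{-}',T]$, the $\Yspace$-norm $\norm{\phi}_{\overline{H}^{k}([t_{-}',\infty);L^{2})}$ is equivalent to $\sum_{\ell\leq k}\normm{\p_{t}^{\ell}\phi}_{L^{2}([t_{-}',T];L^{2})}$, I arrive at $\norm{\phi}_{\Yspace}\lesssim\norm{D^{*}\phi}_{\Zspace}$.

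I do not expect a genuine obstacle here: all the analytic content is already packaged in Lemma~\ref{lemma:energy-est}, and the only things needing a moment's thought are (i) that membership in $C^{\infty}_{\rm c}(\rr\times X^{\inti})\cap\Yspace$ forces $\phi$ to vanish to infinite order at $t=T$, so that integrating the energy identity \emph{from} $T$ leaves no boundary term, and (ii) the entirely routine interchange of $L^{\infty}_{t}$, $L^{2}_{t}$ and $L^{1}_{t}$ norms on the bounded interval $[t_{-}',T]$, together with the identification of the $\Yspace$- and $\Zspace$-norms with the obvious $\sum_{j\leq k}\normm{\p_{t}^{j}(\cdot)}_{L^{2}_{t}L^{2}_{x}}$ expressions. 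If one prefers to avoid the remark on self-adjointness of $D$, it suffices to re-run the short computation in the proof of Lemma~\ref{lemma:energy-est} for $D^{*}$ directly; nothing changes.
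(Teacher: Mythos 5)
Your proposal is correct and is exactly the paper's argument: the paper proves this lemma in one line by invoking Lemma~\ref{lemma:energy-est} with $t_{2}=T$, where the endpoint terms vanish because $\supp\phi\subset[t_{-},T]$ forces $\partial_{t}^{j}\phi$ to vanish at $t=T$, and the passage between the $L^{\infty}_{t}L^{2}$/$L^{1}_{t}L^{2}$ norms and the $\Yspace$/$\Zspace$ norms is the routine step you describe. Your observation that $D^{*}$ has the same structure as $D$ (indeed $D^{*}=D$ formally, since $i\partial_{t}$ and each $H(t)$ are self-adjoint) correctly fills in the one detail the paper leaves implicit.
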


\begin{lemma}\label{lemma:negativeorders}
  For $k \in \nn$, $T: \cD^{-k} \to H^{-k-1}_{\loc}(\rr; \cD)$.
\end{lemma}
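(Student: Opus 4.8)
The plan is to run the standard ``energy estimate $\Rightarrow$ solvability'' duality built on Lemma~\ref{lem:energy-est-renamed}, after peeling off an explicit static contribution. First I would fix $\chi\in\cinf(\rr)$ with $\chi\equiv 1$ for $t\le t_{-}'$ and $\chi\equiv 0$ for $t\ge t_{-}''$, write $u_{0}=\ang{H(t_{-})}^{k}\tilde u_{0}$ with $\tilde u_{0}\in L^{2}$, and put $v:=e^{i(t-t_{-})H(t_{-})}u_{0}$, so $(i\p_{t}+H(t_{-}))v=0$ and hence $Dv=(H(t)-H(t_{-}))v$ vanishes for $t\le t_{-}''$. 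One then constructs $Tu_{0}$ in the form $\chi v+w$, with $w$ solving $Dw=f:=-[i\p_{t},\chi]v=-i\chi'v$ and $w\equiv0$ for $t\le t_{-}'$; for $u_{0}\in\cD$ this reproduces $U(\cdot,t_{-})u_{0}$, so it is the right extension of $T$. The static piece is harmless: Fourier transforming in $t$ and using that the $H(t_{-})$-spectral measure of $v$ is carried by $\sp(H(t_{-}))$,
\[
\norm{v}_{H^{-k-1}(\rr;\cD)}^{2}\;=\;c\int\ang\lambda^{-2(k+1)}\,\ang\lambda^{2}\,\ang\lambda^{2k}\;d\ang{E_{\lambda}\tilde u_{0},\tilde u_{0}}\;=\;c\,\norm{\tilde u_{0}}^{2},
\]
the three weights coming from the $H^{-k-1}$ in $t$, the $\cD=\cD^{1}$ in $x$, and $u_{0}\in\cD^{-k}$ respectively, and cancelling since $-2(k+1)+2+2k=0$; multiplication by $\chi$ preserves $H^{-k-1}(\rr;\cD)$. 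Everything thus reduces to controlling $w$, whose source $f$ is supported in $[t_{-}',t_{-}'']$, inside the static region.

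From $\norm{\phi}_{\Yspace}\lesssim\norm{D^{*}\phi}_{\Zspace}$, a Hahn--Banach argument gives, for every $\ell\in\Yspace^{*}$, some $w\in\Zspace^{*}$ with $\ang{w,D^{*}\phi}=\ell(\phi)$ for all $\phi$ and $\norm{w}_{\Zspace^{*}}\lesssim\norm{\ell}_{\Yspace^{*}}$; any such $w$ solves $Dw=\ell$ distributionally and is unique in this class by the energy estimate for $D$. I would apply this with $\ell(\phi):=\ang{f,\phi}_{L^{2}(\rr\times X)}$ over intervals $[t_{-}',T]$, $T\gg0$ arbitrary; as $T$ varies these solutions are consistent by uniqueness, producing $w\in H^{-k}_{\loc}(\rr;L^{2}(X))$ vanishing for $t\le t_{-}'$.

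The crux is to show $\norm{\ell}_{\Yspace^{*}}\lesssim\norm{u_{0}}_{\cD^{-k}}$, and this is where the static structure of $f$ is essential. With $\tilde v:=e^{i(t-t_{-})H(t_{-})}\tilde u_{0}\in L^{\infty}(\rr;L^{2}(X))$, the identity $H(t_{-})\tilde v=\tfrac{1}{i}\p_{t}\tilde v$ — valid on $\supp\chi'\subset[t_{-}',t_{-}'']$, where $v$ solves the static Dirac equation — upgrades (the time-Fourier transform of $\tilde v$ being carried by $\sp(H(t_{-}))$) to $\ang{H(t_{-})}^{k}\tilde v=\ang{D_{t}}^{k}\tilde v$, where $D_{t}=\tfrac{1}{i}\p_{t}$ and $\ang{D_{t}}^{k}$ is the Fourier multiplier of symbol $\ang\tau^{k}$ in $t$ (a bona fide multiplier for any real $k\ge0$). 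Hence $f=\mp i\,\chi'\,\ang{D_{t}}^{k}\tilde v$, and transferring the real multiplier $\chi'$ and then the self-adjoint $\ang{D_{t}}^{k}$ onto the test function,
\[
|\ell(\phi)|=\bigl|\ang{\tilde v,\ \ang{D_{t}}^{k}(\chi'\phi)}\bigr|\le\norm{\tilde v}_{L^{\infty}_{t}L^{2}_{x}}\,\norm{\ang{D_{t}}^{k}(\chi'\phi)}_{L^{1}_{t}L^{2}_{x}}.
\]
Since $\chi'\phi$ is supported in the fixed compact interior interval $[t_{-}',t_{-}'']$, the $L^{1}_{t}$-norm is $\lesssim\norm{\ang t\ang{D_{t}}^{k}(\chi'\phi)}_{L^{2}_{t}L^{2}_{x}}\lesssim\norm{\chi'\phi}_{\overline H^{k}(\rr;L^{2}(X))}\lesssim\norm{\phi}_{\Yspace}$, giving $|\ell(\phi)|\lesssim\norm{\tilde u_{0}}\,\norm{\phi}_{\Yspace}=\norm{u_{0}}_{\cD^{-k}}\norm{\phi}_{\Yspace}$. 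I expect this to be the main obstacle of the proof: because $\Yspace,\Zspace$ cannot be based on $\cD^{k}(t_{-})$ (those domains being incomparable at different times once the potential is on), one is forced to exploit that the source issues from a static solution in order to convert the spatial weight $\ang{H(t_{-})}^{k}$ into $t$-derivatives, which the energy estimate does control; the remaining steps are routine.

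It then remains to upgrade $w$ from $H^{-k}_{\loc}(\rr;L^{2}(X))$ to $H^{-k-1}_{\loc}(\rr;\cD)$. For $t\le t_{-}''$, uniqueness of the Cauchy problem forces $w=(1-\chi)v$, a cutoff of $v\in H^{-k-1}(\rr;\cD)$, so $w\in H^{-k-1}_{\loc}(\rr;\cD)$ there. For $t>t_{-}''$, $f\equiv0$, so $Dw=0$, i.e.\ $H(t)w=-i\p_{t}w$; since $w\in H^{-k}_{\loc}((t_{-}'',T);L^{2}(X))$, this puts $\p_{t}w$, hence $H(t)w$, in $H^{-k-1}_{\loc}((t_{-}'',T);L^{2}(X))$, and subtracting the zeroth-order multiplication operator $H(t)-H(t_{-})$ — smooth (including at $r=0$) and bounded on these spaces — gives $H(t_{-})w\in H^{-k-1}_{\loc}((t_{-}'',T);L^{2}(X))$ as well; as $w$ itself lies there and $\cD=\cD^{1}$ is the $t$-independent graph space of $H(t_{-})$, functional calculus yields $w\in H^{-k-1}_{\loc}((t_{-}'',T);\cD)$. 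Since the two regimes overlap near $t_{-}''$ and $T$ is arbitrary, $w\in H^{-k-1}_{\loc}(\rr;\cD)$, whence $Tu_{0}=\chi v+w\in H^{-k-1}_{\loc}(\rr;\cD)$, as claimed.
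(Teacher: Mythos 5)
Your proposal follows essentially the same route as the paper: peel off the static evolution with a time cutoff, solve $Dw=-i\chi' v$ by a Hahn--Banach duality argument based on the energy estimate of Lemma~\ref{lem:energy-est-renamed}, patch the solutions over varying $T$ by uniqueness, and upgrade the regularity to $H^{-k-1}_{\loc}(\rr;\cD)$ via $H(t)u=-i\p_{t}u$ and the characterization of $\cD$. Two blemishes are worth fixing. First, the displayed identity $\norm{v}_{H^{-k-1}(\rr;\cD)}^{2}=c\,\norm{\tilde u_{0}}^{2}$ cannot hold as written: $\norm{v(t)}_{\cD^{-k}}$ is constant in $t$ and the temporal Fourier transform of $v$ is a vector-valued measure, so $v$ lies in no global $H^{s}(\rr;\cdot)$ space; the correct assertion is the local one (e.g.\ $\chi v\in H^{-k-1}(\rr;\cD)$ with norm $\lesssim\norm{u_{0}}_{\cD^{-k}}$, obtained by the same weight count after convolving the spectral measure with $\hat\chi$ -- the paper records the analogous fact as $v\in H^{-k}_{\loc}(\opens{-\infty,t_{-}''};L^{2})$). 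Second, uniqueness of $w\in\Zspace^{*}$ does not follow ``by the energy estimate for $D$'' at negative regularity; one must solve the \emph{backward inhomogeneous} problem $Dv=\psi$ with $v=0$ for $t\geq T$ (Lemma~\ref{lem:inhomog-unique}) and integrate by parts, which is how the paper argues. On the plus side, your explicit verification that $-i\chi' v\in\Yspace^{*}$ with norm $\lesssim\norm{u_{0}}_{\cD^{-k}}$, via $v=\ang{D_{t}}^{k}\tilde v$ and transferring the multiplier to the test function, is a clean quantitative version of the paper's bare assertion that the source lies in $H^{-k}_{\rm c}(\rr;L^{2})\subset\Yspace^{*}$.
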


\begin{proof}
  Let $f \in \Yspace^{*}$.  For all test functions $\phi \in
  C^{\infty}_{\rm c}(\rr \times X^{\inti})\cap \Yspace$, we define
  \begin{equation*}
    F(D^{*}\phi) = \ang{\phi , f}.
  \end{equation*}
  By Cauchy--Schwarz and Lemma~\ref{lem:energy-est-renamed}, $F$ is
  well-defined and enjoys the estimate
  \begin{equation*}
    \abs{F(D^{*}\phi)} \lesssim \norm{D^{*}\phi}_{\Zspace}\norm{f}_{\Yspace^{*}}.
  \end{equation*}
  We can extend $F$ from the range of $D^{*}$ on test functions to all
  of $\Zspace$ by Hahn--Banach; the Riesz lemma shows that there
  exists $u\in \Zspace^{*}$ so that
  \begin{equation*}
    \ang{\phi, f} = F(D^{*}\phi) = \ang{D^{*}\phi, u}.
  \end{equation*}
  Thus $Du = f$ distributionally wherever $\phi$ is allowed to have
  support, i.e., for $t < T$ as $\phi \in \Yspace$.  Note that we have
  implicitly obtained weak vanishing of $u$ at the left endpoint
  $t_{-}'$ because (locally) $\Yspace^{*}$ is in the dual space to a
  space of extendible distributions and hence is a space of supported
  distributions.  The equation \emph{is} satisfied down to the left
  endpoint $t=t_{-}'$ where $u$ vanishes, as the test functions may be
  supported there. 

Thus given any $f \in \Yspace^{*}$ there is a solution $u\in
  \Zspace^{*}$ to $Du = f$.  We claim this solution is unique.  To see
  this, suppose $u \in \Zspace^{*}$ with $Du=0$.  For any test
  function $\psi$ with compact support in $\clopens{t_-', T} \times X^\circ$,
  by Lemma~\ref{lem:inhomog-unique}, there exists $v \in H_{\loc}^kL^2$ with $Dv=\psi$ and $v=0$
  for $t\geq T$. Owing to the complementary support properties of
  $u,v$, with $u$ extendible to vanish for $t \leq t_-'$ and $v$
  vanishing by construction for $t \geq T$, we may integrate by parts
  to find $$0=\ang{Du,v} =\ang{u,Dv}=\ang{u,\psi}.$$ Consequently, $u$ vanishes.

  Now fix $u_{0}\in \cD^{-k}(t_-)$.  Let $v = e^{i(t-t_-)H(t_{-})}u_{0}$ be the
  locally-defined solution near $t=t_{-}$ of $Dv=0$ with Cauchy data
  $u_{0}$ given by the spectral theorem.  From the spectral-theoretic
  description, we further have
  $v \in H^{-k}_{\loc}(\opens{-\infty, t_{-}''} ; L^{2})$; we aim to extend
  it to a global solution in $H^{-k}_{\loc}(\rr ; L^{2})$.  Let
  $\chi(t)$ be a smooth function supported in $\opens{-\infty, t_{-}''}$
  with $\chi \equiv 1$ on a neighborhood of $\opencls{-\infty, t_{-}'}$.  We
  seek a solution $u$ of the form $u = \chi v + w$, with $w$ supported
  in $\clopens{t_{-}', \infty}$; such a solution would automatically have the
  correct initial data as $u=v$ near $t=t_{-}$.  For $u$ to be a
  solution, we require
  \begin{equation*}
    Dw = - [D, \chi]v = -i\chi'(t) v \in H^{-k}_{\rm c}(\rr ; L^{2}),
  \end{equation*}
  with $\supp w \subset \clopens{t_{-}', \infty}$.  Our work above shows that
  such a solution exists, so the map sending Cauchy data to solution
  extends to
  \begin{equation*}
    T: \cD^{-k}(t_-)\to H^{-k}_{\loc}(\rr ; L^{2}(X)).
  \end{equation*}
  Moreover, since $H(t) u = - i\p_{t}u \in H_{\loc}^{-k-1}(\rr ; L^{2}(X))$, the
  characterization of the domain of $H(t)$ gives  $u \in
  H_{\loc}^{-k-1}(\rr ; H^{1}(X))$, as claimed. \qeds
\end{proof}

\begin{proposition}\label{lemma:negative}
  For $s \geq 0$, $T: \cD^{-s}(t_-) \to H_{\b, \loc}^{1, -s-1}$.
\end{proposition}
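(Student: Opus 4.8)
The plan is to establish the integer cases $s = k \in \nn$ directly by feeding the output of Lemma~\ref{lemma:negativeorders} into the elliptic estimate Lemma~\ref{lemma:elliptic-estimate}, and then to interpolate for the remaining real $s \geq 0$, exactly as at the end of the proof of Lemma~\ref{lemma:positive}.

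First I would take $s = k \in \nn$ and fix $u_{0} \in \cD^{-k}(t_{-})$, setting $u = Tu_{0}$. Lemma~\ref{lemma:negativeorders} gives $u \in H^{-k-1}_{\loc}(\rr;\cD)$, while $Du = 0$ lies trivially in $H^{0,-k-1}_{\loc}$. Applying Lemma~\ref{lemma:elliptic-estimate} with the integer $-k-1 \in \zz$ in place of $k$ then yields $u \in H^{1,-k-1}_{\b,\loc}$, which is the desired bound $T : \cD^{-k}(t_{-}) \to H^{1,-k-1}_{\b,\loc}$; for $k = 0$ this recovers $T : L^{2} \to H^{1,-1}_{\b,\loc}$, already part of Lemma~\ref{lemma:positive}.

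To reach general $s \geq 0$ I would interpolate between the consecutive integer endpoints $s = k$ and $s = k+1$. On the domain side the scale $\cD^{-s}(t_{-}) = \Dom \bra H(t_{-}) \ket^{-s}$ is a complex interpolation scale by the spectral theorem for the self-adjoint operator $H(t_{-})$, so $[\cD^{-k}(t_{-}), \cD^{-k-1}(t_{-})]_{\theta} = \cD^{-k-\theta}(t_{-})$; on the target side, the conormal order enters as a weight in the $\b$-calculus and the statement is local, so $[H^{1,-k-1}_{\b,\loc}, H^{1,-k-2}_{\b,\loc}]_{\theta} = H^{1,-k-1-\theta}_{\b,\loc}$. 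Interpolating the two bounded maps $T : \cD^{-k}(t_{-}) \to H^{1,-k-1}_{\b,\loc}$ and $T : \cD^{-k-1}(t_{-}) \to H^{1,-k-2}_{\b,\loc}$ then gives $T : \cD^{-s}(t_{-}) \to H^{1,-s-1}_{\b,\loc}$ for all $s \geq 0$.

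I expect no real obstacle here: the substance is carried by the two lemmas quoted, namely Lemma~\ref{lemma:negativeorders}, whose Hahn--Banach and Riesz duality argument produces a solution of negative temporal order together with the uniqueness needed to pin it down, and the elliptic estimate Lemma~\ref{lemma:elliptic-estimate}, which converts temporal regularity into $\b$-regularity at no cost. The only point demanding a modicum of care is checking that the interpolation functors act as claimed on the local domain and target scales, which is routine.
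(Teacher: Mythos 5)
Your proposal is correct and follows essentially the same route as the paper: integer cases via Lemma~\ref{lemma:negativeorders} combined with the elliptic estimate of Lemma~\ref{lemma:elliptic-estimate}, then interpolation in the regularity index. The extra detail you supply on the interpolation step (the spectral-theorem interpolation of the $\cD^{-s}(t_-)$ scale and the conormal-weight interpolation of the target) is a sound elaboration of what the paper leaves implicit.
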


\begin{proof}
  Given $k\in \nn$ and $u_{0}\in \cD^{-k}(t_-)$, let $u = Tu_{0}$.  Lemma~\ref{lemma:negativeorders}
  asserts that $u \in H^{-k-1}_{\loc}(\rr ; \cD)$ and Lemma~\ref{lemma:elliptic-estimate}
  that $u \in H_{\b, \loc}^{1,-k-1}$.  Interpolation in the
  $\b$-regularity index then finishes the proof.\qeds
\end{proof}

\subsection{Propagation of singularities and diffraction} \label{ss:prop}
For each $u\in H^{1,-\infty}_{\b,\loc}(M;\cc^4)$, resp.~$H^{0,-\infty}_{\b,\loc}$, we denote by $\wf_\b^{1,\infty}(u)\subset\be T^*M$, resp.~$\wf_\b^{0,\infty}(u)$ the wavefront set  which microlocalizes infinite order conormal regularity with respect to $H^{1}(M;\cc^4)$, respectively $H^{0}(M;\cc^4)$; see e.g.~\cite{Baskin2023} for a pedagogical introduction. In particular, 
$$
\wf_\b^{1,\infty}(u) =\emptyset \ \Longleftrightarrow \ u \in H^{1,\infty}_\b(M).
$$

Let $\Sig=\{(x,t,\xi,\tau)\in T^*M \st   \tau^2-\xi^2=0 \}$ be the
characteristic set associated to the Minkowski metric. The
\emph{compressed characteristic set} $\dSig$  is the image of $\Sig$
under the natural map $T^*M\to \be T^*M$.
By a
\emph{bicharacteristic} we will mean here the image in
$\dSig\subset\be T^*M$ of a null bicharacteristic for the Minkowski
metric.

Let $\sigma,\tau,\eta$ be the fiber coordinates on $\be T^*M$ in which the canonical one-form is
\beq\label{coord}
\sigma\frac{dr}{r} + \tau dt + \eta \cdot \frac{d\theta}{r}.
\eeq
Then
$$
\dSig\cap{\{r=0\}}=\{ ( r=0, \theta, t , \sigma=0, \eta=0, \tau ) \st \theta\in \ss^2, \ \tau\neq 0 \}.
$$
{Let $\dSig_0$ denote the topological space obtained as the
  quotient of $\dSig$ by the equivalence relation defined in the
  coordinates $(r,\theta,t,\sigma,\eta,\tau)$ above by
  $$
(0,\theta,t,0,0,\tau)\sim (0,\theta',t,0,0,\tau),\quad \theta,\theta'
\in S^2,
$$
i.e., identifying the factor of $S^2$ in the base.}

\begin{definition}\label{def:db} A \emph{diffractive bicharacteristic} is either:
\ben
\item\label{def:db1} a bicharacteristic in $\be T^* M^\inti$,
\item\label{def:db2}  a  bicharacteristic starting or ending at $\{r=0\}$,
\item\label{def:db3}  {a concatenation of two forward, or
  two backward bicharacteristics, one  ending at $\{r=0\}$ and one
  starting at $\{r=0\}$, whose projection to $\dSig_0$ is continuous.}
 \een
\end{definition}

In the case \eqref{def:db1} this is simply a null bicharacteristic (in the usual sense) which does not meet $\{r=0\}$. Note that $\be T^*M^\inti$ is naturally identified with $T^* M^\inti$, and Hörmander's propagation of singularities along null bicharacteristics applies. By equivalence of the wavefront set and $\b$-wavefront set on $M^\inti$, this means in particular that if $Du=0$ and $q_1,q_2\in \dSig$ are connected by a bicharacteristic lying entirely in $\be T^* M^\inti$  then
$$
 q_1 \notin \wf_\b^{1,\infty}(u)   \implies  q_2 \notin \wf_\b^{1,\infty}(u).
$$

In thinking of the second and third cases, we can  distinguish between \emph{incoming} and \emph{outgoing}
diffractive bicharacteristics (meaning ones moving \emph{towards},
resp.~\emph{away from} $r=0$ as $t$ increases)   by observing that
they are contained in  
$$
\Sigin\defeq\dSig\cap  \{ \sigma/\tau >0 \mbox{ or } r=0 \}   \mbox{ resp. }   \Sigout\defeq\dSig\cap   \{ \sigma/\tau <0 \mbox{ or } r=0 \}
$$
because away from $r=0$, the sign of $\sigma/\tau$ determines the sign of $dr/dt$  along bicharacteristics.
Note that this terminology refers to the time flow on Minkowski space: this agrees with moving along bicharacteristics in the positive energy component $\dSig^+= \dSig\cap\{ \tau >0 \}$, but is the opposite of moving along bicharacteristics in the negative energy component $\dSig^-= \dSig\cap \{\tau < 0\}$. 

 We use the notation $q=(r,\theta,t,\sigma,\eta,\tau)$,
 $q'=(r',\theta',t',\sigma',\eta',\tau')$ for points in $\be T^*M$
 (and $\be S^*M$), where the Greek letters indicate fiber coordinates
 given by  \eqref{coord}.  To take into account the diffraction in the
 propagation of singularities we  first introduce the following
 terminology, {which deals with sets invariant under the same
   $S^2$ quotient used above in passing from $\dSig$ to $\dSig_0$.}
  
 \bed
 For $q=(r,\theta,t,\sigma,0,\tau) \in \dot\Sig$ we define its \emph{diffractive spread} as the set
 $$
 S_q \defeq \{  q' \in   \dSig \st r'=r, \ \theta'\in \ss^2,  \ t'=t, \ \sgn \tau'=\sgn \tau, \ \sgn \sigma'=\sgn \sigma,  \ \eta'=0 \}.
 $$
 For $q=(r,\theta,t,\sigma,\eta,\tau)$ with $\eta\neq 0$, we simply set $S_q\defeq\{q\}$. 
 \eed
 
If $\sigma=0$, then by writing $\sgn \sigma'=\sgn \sigma$ we simply mean $\sigma'=0$.
 
On the base manifold, $S_q$ is the orbit of $q$ under spatial rotations around $r=0$ (in the case $\eta=0$). In the fibers, $S_q$ consists  of incoming or outgoing directions (depending on the nature of $q$) which lie in the same component $\dSig^\pm$ as $q$.
 
 \begin{definition} For $q_1,q_2\in \be T^*M$, we write $q_1\dot\sim  q_2$ if $q_1,q_2\in \dSig$ and $q_1$ is connected with $q_2$ by a diffractive bicharacteristic. 
  \end{definition}
{The continuity enforced in the third part of
  Definition~\ref{def:db} ensures that neither time $t$ nor energy
  $\tau$ jump upon interaction with the spatial origin, but by
  contrast the
  relationship of the angular variables of the concatenated bicharacteristics
  entering and leaving $r=0$ is unconstrained.
Of particular interest for our applications is that the
  constancy of $\tau$ along diffractive bicharacteristics enforces the following:} if $q_1 \dot\sim q_2$, then $q_1$ and $q_2$ lie in the same component  $\{ \tau >0\}$  or $\{\tau<0\}$, i.e.~either
  $$
  (q_1,q_2)\in \dSig^+\times \dSig^+, \mbox{ or } (q_1,q_2)\in \dSig^-\times\dSig^-.
  $$

In \cite[Thm.~21--22]{Baskin2023} the first and third authors prove a
diffractive propagation of singularities theorem which gives in
particular the following statement, generalized here to the
time-dependent setting introduced in \eqref{eq:as1}--\eqref{eq:as2}.  
  
 \begin{theorem}\label{propagation}  Suppose $u\in H_{\rm b,loc}^{1,-\infty} (M;\cc^4)$ satisfies $Du=0$. Then, $\wf_\b^{1,\infty}(u)\subset\dSig$. Furthermore, for all $q_1,q_2\in \dSig$ such that $q_1\dot\sim q_2$,
 \beq \label{eq:prop}
 S_{q_1} \cap \wf_\b^{1,\infty}(u)=\emptyset   \implies  S_{q_2} \cap \wf_\b^{1,\infty}(u) =\emptyset.
\eeq
More generally, if $Du=f$ for some $f\in H_{\rm b,loc}^{0,-\infty}(M;\cc^4)$ then \eqref{eq:prop} holds true away from $\wf_\b^{0,\infty}(f)$.
\end{theorem}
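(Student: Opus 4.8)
The plan is to reduce to the static theorem of \cite{Baskin2023}. Write $D=D_{\rm DC}+V(t)$, where $D_{\rm DC}=i\p_t+\sum_{k=1}^{3}\alpha_k i^{-1}\p_k+m\beta-\kappa|\bar x|^{-1}$ is the static Dirac--Coulomb operator treated there, and $V(t)=\sum_{k=1}^{3}\alpha_k A_k(t)-A_{0,\infty}(t)$ is multiplication by a matrix-valued function which, by \eqref{eq:as1}--\eqref{eq:as3}, is smooth on $\rr^{1,3}$ (hence on $M$), bounded, and supported in $t\in\open{t_-,t_+}$. Thus $V$ is a bounded differential operator of order $0$, one less than the order of the principal part of $D$; it leaves the principal symbol of $D$ --- hence $\Sig$ and the null bicharacteristics --- unchanged, and it enters the estimates of \cite{Baskin2023} at $\{r=0\}$ only through lower-order remainder terms. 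Since moreover $V\equiv0$ for $t\leq t_-$, the statement is literally \cite[Thm.~21--22]{Baskin2023} there, and what remains is to verify that the estimates behind those theorems persist under the addition of $V$ and under the weakening of the a priori hypothesis to $u\in H^{1,-\infty}_{\b,\loc}$.

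First I would dispatch the routine parts. The inclusion $\wf_\b^{1,\infty}(u)\subset\dSig$ --- and, for $Du=f$, $\wf_\b^{1,\infty}(u)\subset\dSig\cup\wf_\b^{0,\infty}(f)$ --- is $\b$-elliptic regularity: off $\dSig$ the principal symbol of $D$ is invertible while $V$ is of lower order, so the standard elliptic parametrix in $M^\inti$ together with its counterpart near $\{r=0\}$ behind Lemma~\ref{lemma:elliptic-estimate} and \cite[Corollary~33]{Baskin2023} gives the claim. For diffractive bicharacteristics of type~\eqref{def:db1}, which lie in $\be T^*M^\inti\cong T^*M^\inti$ and hence carry $\eta\neq0$ (so $S_{q_i}=\{q_i\}$), I would compose $D$ with a suitable first-order factor to obtain a scalar operator of real principal type with characteristic set $\dSig\cap\be T^*M^\inti$, and invoke Hörmander's propagation of singularities, with the inhomogeneity controlled by $\wf_\b^{0,\infty}(f)$ since the latter contains the ordinary wavefront set of the transformed source and $\b$- and ordinary wavefront sets agree on $M^\inti$; this is precisely \eqref{eq:prop} in that case.

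The substance of the theorem is the behaviour at $\{r=0\}$, cases~\eqref{def:db2} and~\eqref{def:db3}, and here I would follow the positive-commutator and energy arguments of \cite{Baskin2023} (modelled on Melrose--Wunsch~\cite{Melrose2004}). Propagation up to, respectively away from, $\{r=0\}$ along incoming, resp.\ outgoing, bicharacteristics is obtained by testing the equation against commutants whose $\b$-symbols are monotone along the radial flow; the diffractive step~\eqref{def:db3}, which forces $t$ and $\tau$ to vary continuously across $\{r=0\}$ but leaves the angular variables free, uses in addition a commutant monotone in $t$ and exploits only the model of $D$ at $\{r=0\}$ together with the fact that $\b$-regularity there cannot resolve the $S^2$ factor --- precisely what makes the diffractive spread $S_q$ and the quotient $\dSig_0$ appear. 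Because $V$ has order $0$, its commutator with any commutant is one order below the corresponding commutator of $D_{\rm DC}$, so $V$ contributes to every positive-commutator or energy identity only remainders absorbable by the inductive hypothesis; and the $t$-dependence of the coefficients poses no obstruction, since the commutators $[\p_t^{k},H(t)]$ it generates are of lower order, exactly as accounted for in Lemma~\ref{lemma:energy-est}. The inputs of \cite{Baskin2023} that must be re-examined --- the model-operator computation, \cite[Lemma~26]{Baskin2023}, and the remainder bookkeeping of \cite[\S5.2.2]{Baskin2023}, exactly as in the proof of Lemma~\ref{lemma:elliptic-estimate} --- retain their form, and I expect the main (if routine) obstacle to be confirming that no $V$-term reverses the sign of a commutator or inflates an order count in the diffractive estimate.

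Finally, to accommodate the weak a priori hypothesis $u\in H^{1,-\infty}_{\b,\loc}$ I would run every positive-commutator identity on $\Lambda_\ep u$, for a family $\Lambda_\ep$ of $\b$-pseudodifferential operators of order $-\infty$ bounded in $\Psi^0_\b(M)$ with $\Lambda_\ep\to\Id$ strongly, exactly as in the proofs of Lemmas~\ref{lemma:elliptic-estimate} and~\ref{lemma:positive}: then $[D,\Lambda_\ep]$ is bounded in $\Psi^0_\b(M)$ and smoothing for $\ep>0$, the integrations by parts are legitimate, and --- starting from the finite $\b$-order $u$ possesses on any relatively compact set and improving it by the usual half-step at each application of the estimate --- one lets $\ep\to0$ and iterates to propagate $\wf_\b^{1,\infty}(u)$ as asserted. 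The inhomogeneous statement follows in the same way, the only new contribution to the identities being one pairing $f$ against $Bu$, which is harmless wherever $\wf_\b^{0,\infty}(f)$ is absent.
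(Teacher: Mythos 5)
Your overall strategy (reduce to \cite{Baskin2023}, treat the time-dependent part as an order-zero perturbation entering only through remainder terms, then handle the weak a priori regularity by a separate device) matches the paper's, and your discussion of the elliptic and hyperbolic estimates is consistent with what is actually checked. The genuine gap is in your final step, the passage from $u\in H^{1}_{\loc}$ to $u\in H^{1,-\infty}_{\b,\loc}$. You propose to run the commutator identities on $\Lambda_\ep u$ for a family $\Lambda_\ep$ bounded in $\Psi^0_\b(M)$ and assert that $[D,\Lambda_\ep]$ is then bounded in $\Psi^0_\b(M)$. This is false near $r=0$: the spatial Dirac operator is built from $\p_r$ and $r^{-1}\p_\theta$, which are \emph{not} $\b$-vector fields, so commuting $D$ with a genuine $\b$-pseudodifferential regularizer produces terms of the schematic form $r^{-1}\Psi^{-1}_\b$, $\Psi^{-1}_\b\p_r$ and $\Psi^{-1}_\b\, r^{-1}\beta K$ — singular at the boundary and outside the $\b$-calculus. (You appear to have transplanted the computation from Lemma~\ref{lemma:positive}, where $\Lambda_\ep$ is pseudodifferential in $t$ \emph{only}, so that only the zeroth-order $t$-dependent coefficients fail to commute; such a regularizer cannot restore negative conormal order in $r$ and $\theta$, which is the whole difficulty.) Moreover, the "usual half-step iteration" starting from $H^{1,-m}_{\b,\loc}$ presupposes that the propagation estimates of \cite{Baskin2023} are valid at each negative conormal order, i.e.\ with commutants of negative order; verifying this is not automatic and is precisely what must be avoided or redone. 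The paper's route is different: it conjugates by an invertible, $SO(3)$-invariant elliptic $\Upsilon\in\Psi^{-m}_\b(M)$, sets $v=\Upsilon u\in H^1_{\loc}$, and proves propagation for the modified operator $\widetilde D=D-[D,\Upsilon]\Upsilon^{-1}$, whose difference from $D$ consists exactly of the singular lower-order terms above; the content of the proof is then to check, via \cite[Lem.~26, Lem.~27, Lem.~36, \S5.2.2]{Baskin2023}, that these specific terms are absorbable in both the elliptic and the hyperbolic estimates. Your proposal neither performs this conjugation nor supplies a substitute analysis of the singular commutator terms, so the key step is missing. (A further, more minor, inaccuracy: interior bicharacteristics need not have $\eta\neq 0$, so $S_{q_i}=\{q_i\}$ does not hold for all of case~\eqref{def:db1}; radial segments with $\eta=0$ at positive $r$ have nontrivial spread, and one must also note the reduction from $H^1_\loc$ to $H^1$ by a cutoff, as the paper does.)
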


\proof  This result follows from Theorems 22 and 23 of \cite{Baskin2023} in the
special case of time independent Hamiltonian $H(t)=H(0)$. The proof
in the case of time-dependent Hamiltonian (where we recall that the
time dependence is only through the smooth part of the vector
potential) requires only fairly minor changes, however; we now
describe these changes to the proofs in \cite{Baskin2023}.  Note that we will
employ the calculus of b-pseudodifferential operators, as described in
\cite[\S3]{Baskin2023} and references therein.

The main changes needed in the time dependent case are twofold.  Most obviously, one must check that in the presence of
time dependent coefficients, the positive
commutators employed in \cite{Baskin2023} remain positive modulo the
same types of error terms as before.  Additionally, though, the
main commutator arguments in \cite{Baskin2023} are all built around
the background regularity assumption $u \in H^1(M;\cc^4)$ (or more generally $u\in C(\rr;\cD^{-\infty})$).   We claim that a local regularity assumption is sufficient, namely,  if the theorem is true for $u\in H^{1}(M;\cc^4)$ then it is also true under the weaker assumption $u\in H_{\rm loc}^{1}(M;\cc^4)$. In fact, we can find $\chi\in \cf_{\rm c}(M)$  such that $\chi=1$ on the base projection of the diffractive bicharacteristic segments from $q_1$ to $q_2$. Then $\chi u \in H^1(M;\cc^4)$ and $P\chi u = f+ [P,\chi]u$, with $\supp [P,\chi] u$ disjoint from the region of interest, so we are reduced to applying the $H^{1}(M;\cc^4)$ version of the theorem.

To extend to
the more general assumption $u \in H_{\rm b,loc}^{1,-\infty}(M;\cc^4)$ requires no great
effort in the time-independent case, as we can simply commute $H$ with
powers of of $\p_t$ (and keep in mind that $\p_t^k \in \Psi_\b^k(M)$) in
order to shift regularity up and down on the scale of spaces $H_\b^{1,s}(M;\cc^4)$.
On the other hand,
if $H$ is time-dependent, this argument to shift regularity no longer
applies as written.  We thus need to use a more robust argument,
conjugating by an elliptic b-pseudodifferential operator which does
not, in general, commute with $H(t)$.  We begin by describing this conjugation.

Given $u \in \Hl{-m}$ with $Du =f \in H_{\b,\loc}^{0,-m+1}$, we will proceed by proving
propagation of singularities for $v = \Upsilon u$,
where $\Upsilon \in \Psi^{-m}_\b(M)$ is an invertible, $\operatorname{SO}(3)$-invariant, elliptic
$\b$-pseudodifferential operator of order $-m$.  As $\Upsilon$ is
elliptic, measuring
$\wf_{\b}^{1,s}(u)$ is equivalent to measuring $\wf_{\b}^{1,s+m}(v)$.
The equation satisfied by $v$ is then
\begin{equation*}
  \widetilde{D}v = \left( D - \left[ D,
      \Upsilon\right]\Upsilon^{-1}\right) v = \Upsilon f.
\end{equation*}
Establishing propagation of singularities for $\widetilde{D}$ at the
level of $\Hl{s}$ for $s\geq 0$ provides the statement for (possibly
time-dependent) $D$ at the level of $\Hl{s-m}$.  Thus, it
  suffices for us to prove propagation of singularities for the
  operator $\widetilde{D}$ for a solution lying in $H_\loc^1(M;\cc^4)$.

As in the proof of Lemma~\ref{lemma:elliptic-estimate}, a computation in the differential $\b$-calculus verifies that
\begin{equation*}
  \widetilde{D}-D \in \Psi_{\b}^{0} + \Psi_{\b}^{-1}\frac{1}{r} +
  \Psi_{\b}^{-1}\p_{r} + \Psi_{\b}^{-1}\left( \frac{1}{r}\beta K\right),
\end{equation*}
where $K$ is Dirac's $K$-operator, see \cite[\S2.2]{Baskin2023}. 

We now discuss the changes in the proof of the propagation of
singularities results \cite[Thm.~22--23]{Baskin2023} necessary
to accommodate both the time-varying coefficients and also the need to
consider $\widetilde{D}$ rather than $D$ itself.  The argument has two main steps: an elliptic estimate and a
hyperbolic estimate.  The elliptic estimate, discussed in the proof of
Lemma~\ref{lemma:elliptic-estimate}, is based on the real part
of the quadratic form $\langle Pu, u\rangle$ and is used in two main
ways, both to provide estimates outside the characteristic set and to
more generally reduce the problem of estimating an $H^{1}$ norm to one
of estimating only the $L^{2}$ norm of $\p_{t}v$.  The other main
component is the hyperbolic estimate, which is obtained via a positive
commutator estimate built out of a microlocalization of
$r\partial_{r}$.

Both pieces require controlling the interactions of $r^{-1}$ and
$\p_{r}$ with Melrose's $\b$-calculus and the proofs of both pieces
hold for $\widetilde{D}$ as well.  The elliptic estimate was treated
above in Lemma~\ref{lemma:elliptic-estimate}.

For the hyperbolic estimate, the core arguments are
unchanged.  Lem\-ma~27 of \cite{Baskin2023} computes the commutator of $D$ with
a $\b$-pseudodifferential operator of order $m$ and is the main change
that must be made; there is a new term of the form
$\Psi_{\b}^{m-2}r^{-1}\beta K$.  On the other hand, similar terms
arise already in the application to a specific commutator~\cite[Lem.~36]{Baskin2023}, where this term is represented as $\mathbf{R}_{3}$.
As this term is treated as an error term and not as a source of
positivity, it yields no change in the proof of the hyperbolic
estimate.  {Note that the time dependence of the coefficients, per se, contributes no new types of
error terms.}
 \qeds

Note also that if the bicharacteristic segment connecting $q_1$ and $q_2$ does not meet $r=0$, then of course Hörmander's propagation of singularities theorem applies and $S_{q_i}$ can be replaced by $q_i$, $i=1,2$.

\subsection{Operatorial $\b$-wavefront set}\label{ss:op} Let us now introduce an operatorial version of the $\b$-wavefront set and discuss its properties,  analogously to \cite{holographic,gannotwrochna} (in particular the proofs of Lemmas \ref{emptyWF} and  \ref{wfs} below are fully analogous). We first define a class of operators that are suitable for $\b$-wavefront set considerations   and propagation of singularities.

\begin{definition}\label{def:reg} We say that $\Lambda$ is \emph{regular} if
$\Lambda: \Hcd{m}\to\Hl{m}$ continuously for all $m\in \rr$.
\end{definition}

\begin{definition} For $\Lambda$ regular,  its \emph{operatorial $\b$-wavefront set} is the set $\wf'_\b(\Lambda)\subset \cotwice$ defined as follows: $(q_1,q_2)\notin \wf'_\b(\Lambda)$ iff there exist  $B_i\in \Psi_\b^0(M)$, elliptic at $q_i$ ($i=1,2$),  such that 
\beq\label{eqregularizes}
B_1 \Lambda B_2^*: \Hcminfd\to \Hlinf
\eeq
continuously.
\end{definition}

\begin{lemma}\label{emptyWF}
Suppose $\Lambda$ is regular and $\wf'_\b(\Lambda)=\emptyset$. Then $\Lambda:\Hcminfd\to \Hlinf$ continuously.
\end{lemma}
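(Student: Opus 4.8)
The plan is to run the standard microlocal partition argument: cover the compressed cosphere bundle by finitely many open cones on which $\Lambda$ is already known to be smoothing (after conjugation by elliptic $\b$-pseudodifferential operators), build a microlocal partition of unity subordinate to this cover, and reassemble. First I would observe that since $\Lambda$ is regular, $\Lambda:\Hcd{m}\to\Hl{m}$ for all $m$, so the issue is purely one of the infinite-order $\b$-wavefront set of the Schwartz kernel at fiber-infinity. Working over the (compactified) $\b$-cosphere bundle $\cosb\times\cosb$, which is compact modulo the base $\rr_t\times X$ being non-compact, one localizes further with a cutoff $\chi\otimes\chi'\in\cf_\c(M)\times\cf_\c(M)$; regularity lets us reduce to this compactly-supported-in-base setting. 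The hypothesis $\wf'_\b(\Lambda)=\emptyset$ means that for each point $(q_1,q_2)\in\cotwice$ there are $B^{(q_1,q_2)}_1,B^{(q_1,q_2)}_2\in\Psi^0_\b(M)$, elliptic at $q_1,q_2$ respectively, with $B^{(q_1,q_2)}_1\Lambda (B^{(q_1,q_2)}_2)^*:\Hcminfd\to\Hlinf$.

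Next, by compactness of $\cosb\times\cosb$ (over a compact piece of the base), extract a finite subcover: finitely many pairs $(q_1^j,q_2^j)$, $j=1,\dots,N$, such that the elliptic sets of the $B^j_1$ in the first factor and $B^j_2$ in the second jointly cover everything. Then I would construct $A^j_i\in\Psi^0_\b(M)$, $i=1,2$, microlocally supported where $B^j_i$ is elliptic, with $\sum_j (A^j_1)^*A^j_1$ elliptic everywhere (over the compact base piece) and likewise $\sum_j (A^j_2)^*A^j_2$ elliptic — i.e.\ a quadratic microlocal partition of unity in each factor. Using ellipticity of the sums, one writes $\one=E_1\sum_j(A^j_1)^*A^j_1 + R_1$ and $\one = \sum_j (A^j_2)^*A^j_2 E_2 + R_2$ with parametrices $E_1,E_2\in\Psi^0_\b(M)$ and smoothing remainders $R_1,R_2$. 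Inserting these resolutions of the identity on the left and right of (a localized) $\Lambda$, one gets $\Lambda$ as a finite sum of terms of the form $E_1 (A^j_1)^* \big(A^j_1\Lambda (A^k_2)^*\big) A^k_2 E_2$ plus terms involving $R_1$ or $R_2$. The key point is that each middle factor $A^j_1\Lambda (A^k_2)^*$ is smoothing in the required sense: since $A^j_1$ is microlocalized where $B^j_1$ is elliptic, $A^j_1 = C^j_1 B^j_1 + (\text{smoothing})$ for some $C^j_1\in\Psi^0_\b$ by elliptic parametrix construction, and similarly $(A^k_2)^* = (B^k_2)^* (C^k_2)^* + (\text{smoothing})$ — wait, one needs the indices to match, so more carefully one uses that $A^j_1\Lambda(A^k_2)^*$ has $\wf'_\b$ contained in a small neighborhood of $(q_1^j,q_2^k)$, and either this neighborhood is disjoint from $\wf'_\b(\Lambda)=\emptyset$ (trivially true) so the composition is smoothing by the elliptic parametrix factorization through $B_1^j\Lambda(B_2^k)^*$; to make the factorization literal one should pick the finite cover so that for every pair $(j,k)$ of indices that can co-occur, $A^j_1$ is subordinate to the elliptic set of some $B^{jk}_1$ and $A^k_2$ to that of $B^{jk}_2$ with $B^{jk}_1\Lambda(B^{jk}_2)^*$ smoothing — which is arranged simply by refining the cover using that \emph{every} pair is in the complement of $\wf'_\b(\Lambda)$. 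Then $A^j_1\Lambda(A^k_2)^* = C^j_1 \big(B^{jk}_1\Lambda(B^{jk}_2)^*\big)(C^k_2)^* + (\text{smoothing compositions})$ maps $\Hcminfd\to\Hlinf$.

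Finally I would clean up the remainder terms: $R_1\Lambda(\cdots)$ and $(\cdots)\Lambda R_2$ are smoothing because $R_i$ is $\b$-smoothing and $\Lambda$ is regular (so composing a $\b$-smoothing operator with $\Lambda$, which maps each $\Hcd m\to\Hl m$, lands in $\Hlinf$; symmetrically on the other side using that $R_2^*$, or rather $R_2$, is smoothing and regularity gives the needed mapping to $\Hcd\infty$ on the input side — here one uses that $\Lambda$ and its formal adjoint are both regular, which follows from the symmetric form of Definition \ref{def:reg} and duality, or is simply part of the standing hypotheses). Assembling, $\chi\Lambda\chi'$ (for arbitrary cutoffs) maps $\Hcminfd\to\Hlinf$, and since such cutoffs exhaust, $\Lambda:\Hcminfd\to\Hlinf$ continuously. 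The main obstacle is bookkeeping: ensuring that the finite cover and the associated microlocal partitions of unity are chosen so that \emph{every} pair $(A^j_1, A^k_2)$ of partition elements that co-occurs in the double sum is genuinely subordinate to a pair $(B_1,B_2)$ for which $B_1\Lambda B_2^*$ is smoothing — this is where one must be careful, but it follows readily from $\wf'_\b(\Lambda)=\emptyset$ by taking the cover of the \emph{product} $\cotwice$ (not the two factors separately) into cones on which the regularizing property holds, then projecting to each factor. Everything else is routine $\b$-calculus composition and the regularity hypothesis; this is why the analogous lemmas in \cite{holographic,gannotwrochna} go through verbatim.
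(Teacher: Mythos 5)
Your argument is correct and is exactly the microlocal-partition-of-unity proof the paper intends (the paper simply defers to the analogous lemmas in \cite{holographic,gannotwrochna}): covering $\cotwice$ over a compact piece of the base by product cones on which the regularizing property holds, refining so that every cross pair $(A^j_1,A^k_2)$ is subordinate to a single admissible pair $(B^{jk}_1,B^{jk}_2)$, and absorbing the $\Psi^{-\infty}_\b$ remainders using the regularity of $\Lambda$ (and of $\Lambda^*$, which indeed follows from the self-dual form of Definition \ref{def:reg}). Nothing essential is missing.
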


\begin{lemma}\label{wfs} If $\Lambda$ and  $\tilde\Lambda$ are regular  then $\wf_\b'(\Lambda+\tilde\Lambda)\subset \wf_\b'(\Lambda)\cup \wf_\b'(\tilde\Lambda)$.
\end{lemma}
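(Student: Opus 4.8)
The statement to prove is Lemma~\ref{wfs}: for regular operators $\Lambda,\tilde\Lambda$, one has $\wf_\b'(\Lambda+\tilde\Lambda)\subset \wf_\b'(\Lambda)\cup \wf_\b'(\tilde\Lambda)$. The plan is to argue by contrapositive on complements: fix $(q_1,q_2)\notin \wf_\b'(\Lambda)\cup \wf_\b'(\tilde\Lambda)$ and show $(q_1,q_2)\notin\wf_\b'(\Lambda+\tilde\Lambda)$. First I would unwind the definition: since $(q_1,q_2)\notin\wf_\b'(\Lambda)$ there exist $B_1,B_2\in\Psi_\b^0(M)$ elliptic at $q_1,q_2$ respectively with $B_1\Lambda B_2^*:\Hcminfd\to\Hlinf$ continuous, and likewise there exist $\tilde B_1,\tilde B_2$ elliptic at $q_1,q_2$ with $\tilde B_1\tilde\Lambda\tilde B_2^*:\Hcminfd\to\Hlinf$ continuous. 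The obstacle is purely bookkeeping: the two pairs of microlocal cutoffs need not coincide, so the sum $B_1(\Lambda+\tilde\Lambda)B_2^*$ is not directly the sum of the two good operators.

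The standard fix is a microlocal partition/elliptic-parametrix step. Choose $C_i\in\Psi_\b^0(M)$ ($i=1,2$) elliptic at $q_i$ and \emph{microsupported} where both $B_i$ and $\tilde B_i$ are elliptic --- this is possible because ellipticity at a point is an open condition, so $\mathrm{Ell}(B_i)\cap\mathrm{Ell}(\tilde B_i)$ is an open neighborhood of $q_i$ in $\co$, and one picks $C_i$ with essential support in that neighborhood and elliptic at $q_i$. By elliptic regularity in the $\b$-calculus (microlocal elliptic parametrix construction of \cite[\S3]{Baskin2023}), there exist $B_1'\in\Psi_\b^0$ with $B_1'B_1=C_1+R$, $R\in\Psi_\b^{-\infty}$, hence $C_1\Lambda = B_1'B_1\Lambda - R\Lambda$; then $C_1\Lambda C_2^* = B_1'(B_1\Lambda B_2^*)(B_2')^*(\text{something}) + \text{(smoothing}\circ\Lambda\circ\text{smoothing corrections})$. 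Writing this out cleanly: one arranges $C_1 = A_1 B_1 + R_1$ and $C_2^* = B_2^* A_2 + R_2$ with $A_i\in\Psi_\b^0$ and $R_i\in\Psi_\b^{-\infty}$ (proper), using that $C_i$ is microsupported in $\mathrm{Ell}(B_i)$. Then $C_1\Lambda C_2^* = A_1(B_1\Lambda B_2^*)A_2 + A_1 B_1\Lambda R_2 + R_1\Lambda C_2^*$, and each term maps $\Hcminfd\to\Hlinf$: the first because $B_1\Lambda B_2^*$ does and $A_1,A_2\in\Psi_\b^0$ preserve the relevant $\b$-Sobolev scales; the remaining two because regularity of $\Lambda$ gives $\Lambda:\Hcd{m}\to\Hl{m}$ for all $m$ while the $\Psi_\b^{-\infty}$ factor $R_2$ (resp.\ $R_1$) gains infinitely many orders, so the composition is smoothing --- here one also uses that $R_i$ properly supported keeps compact supports compact so the $\c$/$\loc$ bookkeeping is fine. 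The identical argument with $\tilde B_i$ gives $C_1\tilde\Lambda C_2^*:\Hcminfd\to\Hlinf$.

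With this reduction done, the conclusion is immediate: $C_1(\Lambda+\tilde\Lambda)C_2^* = C_1\Lambda C_2^* + C_1\tilde\Lambda C_2^*$ is a sum of two operators each continuous $\Hcminfd\to\Hlinf$, hence continuous $\Hcminfd\to\Hlinf$; and $\Lambda+\tilde\Lambda$ is regular as a sum of regular operators, so the definition of $\wf_\b'$ applies and yields $(q_1,q_2)\notin\wf_\b'(\Lambda+\tilde\Lambda)$. Since $(q_1,q_2)$ was an arbitrary point outside $\wf_\b'(\Lambda)\cup\wf_\b'(\tilde\Lambda)$, this proves the claimed inclusion.

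\textbf{Main obstacle.} The only genuine content is the ``common microlocal cutoff'' step: replacing the two unrelated pairs $(B_1,B_2)$, $(\tilde B_1,\tilde B_2)$ by a single pair $(C_1,C_2)$ that works for both $\Lambda$ and $\tilde\Lambda$ simultaneously. This rests on the microlocal elliptic parametrix in Melrose's $\b$-calculus together with the regularity hypothesis (Definition~\ref{def:reg}), which is precisely what lets the smoothing remainder terms be absorbed; I expect this to be routine given the calculus recalled from \cite{Baskin2023}, and indeed the authors remark that the proof is ``fully analogous'' to \cite{holographic,gannotwrochna}.
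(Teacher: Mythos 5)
Your proof is correct. The paper itself gives no argument for this lemma, stating only that the proof is ``fully analogous'' to the cited references \cite{holographic,gannotwrochna}; your reconstruction --- passing to a common pair of microlocal cutoffs $C_1,C_2$ via the elliptic parametrix in the $\b$-calculus and absorbing the $\Psi_\b^{-\infty}$ remainder terms using the regularity hypothesis $\Lambda:\Hcd{m}\to\Hl{m}$ for all $m$ --- is exactly the standard argument intended there.
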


\begin{definition} We say that $\Lambda$ is a \emph{regular bi-solution} if it is regular and
 \beq
 D\circ \Lambda=0 \mbox{ on } \Hcminfd, \quad  \Lambda \circ D=0 \mbox{ on } \dHcminf.
\eeq
\end{definition}

We will use the following operator version of Theorem  \ref{propagation}.

 \begin{theorem}\label{opp} Suppose that $\Lambda$ is a regular  bi-solution.   Then $\wf'_\b(\Lambda)\subset \dSig\times\dSig$. Furthermore, if $q_1\in \dSig$ and $\Gamma_2\subset \dSig$ is closed then   for all $q_1'$ s.t.~$q_1 \dot\sim q_1'$,
 \beq\label{eq:opp1}
  (S_{q_1}\times \Gamma_2)\cap \wf'_\b(\Lambda)=\emptyset \implies  (S_{q_1'}\times \Gamma_2)\cap \wf'_\b(\Lambda)=\emptyset.
  \eeq
 Similarly, if $q_2\in \dSig$ and $\Gamma_1\subset \dSig$ is closed then    for all  $q_2'$ s.t.~$q_2 \dot\sim q_2'$,
 \beq\label{eq:opp2}
  (\Gamma_1\times S_{q_2})\cap \wf'_\b(\Lambda)=\emptyset \implies  (\Gamma_1\times S_{q_2'}) \cap \wf'_\b(\Lambda)=\emptyset.
 \eeq
 \end{theorem}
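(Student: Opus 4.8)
The plan is to transfer the distributional propagation of singularities statement (Theorem \ref{propagation}) to the operatorial setting by testing $\Lambda$ against suitable distributions. First I would establish $\wf'_\b(\Lambda)\subset\dSig\times\dSig$: given $(q_1,q_2)$ with, say, $q_1\notin\dSig$, pick $B_1\in\Psi^0_\b$ elliptic at $q_1$ and microsupported away from $\dSig$; since $D\circ\Lambda=0$, for any $g$ the distribution $u=\Lambda B_2^* g$ satisfies $Du=0$, so by Theorem \ref{propagation} $\wf_\b^{1,\infty}(u)\subset\dSig$, whence $B_1 u\in H^{1,\infty}_{\b,\loc}$; a uniform-boundedness/closed-graph argument upgrades this to continuity $B_1\Lambda B_2^*:\Hcminfd\to\Hlinf$. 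Symmetrically one handles $q_2\notin\dSig$ using $\Lambda\circ D=0$ and the adjoint. (This is where one uses regularity of $\Lambda$ to know all the intermediate maps are well-defined on the full scale of $\b$-Sobolev spaces.)

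Next, for the propagation statement \eqref{eq:opp1}, fix $q_1\dot\sim q_1'$ and a closed $\Gamma_2\subset\dSig$ with $(S_{q_1}\times\Gamma_2)\cap\wf'_\b(\Lambda)=\emptyset$. The goal is to produce, for each $q\in S_{q_1'}$, an operator $B_1$ elliptic at $q$ with $B_1\Lambda B_2^*$ regularizing whenever $B_2$ is microsupported near a point of $\Gamma_2$ (a partition-of-unity/compactness argument on $\Gamma_2$ then patches this into the required continuity, via Lemma \ref{emptyWF} applied after localizing both sides). By hypothesis there is $B_1^0$ elliptic on a neighborhood of the whole diffractive spread $S_{q_1}$ such that $B_1^0\Lambda B_2^*:\Hcminfd\to\Hlinf$ is continuous, i.e. for every $g$ supported near the base point of $q_2$, $u=\Lambda B_2^* g$ satisfies $S_{q_1}\cap\wf_\b^{1,\infty}(u)=\emptyset$. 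Since $Du=0$, Theorem \ref{propagation} gives $S_{q_1'}\cap\wf_\b^{1,\infty}(u)=\emptyset$, hence $B_1 u\in H^{1,\infty}_{\b,\loc}$ for any $B_1$ elliptic at $q\in S_{q_1'}$. Again a closed-graph argument converts this pointwise-in-$g$ statement into continuity of $B_1\Lambda B_2^*$, giving $(S_{q_1'}\times\{q_2\})\cap\wf'_\b(\Lambda)=\emptyset$; ranging over $q_2\in\Gamma_2$ and using compactness yields \eqref{eq:opp1}. The statement \eqref{eq:opp2} follows by the same reasoning applied to $\Lambda^*$ (which is again a regular bi-solution, since $D$ is formally self-adjoint up to the conventions relating $D$ and $D^*$), exchanging the roles of the two factors.

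The main obstacle, as in the analogous arguments of \cite{holographic,gannotwrochna}, is the bookkeeping needed to pass from the \emph{pointwise} conclusion ``$B_1\Lambda B_2^* g$ is smooth for each fixed test datum $g$'' to the \emph{uniform} conclusion ``$B_1\Lambda B_2^*$ is continuous as a map of $\b$-Sobolev spaces.'' This requires a closed-graph (or uniform boundedness) argument carried out carefully with the $\b$-Sobolev scale and the compact/local support subscripts, checking that one may choose the microlocalizers $B_1,B_2$ with controlled operator wavefront sets and that the localization in the base (cutting off to the base projection of the relevant diffractive bicharacteristic segments, exactly as in the proof of Theorem \ref{propagation}) is compatible with the support conditions defining $\Hcminfd$ and $\Hlinf$. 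A secondary point to watch is that $\Gamma_2$ is only closed, not compact, so the patching must be done locally in the base — but since the operatorial $\b$-wavefront set condition is itself local in the base via the $B_i$, this causes no real difficulty.
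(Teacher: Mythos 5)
Your proposal is correct and follows essentially the same route as the paper: apply the distributional propagation theorem (Theorem \ref{propagation}) to $\Lambda B_2^* v$ to handle the first slot, and pass to adjoints to handle the second. The only difference is bookkeeping: where you invoke a closed-graph/uniform-boundedness upgrade from the pointwise statement, the paper instead applies the \emph{uniform estimate} form of Theorem \ref{propagation} directly to bounded subsets $\cU\subset\Hcminfd$ (legitimate since the positive-commutator proof yields quantitative bounds), which sidesteps the functional-analytic subtleties of closed-graph theorems on these LF/Fr\'echet scales.
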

\proof Let us  show  \eqref{eq:opp1}. As the statement is microlocal, without loss of generality we can suppose that the projection of $\Gamma_2$ to the base manifold is compact.  Suppose $(S_{q_1}\times \Gamma_2)\cap \wf'_\b(\Lambda)=\emptyset$. Then there exist compactly supported $B_1,B_2\in \Psi^0_\b(M)$ elliptic on respectively $S_{q_1}$, $\Gamma_2$, such that for any bounded $\cU\subset \Hcminfd$, $B_1 \Lambda B_2^* \cU$ is uniformly bounded in  $\Hlinf$. We now apply propagation of singularities to $\Lambda B_2^* v$ for all $v\in \cU$ (in the form of the uniform estimate obtained in the proof of Theorem \ref{propagation}). The hypotheses are verified because $\Lambda B_2^* v$ is in $\dHlminf$  using that  $\Lambda$ is regular. As a result, for each point of $S_{q_1'}$ we can find $B_1'\in \Psi^0_\b(M)$ elliptic at that point, such that $B_1' \Lambda B_2^* \cU$ is uniformly bounded in  $\Hlinf$, which shows   \eqref{eq:opp1}.

We now show  \eqref{eq:opp2}. If $(\Gamma_1 \times S_{q_2})\cap \wf'_\b(\Lambda)=\emptyset$ for some compact $\Gamma_1$, and all compactly supported $B_1,B_2\in \Psi^0_\b(M)$ elliptic on respectively, $\Gamma_1$, $S_{q_2}$, with sufficiently small microsupport,  $B_1 \Lambda B_2^* :  \Hcminfd\to \Hlinf$. By taking adjoints
\beq\label{eq:blb}
 B_2  \Lambda^* B_1^*  :  H_{\rm b,c}^{-1,-\infty}(M;\cc^4) \to H_{\rm b,c}^{1,\infty}(M;\cc^4).
\eeq
We  apply propagation of singularities to $\Lambda^* B_1^* v$ for $v$ in a bounded  subset  $\cU\subset H_{\rm b,c}^{-1,-\infty}$ similarly as before,   and   deduce  the existence of $B_2'$ elliptic at any chosen point of $S_{q_2'}$  such that 
$$
B_2'  \Lambda^* B_1^* : H_{\rm b,c}^{-1,-\infty}(M;\cc^4) \to H_{\rm b,c}^{1,\infty}(M;\cc^4).
$$
By taking adjoints,
$$
B_1 \Lambda (B_2^{'})^* : H_{\rm b,c}^{-1,-\infty}(M;\cc^4) \to H_{\rm b,c}^{1,\infty}(M;\cc^4).
$$
This implies \eqref{eq:opp2}.
\qed

\section{Singularities of two-point functions}\label{sec3}

\subsection{Hadamard condition}\label{ss:had} Let us first recall one version of the Hadamard condition, formulated here as a regularity condition \emph{away from the $r=0$ singularity}  (in fact, we do not expect to control the direct  analogue of that condition on the whole of $\rr^{1,3}$). 

\bed We say that a pair of bounded operators $\Lambda^\pm:\ccf(M^\inti;\cc^4)\to\cD'(M^\inti;\cc^4)$ is \emph{Hadamard} if they satisfy (the \emph{Hadamard condition})
\beq\label{had}
\wf'(\Lambda^\pm)\subset \Sig^\pm\times \Sig^\pm,
\eeq
where $\wf'$ is the primed wavefront restricted to $S^*M^\inti \times S^*M^\inti$ and $\Sigma^\pm=\Sigma\cap \{ \pm\tau>0\}$ are the two connected components of the characteristic set $\Sigma$. 
\eed

Note that on $S^*M^\inti \times S^*M^\inti \simeq (T^*M^\inti\setminus \zero)/\RR_+\times (T^*M^\inti\setminus \zero)/\RR_+$, the above definition of $\wf'$ coincides with the standard definition, which refers to the  wavefront set of the Schwartz kernel. In contrast, however, it does not say anything about possible singularities located at $\zero\times T^* M^\inti$ or $T^* M^\inti \times \zero$. In practice, this type of information is replaced by the regularity assumption $\Lambda^\pm:\ccf(M^\inti;\cc^4)\to\cD'(M^\inti;\cc^4)$.

We stress that in the presence of a singular potential, it is not immediately clear which formulation of the Hadamard condition is adequate, even away from the singularity.  Here, we use the Hadamard condition in the form \eqref{had}, following \cite{Sahlmann,Hollands2001a}.  If the potentials are smooth at $r=0$ as opposed to having a Coulomb singularity, and if $\Lambda^\pm$ are two-point functions on $\rr^{1,3}$ (or on a more general globally hyperbolic spacetime), then  condition \eqref{had} on $\rr^{1,3}$ is equivalent to the more precise (and most commonly used) statement
$$
\wf'(\Lambda^\pm) \subset   (\Sigma^\pm\times\Sigma^\pm)\cap \{ (q_1,q_2) \st q_1 \sim q_2 \},
$$
where $q_1\sim q_2$ means that $q_1$ is connected with $q_2$ by a bicharacteristic in the usual (non-diffractive) sense. This equivalence needs however not be true in the presence of a Coulomb singularity because of diffractive bicharacteristics that cross $r=0$. The correct replacement is given later on in Proposition \ref{preciseWF}.

In practice, to control the Hadamard condition \eqref{had} we  need to consider the operator $\b$-wavefront set on the whole of $M$.
 
 In what follows we will not distinguish between $\Sigma \cap \{ r\neq0\}$ and $\dSig\cap\{r\neq 0\}$ in the notation (recall that they are identified by a canonical isomorphism).
 
 \begin{proposition}\label{prop:stronghad}  If  $\Lambda^\pm$ is a pair of regular  bi-solutions, then it is Hadamard if and only if
 \beq\label{bhad}
 \wf'_\b(\Lambda^\pm)\subset \dSig^\pm \times \dSig^\pm.
 \eeq
 \end{proposition}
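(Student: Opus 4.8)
The plan is to prove the two implications separately, exploiting the fact that the operatorial $\b$-wavefront set restricts, on $M^\inti \times M^\inti$, to the classical primed wavefront set of the Schwartz kernel. First I would record the elementary direction: if \eqref{bhad} holds, then intersecting with $(\be T^*M^\inti \times \be T^*M^\inti)$ and using the canonical identification $\be T^*M^\inti \simeq T^*M^\inti$ together with $\dSig \cap \{r\neq 0\} = \Sig \cap \{r\neq 0\}$, one obtains $\wf'(\Lambda^\pm) \subset \Sig^\pm \times \Sig^\pm$ away from $r=0$, which is precisely the Hadamard condition \eqref{had}. This requires only that the notions of $\b$-regularity and ordinary regularity agree in the interior (already invoked in \S\ref{ss:prop}) and that $B_i \in \Psi^0_\b(M)$ elliptic at an interior point can be taken to be an ordinary pseudodifferential operator there.

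The substantive direction is the converse: assuming $\Lambda^\pm$ is Hadamard and a regular bi-solution, deduce the global bound \eqref{bhad} on $\be T^*M \times \be T^*M$, including the part over $\{r=0\}$. The strategy is to use the operator propagation of singularities theorem, Theorem~\ref{opp}. By Theorem~\ref{opp}, $\wf'_\b(\Lambda^\pm) \subset \dSig \times \dSig$, and moreover $\wf'_\b(\Lambda^\pm)$ is invariant (in each factor separately) under the diffractive relation $\dot\sim$ and the diffractive spreads $S_q$. The key observation is that every point of $\dSig$ over $\{r=0\}$ lies on a diffractive bicharacteristic that continues into $\be T^*M^\inti = T^*M^\inti$: a bicharacteristic through a point $(0,\theta,t,0,0,\tau)$ with $\tau\neq 0$ either comes in from $r>0$ or leaves to $r>0$ (cases \eqref{def:db2}--\eqref{def:db3} of Definition~\ref{def:db}). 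So given $(q_1,q_2) \in \wf'_\b(\Lambda^\pm)$ with, say, $q_1$ over $\{r=0\}$, I would propagate $q_1$ backward or forward along a diffractive bicharacteristic to a point $q_1' \in \dSig \cap \{r\neq 0\}$; by \eqref{eq:opp1} (applied with $\Gamma_2$ a small closed neighborhood of $q_2$, shrinking as needed), $(S_{q_1'} \times \Gamma_2) \cap \wf'_\b(\Lambda^\pm) \neq \emptyset$. Now $S_{q_1'}$ consists of directions in the same energy component $\dSig^\pm$ as $q_1$, so non-emptiness of this intersection at an interior point $q_1'$, combined with the Hadamard hypothesis $\wf'(\Lambda^\pm) \subset \Sig^\pm \times \Sig^\pm$, forces $q_1 \in \dSig^\pm$ and $q_2 \in \dSig^\pm$. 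Iterating the same argument in the second factor disposes of the case where $q_2$ also lies over $\{r=0\}$, and the case where both $q_1,q_2$ are interior is immediate. Thus $\wf'_\b(\Lambda^\pm) \subset \dSig^\pm \times \dSig^\pm$.

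The main obstacle I anticipate is the bookkeeping around the energy components: one must check that the sign of $\tau$ is genuinely constant along diffractive bicharacteristics (which is guaranteed by the continuity requirement in part \eqref{def:db3} of Definition~\ref{def:db}, as emphasized in the text after that definition: neither $t$ nor $\tau$ jumps at $r=0$), and that the diffractive spread $S_{q_1'}$ stays within a single $\dSig^\pm$ — this is exactly the content of the remark that $\sgn\tau' = \sgn\tau$ in the definition of $S_q$. A second, more technical point is to make the uniform/microlocal version of Theorem~\ref{opp} interact correctly with the fact that $\Gamma_2$ must be closed with compact base projection; here one localizes $q_2$ by a small closed conic neighborhood and uses that $\wf'_\b(\Lambda^\pm)$ is closed, so if $q_1 \notin \dSig^\pm$ one gets an open neighborhood on which the wavefront set is empty, contradicting its known invariance under $S_{q_1'} \leftrightarrow S_{q_1}$. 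Apart from this, the argument is a direct application of the machinery already assembled, and I would expect the write-up to be short once the interior-identification and the component-tracking are stated cleanly.
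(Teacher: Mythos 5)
Your proposal is correct and follows essentially the same route as the paper: the easy direction is by restriction to the interior, and the substantive implication is obtained by connecting each boundary point of $\dSig$ to an interior point along a diffractive bicharacteristic, invoking Theorem~\ref{opp} together with the constancy of $\sgn\tau$ along diffractive bicharacteristics and spreads, and doing a case analysis on which of $q_1,q_2$ lies over $r=0$. The only difference is presentational: the paper propagates \emph{emptiness} of $\wf'_\b(\Lambda^\pm)$ on product sets of the form $S_{q'}\times(\dSig\cap\{r\neq 0\})$ directly, whereas you run the contrapositive, which is why you need the extra (correctly identified) bookkeeping with shrinking closed neighborhoods $\Gamma_2$ and the closedness of the wavefront set.
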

\proof For any $q\in \dSig^\mp\cap \{ r=0\}$, we can find $q'\in \Sig^\mp \cap \{r\neq 0\}$ such that $q'\dot\sim q$. Since $S_{q'}\subset \Sig^\mp$, the Hadamard condition implies
$$
(S_{q'}\times (\dSig\cap\{ r\neq 0 \}))\cap \wf'_\b(\Lambda^\pm)=\emptyset=  (\dSig\cap \{ r\neq 0 \}\times S_{q'})\cap \wf'_\b(\Lambda^\pm)
$$
By Theorem \ref{opp} (operator version of propagation of singularities), this yields
$$
(\{q\}\times (\dSig\cap  \{ r\neq 0 \}))\cap \wf'_\b(\Lambda^\pm)=\emptyset=  ((\dSig\cap  \{ r\neq 0 \})\times \{q\})\cap \wf'_\b(\Lambda^\pm)
$$
Let now $(q_1,q_2)\in  \dot \Sigma^\mp \times \dot \Sigma^\mp$. Then either none of the points $q_1,q_2$ lie over $r=0$ (then there is nothing to prove by the Hadamard condition \eqref{had}), or only $q_1$ or $q_2$  lies over $r=0$ and we can apply the above argument combined with the Hadamard condition to get $(q_1,q_2)\notin  \wf'_\b(\Lambda^\pm)$, or else both $q_1,q_2$ lie over $r=0$ and then we can apply the argument twice, first to $q_1$ and then to $q_2$  to obtain $(q_1,q_2)\notin  \wf'_\b(\Lambda^\pm)$. For points not in $\dot \Sigma \times \dot \Sigma$ it suffices to invoke the elliptic estimate statement in Theorem \ref{opp}. The opposite direction is evident.
\qed 
\smallskip

We will show that the differences of regular bi-solutions have a
well-defined on-diagonal restriction, and hence a well-defined trace
density. We start by recalling an elementary fact about mapping
properties of the Mellin transform $\cM$, defined here with the normalization
$$
(\cM f)(\xi) = \int_0^\infty f(r) r^{-i \xi - 1} dr.
$$

\bel \label{lem:M}
If $\cM f(\xi)$ is holomorphic and $O(\bra \xi\ket^{-s})$ in $\Im \xi >- \epsilon$ for some $\epsilon >0$ and $s>1$, then $f=O(1)$ near $r=0$, with
$$
\sup_{r\in \open{0,1}}\module{f(r)} \leq \frac{1}{2\pi} \int_{-\infty}^{\infty} | \cM f(\xi)| d\xi.
$$
\eel
\proof We write $f$ as the inverse Mellin transform
$$
f(r)= \frac{1}{2\pi}\int_{-\infty}^{\infty} (\cM f)(\xi) r^{i\xi} d\xi
$$
and bound the integral accordingly.
\qed

\begin{proposition}\label{cor1} Suppose $\Lambda^\pm$ and $\tilde\Lambda^\pm$ are regular  bi-solutions that satisfy the Hadamard condition, and 
\beq\label{lambdaaa}
\Lambda^{+}-\tilde\Lambda^{+}=-(\Lambda^{-}-\tilde\Lambda^{-}).
\eeq 
Then $\Lambda^\pm - \tilde\Lambda^\pm :   \Hcminfd\to \dHlinf$ continuously.
\end{proposition}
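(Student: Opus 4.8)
The plan is to show that the single operator $R:=\Lambda^{+}-\tilde\Lambda^{+}$, which by \eqref{lambdaaa} also equals $-(\Lambda^{-}-\tilde\Lambda^{-})$, has empty operatorial $\b$-wavefront set; then Lemma~\ref{emptyWF} immediately gives $R:\Hcminfd\to\dHlinf$ continuously, and since $\Lambda^{-}-\tilde\Lambda^{-}=-R$ the statement follows with both signs.

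First I would check that $R$ and $R^{-}:=\Lambda^{-}-\tilde\Lambda^{-}$ are again regular bi-solutions: regularity in the sense of Definition~\ref{def:reg} and the bi-solution identities $D\circ(\cdot)=0$, $(\cdot)\circ D=0$ are linear conditions, hence stable under subtraction. Likewise, by subadditivity of the interior primed wavefront set, the pair $(R,R^{-})$ still satisfies the Hadamard condition \eqref{had}, i.e.\ $\wf'(R)\subset\Sig^{+}\times\Sig^{+}$ and $\wf'(R^{-})\subset\Sig^{-}\times\Sig^{-}$. I would then invoke Proposition~\ref{prop:stronghad} — applied separately to $R$ and to $R^{-}$ — to upgrade these interior statements to the global $\b$-wavefront bounds $\wf'_\b(R)\subset\dSig^{+}\times\dSig^{+}$ and $\wf'_\b(R^{-})\subset\dSig^{-}\times\dSig^{-}$. (This is the step where the real content lives, but it is already packaged: Proposition~\ref{prop:stronghad} rests on the operator propagation-of-singularities result, Theorem~\ref{opp}, which in turn encodes the diffractive propagation across $r=0$.)

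The key observation closing the argument is that $R=-R^{-}$, so $\wf'_\b(R)=\wf'_\b(R^{-})$ and therefore
$$
\wf'_\b(R)\subset(\dSig^{+}\times\dSig^{+})\cap(\dSig^{-}\times\dSig^{-})=\emptyset,
$$
the last equality because $\dSig^{+}=\dSig\cap\{\tau>0\}$ and $\dSig^{-}=\dSig\cap\{\tau<0\}$ are disjoint. With $R$ regular and $\wf'_\b(R)=\emptyset$, Lemma~\ref{emptyWF} gives $R:\Hcminfd\to\dHlinf$, as wanted.

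I do not expect a serious obstacle here: this is a ``cancellation of singularities'' argument in the spirit of the analogous scalar facts in \cite{holographic,gannotwrochna}, and the only points needing a little care are the bookkeeping that differences of regular bi-solutions are regular bi-solutions and that the interior Hadamard condition \eqref{had} is preserved under subtraction, so that Proposition~\ref{prop:stronghad} may legitimately be applied to $R$ and $R^{-}$. The genuinely nontrivial input — excluding a $\b$-wavefront set that lives in the ``wrong-energy'' region over $r=0$ — has already been absorbed into Proposition~\ref{prop:stronghad}.
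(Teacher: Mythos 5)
Your proposal is correct and follows essentially the same route as the paper: use the Hadamard condition together with Proposition~\ref{prop:stronghad} to place the $\b$-wavefront set of the difference in $\dSig^{+}\times\dSig^{+}$ and, via \eqref{lambdaaa}, also in the disjoint set $\dSig^{-}\times\dSig^{-}$, hence it is empty, and conclude with Lemma~\ref{emptyWF}. The only (harmless) cosmetic difference is that you apply subadditivity to the interior wavefront sets before invoking Proposition~\ref{prop:stronghad} on the differences, whereas the paper applies Proposition~\ref{prop:stronghad} to $\Lambda^{\pm}$, $\tilde\Lambda^{\pm}$ individually and then uses subadditivity of the operatorial $\b$-wavefront set (Lemma~\ref{wfs}).
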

\proof  The $\b$-wave front set of the l.h.s.~of \eqref{lambdaaa} is contained in $\dSig^+\times  \dSig^+$, and the $\b$-wave front set of the RHS is contained in $\dSig^-\times  \dSig^-$, hence the two are disjoint. Thus, both sides of \eqref{lambdaaa} have in fact empty $\b$-wave front set, and thus 
\beq\label{eq:regu}
\Lambda^{\pm}-\tilde\Lambda^{\pm}:    \Hcminfd\to \dHlinf
\eeq
is bounded by Lemma \ref{emptyWF}. \qed

Next we show that for the existence of a trace density we need slightly weaker mapping properties than \eqref{eq:regu}

\begin{proposition}\label{prop:td} If $\Lambda:
  H_{\b,\c}^{0,-\infty}(M;\cc^4)\to \dHlinf$ continuously then
  $\Lambda$ has a locally integrable trace density. 
\end{proposition}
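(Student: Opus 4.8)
The plan is to exploit the conormal regularity of $\Lambda$ together with a Mellin transform argument in the radial variable, in the spirit of Lemma \ref{lem:M}. Write $K(t,\bar x,\bar x')$ for the Schwartz kernel of $\Lambda$; we must show that the diagonal restriction $K(t,\bar x,\bar x)$ exists as an element of $L^1_{\rm loc}(\rr^{1,3})$ and is smooth in $t$. First I would localize: since the problem is local and the only difficulty is at $r=0$, away from the spatial origin the mapping $H^{0,-\infty}_{\b,\c}\to \dHlinf$ is an ordinary smoothing property (the $\b$-structure is trivial there), so the kernel is smooth on $\rr^{1,3}\setminus(\rr_t\times\{0\})$; hence it suffices to produce an $L^1$ bound on $|K(t,\bar x,\bar x)|$ for $\bar x$ in a small punctured ball $\{0<|\bar x|<1\}$, uniformly (and smoothly) in $t$ on compact time intervals.

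Next I would extract the diagonal restriction from the mapping property. The hypothesis $\Lambda:H^{0,-\infty}_{\b,\c}\to H^{1,\infty}_{\b,\loc}$ means that, after conjugating by an invertible elliptic $\b$-pseudodifferential operator $\Upsilon\in\Psi^{-N}_\b(M)$ of large negative order, $\Upsilon'\Lambda\Upsilon:H^{0,0}_{\b,\c}=L^2\to H^{1,N}_{\b,\loc}$ for any fixed $N$, so the twisted kernel lies in spaces with arbitrarily high $\b$-conormal regularity, i.e.\ is conormal of infinite order with respect to $r\p_r,\p_t,\p_\theta$ and additionally has one order of ordinary $H^1$ regularity. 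Passing to the on-diagonal kernel, one obtains that $K(t,\bar x,\bar x)$ is — at each fixed $t$ — a function on $X^\inti=\open{0,\infty}_r\times\ss^2_\theta$ all of whose derivatives $(r\p_r)^j\p_\theta^\alpha$ remain controlled near $r=0$ after a fixed loss, with analogous statements for $\p_t$-derivatives. Concretely the claim reduces to: a distribution $g(r,\theta)$ on $X^\inti$ which is conormal of high finite order relative to the $\b$-vector fields $r\p_r,\p_\theta$ (with some polynomially-bounded order in the dual Mellin/angular variables, matched so that it decays like $\bra\xi\ket^{-s}$ with $s>1$) is $O(1)$, hence $L^1_{\rm loc}$, near $r=0$.

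The mechanism for that last reduction is exactly Lemma \ref{lem:M}: taking the Mellin transform $\cM$ in $r$ and a Fourier/Laplace decomposition in $\theta$, the $\b$-conormality of order $s$ translates into holomorphy of $\cM g(\xi,\cdot)$ in a strip $\Im\xi>-\epsilon$ together with the decay $O(\bra\xi\ket^{-s})$ (summably in the angular index, after using that a few extra orders of $\b$-regularity were available), and Lemma \ref{lem:M} then gives the pointwise bound $\sup_{0<r<1}|g(r,\theta)|\le \frac{1}{2\pi}\int|\cM g(\xi,\theta)|\,d\xi$; integrating in $\theta\in\ss^2$ and noting that $dr\,d\theta$ pushes forward under the blowdown $b$ to $r^2\,dr\,d\theta=|\bar x|^2 d\bar x$, which only improves integrability, yields $K(t,\cdot,\cdot)\in L^1_{\rm loc}(\rr^3)$. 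Smoothness in $t$ follows by running the same argument with $\p_t^k\Lambda$ (equivalently, noting $\p_t^k$ is a $\b$-differential operator so $\p_t^k K(t,\bar x,\bar x)$ satisfies the same bounds uniformly on compact $t$-intervals).

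I expect the main obstacle to be bookkeeping: carefully tracking how the continuity estimate $\Lambda:H^{0,-\infty}_{\b,\c}\to H^{1,\infty}_{\b,\loc}$ — which a priori controls $\Lambda$ on each $H^{0,-m}_{\b,\c}\to H^{1,M}_{\b,\loc}$ separately — translates into genuine decay of the Mellin transform of the \emph{on-diagonal} kernel with a rate $s>1$ (rather than merely $O(1)$). This requires using at least two spare orders of $\b$-regularity, one to gain the $s>1$ Mellin decay in the radial variable and one to control the sum over angular modes on $\ss^2$, and making sure the restriction-to-the-diagonal map is continuous on the relevant $\b$-Sobolev spaces (a standard but slightly delicate Sobolev trace/pullback estimate in the $\b$-setting). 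Everything else is routine once this is set up.
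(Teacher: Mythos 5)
Your overall strategy (Mellin transform in $r$, spherical harmonics in $\theta$, rapid decay in the dual variables from infinite-order $\b$-conormality, then Lemma \ref{lem:M}) is the same as the paper's, but there is a genuine gap in the key reduction. You claim that the hypothesis forces the on-diagonal kernel to be $O(1)$ near $r=0$, with local integrability then following trivially. That is not what the mapping property gives, and your proposed mechanism cannot deliver it. Conormal regularity with respect to $r\p_r$ and $\p_\theta$ controls the \emph{decay of the Mellin transform along horizontal lines} $\Im\xi=\mathrm{const}$ (and in the angular index), but it says nothing about the \emph{strip of holomorphy}, which is what determines the power of $r$ in the pointwise bound. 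That strip is dictated by the weighted spaces between which $\Lambda$ acts: the paper pairs $V_1\cdots V_N\Lambda$ against $r^sY_{lm}$ with $\Re s>-3/2$ (the threshold for $r^s\in L^2_{\rm loc}$ with respect to $r^2\,dr$) on the input side and $r^{s'}Y_{l'm'}$ with $\Re s'>-5/2$ (the threshold for $H^{-1}_{\rm loc}$, dual to the $H^1$ target) on the output side. Running Lemma \ref{lem:M} in each variable then yields only $|K(r_1,r_2,\cdot)|\lesssim r_1^{-3/2-\delta}r_2^{-1/2-\delta}$, i.e.\ $O(r^{-2-2\delta})$ on the diagonal --- consistent with the $O(r^{-2-\epsilon})$ asserted in Theorem \ref{thm2}, and far from $O(1)$. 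Local integrability is then \emph{not} automatic: it holds only because the volume form is $r^2\,dt\,dr\,d\theta$, so that $r^{-2-2\delta}\cdot r^2=r^{-2\delta}$ is integrable for $\delta<1/2$. In your write-up the Jacobian factor appears only as something that ``only improves integrability,'' whereas it is the essential ingredient; without it the bound you can actually prove does not give an $L^1_{\rm loc}$ trace density.

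Two smaller points. First, conjugating by $\Upsilon\in\Psi^{-N}_\b$ changes the kernel, and undoing the conjugation reintroduces exactly the radial weights you are trying to ignore; the paper avoids this by working directly with the pairings and integrating by parts with $\b$-vector fields, which cleanly separates the weight bookkeeping (holomorphy strips) from the regularity bookkeeping (decay in $\xi$ and in $l$). Second, no $\b$-Sobolev trace theorem is needed: once one has the bound $C_N\bra l\ket^{-N}\bra l'\ket^{-N}\bra\zeta\ket^{-N}\bra\zeta'\ket^{-N}$ on the double Mellin--spherical-harmonic coefficients, the diagonal restriction is obtained by summing the expansion pointwise. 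If you repair the weight accounting as above, the rest of your outline goes through and coincides with the paper's argument.
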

\proof Since the statement is local we can without loss of generality assume that $\Lambda:   H_{\b,\c}^{0,-\infty}\to \dHcinf$.  The $\b$-regularity means that for any $N$ and choices of vector fields $V_1,\dots,V_N$ with $V_j \in \pi_i^* \cV_\b(M)$,  $i=1,2$ ($\pi_i$ being the projection on the $i$-th component of $M\times M$), the operator $\Lambda_V$ with Schwartz kernel
$$
V_1 \dots V_N  \Lambda(.,.) 
$$
still has the property $\Lambda_V: L^2(M;\cc^4)\to H^{1}_{\rm loc}(M;\cc^4)$. Recall that the volume form in our coordinates is $r^2 dt\, dr \, d\theta$. Since $r^{s}\in L^2_{\rm loc}(M)$ for $\Re {s}>-3/2$ and $r^{s'}\in H_{\rm loc}^{-1}(M)$ for $s'>-5/2$, if we choose any pair of spherical harmonics $Y_\bullet$ on $\ss^2$ we obtain
$$
\big| \bra \Lambda_V r^{s} Y_{lm}, r^{s'} Y_{l'm'} \ket \big|\leq C < \infty, \quad \Re s > - 3/2, \ \Re s' > - 5/2
$$
with the constant $C$ uniform in $s,s'$ provided we restrict to strictly smaller half-planes, and also uniform in the orders $m,l,m',l'$ of the spherical harmonics. Thus for $\Re s > -3/2$, $\Re s'>-5/2$, 
$$
\big\lvert \int  V_1 \dots V_N \Lambda(r_1,r_2,\theta_1,\theta_2)
r_1^{s} r_2^{s'} Y_{lm}(\theta_1) Y_{l'm'}(\theta_2) \, dt_1 \, dt_2
\, r_1^2 dr_1 \,  r_2^2 dr_2 \,d\theta_1 \, d\theta_2 \big \rvert \leq C.
$$
Now we interpret the integrals in $r_1, r_2$ as Mellin transforms in $r_1,r_2$ respectively, denoting these maps as $\cM_{(1)},\cM_{(2)}$.  Thus our estimate is
$$
\big\lvert \int (\cM_{(1)} \cM_{(2)}  V_1 \dots V_N \Lambda)(is+3i, i s' + 3i,\theta_1,\theta_2)Y_{lm}(\theta_1) Y_{l'm'}(\theta_2) dt_1\,dt_2\, d\theta_1 \, d\theta_2 \big\rvert\leq C,
$$
again for $\Re s >-3/2$, $\Re s' >-5/2$, and the l.h.s.~is holomorphic
in $s,s'$ in these half-spaces. Moreover by taking $V_j$ to be vector
fields of the form $\p_{t_1},\p_{t_2},\p_{\theta_1},\p_{\theta_2}$,
and integrating by parts so that the $\theta_1,\theta_2$ derivatives
fall onto the spherical harmonics, the uniform boundedness now shows
that for all $N\in \nn$,  and $(t_1,t_2)$ ranging over a compact set,
$$
\big\lvert\int   (\cM_{(1)} \cM_{(2)} \Lambda)(is+3i, i s' +
3i,\theta_1,\theta_2)Y_{lm}(\theta_1) Y_{l'm'}(\theta_2)  d\theta_1 \,
d\theta_2 \big\rvert\leq C_N \bra l \ket^{-N} \bra l' \ket^{-N}
$$ 
 (with $l(l+1)$ being the Laplace eigenvalue of $Y_{lm}$); here we
 have employed the estimate on the vector fields $\pa_{t_1}$,
 $\pa_{t_2}$ to obtain a (locally) uniform estimate in
 $t_1,t_2$. Further integration by parts  with the $\b$-vector fields
 $V_j$ chosen to be $r_1 \p_{r_1}$, $r_2 \p_{r_2}$ improves the
 estimate to  
\beq\label{eq:MM}
\bea
\big\lvert \int   (\cM_{(1)} \cM_{(2)} \Lambda)(is+3i, i s' + 3i,\theta_1,\theta_2)Y_{lm}(\theta_1) Y_{l'm'}(\theta_2) d\theta_1 \, d\theta_2 \big\rvert \\ \leq C_N 
 \bra l \ket^{-N} \bra l' \ket^{-N}  \bra s \ket^{-N} \bra s' \ket^{-N},  \quad  \Re s > -3/2, \ \Re s'>-5/2.\eea
\eeq

Applying Lemma \ref{lem:M}  to the l.h.s.~of \eqref{eq:MM} (and
recalling that $\cM(r^k f)(\xi)=\cM f(\xi-i k)$) shows that since for
any $\delta>0$,
$$
\int (\cM_{(1)} \cM_{(2)} r^{3/2+\delta}_1 r_2^{1/2+\delta}) \Lambda(\zeta,\zeta',\theta_1,\theta_2) Y_{lm}(\theta) Y_{l' m'}(\theta_2)\, d\theta_1\, d\theta_2 
$$
is holomorphic in $\Im \zeta, \Im \zeta'>-\delta$, and bounded by
$$
C_N \bra l \ket^{-N} \bra l'\ket^{-N} \bra \zeta\ket^{-N}\bra \zeta'\ket^{-N},
$$
then in fact 
$$
\int  r^{3/2+\delta}_1 r_2^{1/2+\delta} \Lambda(t_1,t_2,r_1,r_2,\theta_1,\theta_2) Y_{lm}(\theta_1) Y_{l'm'}(\theta')d\theta\,d\theta'=O(\bra l\ket^{-N}) O(\bra l'\ket^{-N})
$$
for all $\delta>0$.
Summing over spherical harmonics (taking $N$ sufficiently large to give convergence) then shows that
$$
\big| r_1^{3/2+\delta} r_2^{1/2+\delta} \Lambda(t_1,t_2,r_1,r_2,\theta_1,\theta_2)\big|\leq C.
$$
Restricting to the diagonal  gives
$$
\Lambda|_{\rm diag} (t,r,\theta) = O(r^{-2-2\delta}) \in L_{\rm loc}^1({\rm diag}; r^2\, dt\, dr\, d\theta)
$$
as asserted, provided we choose $2\delta<1$.
\qeds

We conclude that if $\Lambda^\pm$ are a Hadamard pair then they have a well-defined \emph{relative trace density} (relative to some other  Hadamard $\tilde\Lambda^\pm$). 

\subsection{Two-point functions}  

Let $t_0\in \rr$ be some reference time. In the terminology commonly used for fermions, a \emph{quasi-free state} is determined by its \emph{density matrix}\footnote{This is not to be confused with the $\gamma$ matrices that enters the definition of Dirac operators, see \S\ref{s:appA1}.} at time $t_0$, i.e.~by a bounded operator $\gamma$ on $L^2(X;\cc^4)$ such that
\beq
0 \leq \gamma(t_0) \leq \one.
\eeq

Its \emph{space-time two-point functions} are the pair of operators $\Lambda^\pm$ given by 
\beq\label{dlg1}
\bea
\Lambda^-(t_1,t_2)&=U(t_1,t_0)\gamma(t_0)  U(t_0,t_2), \\  \Lambda^+(t_1,t_2)&=U(t_1,t_0)(\one -\gamma(t_0)) U(t_0,t_2),
\eea
\eeq 
where Schwartz kernel notation is used in the time variable only, i.e.
$$
\big(\Lambda^- f\big)(t)=\int_{-\infty}^\infty U(t,t_0)\gamma(t_0)  U(t_0,t_2) f(t_2)dt_2
$$
and similarly for $\Lambda^+$.
 They satisfy
\beq
\Lambda^\pm \geq 0, \quad \Lambda^+ + \Lambda^- = S, 
\eeq
where $S$ is the operator with time Schwartz kernel $U(t_1,t_2)$.

\begin{lemma}\label{Sregular} The operator $S$ is a regular bi-solution, i.e.~$S: \Hcd{s}\to\Hl{s}$ continuously for all $s\in \rr$.
\end{lemma}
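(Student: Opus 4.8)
The plan is to show that $S$ maps $\Hcd{s} \to \Hl{s}$ continuously for every $s \in \rr$ by reducing the claim to the mapping properties of the homogeneous solution operator $T$ already established in Proposition \ref{lemma:negative} (for $s \leq 0$) and Lemma \ref{lemma:positive} (for $s \geq 0$), together with the well-posedness statements of Lemma \ref{lem:Cauchy} and the inhomogeneous statement Lemma \ref{lem:inhomog-unique}. The operator $S$ is the ``full'' space-time propagator, with time Schwartz kernel $U(t_1,t_2)$, so morally $Sf$ is the unique solution of $Du = f$ with no incoming data in the distant past; the content of the lemma is that this solution operator respects $\b$-conormal regularity on the full scale of Sobolev spaces indexed by $s$, including the subtle negative orders near $r=0$.

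First I would make the reduction to a \emph{forward} (or retarded) parametrix: writing $Sf = \int U(t,t_2) f(t_2)\, dt_2$, split the integral at the reference time and observe that, by the group property $U(t_1,t)U(t,t_2)=U(t_1,t_2)$ from Lemma \ref{lem:Cauchy}, $S$ agrees with the operator sending $f$ (supported in, say, $t \geq t_-$ after a harmless time-localization, since the statement is local and we may cut off in $t$) to the solution $u$ of $Du=f$ that vanishes for $t \ll 0$; on the static region $t \leq t_-''$ this solution is produced by Duhamel's formula with the spectral-theoretic evolution $e^{i(t-s)H(t_-)}$, which has the required mapping properties on the scale $\cD^s(t_-)$ by the spectral theorem. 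Concretely, for $f \in \Hcd{s}$ with $s = k \in \nn$, I would first produce a solution on $\opens{-\infty, t_-''}$ lying in $H^k_\loc(\rr;\cD)$-type spaces using the spectral theorem, then glue across $t=t_-'$ using a cutoff $\chi$ exactly as in the proof of Lemma \ref{lemma:positive}: one obtains a commutator source $[D,\chi]v$ which is smoother (conormal of order $k-1$, say) and compactly supported in $t$, and for which one can solve the Cauchy problem by Lemma \ref{lem:Cauchy} and bootstrap $t$-regularity by the energy estimate Lemma \ref{lemma:energy-est}, upgrading to $\b$-regularity via the elliptic Lemma \ref{lemma:elliptic-estimate}. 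For $s = -k < 0$ the same gluing works but the existence/uniqueness of the non-static piece is handed to us by the Hahn--Banach/Riesz argument underlying Lemma \ref{lemma:negativeorders} and Lemma \ref{lem:inhomog-unique}. Interpolation in the $\b$-regularity index then covers all real $s$, exactly as in Proposition \ref{lemma:negative}.

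To package this cleanly I would note that $Sf$ for $f$ supported in $t \geq t_-$ can be written as $Sf = T u_0 + (\text{inhomogeneous correction})$, where $u_0$ is the (zero) Cauchy data of the forward solution and the inhomogeneous correction is produced by the same machinery; so the regularity of $S$ follows term by term from that of $T$ (Lemma \ref{lemma:positive}, Proposition \ref{lemma:negative}) and from Lemma \ref{lem:inhomog-unique} applied in time-reversed form. Finally, that $S$ is a \emph{bi-solution} — $D \circ S = 0$ and $S \circ D = 0$ in the appropriate sense — is immediate from $\pa_t U(t,t_2) = iH(t)U(t,t_2)$ and $\pa_t U(t_1,t) = -U(t_1,t)iH(t)$ (Lemma \ref{lem:Cauchy}): differentiating under the integral sign gives $D(Sf) = f - f = 0$ after the boundary terms from $U(t,t)=\one$ cancel against $i\pa_t$, and integrating by parts in $t_2$ against $SD g$ yields $0$ by the same identities. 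The main obstacle is the negative-order case near $r=0$: the spaces $\cD^s(t)$ are $t$-dependent in an essential way once the Coulomb pole is ``on,'' so one genuinely cannot propagate $\cD^s$-data through the non-static region directly; the resolution, as in Lemma \ref{lemma:negativeorders}, is to never carry Cauchy data through the non-static region but instead to solve the inhomogeneous problem there with a duality/energy-estimate argument (Lemma \ref{lem:energy-est-renamed}) that only sees $H^k$-in-time regularity valued in $L^2(X)$, and to use the characterization of $\Dom H(t)$ to convert back to $\b$-Sobolev statements. Once this is in hand, the $\b$-regularity upgrade is routine via Lemma \ref{lemma:elliptic-estimate} and interpolation.
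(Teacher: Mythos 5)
There is a genuine gap, and it sits at the very first step of your reduction. The operator $S$, with time Schwartz kernel $U(t_1,t_2)$, is \emph{not} the forward (retarded) solution operator for $Du=f$: since $\p_t U(t,t_2)=iH(t)U(t,t_2)$, one has $D\,U(\cdot,t_2)=0$ with no boundary term (the $t_2$-integral runs over all of $\rr$ and is not cut at $t_2=t$), so $D(Sf)=0$ identically --- $S$ is the causal propagator (a bi-solution, the difference of retarded and advanced propagators), not a right inverse of $D$. Your plan to treat $Sf$ as ``the solution of $Du=f$ vanishing in the distant past'' would therefore prove mapping properties of a different operator; and for that operator the claimed target is in any case out of reach, because the conclusion $S:\Hcd{s}\to\Hl{s}$ gains two orders of non-$\b$ regularity ($H^{-1}$-based in, $H^{1}$-based out), which is only possible because the output solves the \emph{homogeneous} equation and hence enjoys the elliptic estimate of Lemma \ref{lemma:elliptic-estimate}; a retarded propagator applied to an $H^{-1}$-based source cannot land in $H^{1}$ on the support of the source. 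Relatedly, none of the inhomogeneous results you invoke (Lemmas \ref{lem:Cauchy}, \ref{lem:inhomog-unique}, \ref{lemma:negativeorders}) accept sources with negative \emph{spatial} regularity, so the $H^{-1}$-based domain $\Hcd{s}$ is never actually handled by your argument.

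The missing idea is the factorization $S=TT^{*}$, which is exactly what the group property you quote delivers: $U(t,t_2)=U(t,t_-)U(t_-,t_2)$ gives $Sf=T\bigl(\int U(t_-,t_2)f(t_2)\,dt_2\bigr)=T(T^{*}f)$ with $t_0=t_-$. The paper's proof is then immediate: Lemmas \ref{lemma:positive} and \ref{lemma:negative} give $T:\cD^{s}\to H^{1,s-1}_{\b,\loc}$ for all $s\in\rr$, hence by duality $T^{*}:H^{-1,-s+1}_{\b,\c}\to\cD^{-s}$, and composing yields $S=TT^{*}:H^{-1,s+1}_{\b,\c}\to H^{1,s-1}_{\b,\loc}$. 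It is this duality step --- not a new inhomogeneous well-posedness argument --- that converts the homogeneous results for $T$ into the $H^{-1}$-to-$H^{1}$ statement for $S$. Your verification that $S$ is a bi-solution also needs repair: there are no boundary terms to cancel, and $D\circ S=0$, $S\circ D=0$ follow directly from $DU(\cdot,t_2)=0$ and its adjoint identity (or, again, from the factorization through $T$).
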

\proof Let $T$ be as given in Section~\ref{sec:propagator} with $t_{-}
  = t_{0}$.  By Lemmas~\ref{lemma:positive} and~\ref{lemma:negative},
  \begin{equation}\label{higherreg}
    T : \cD^{s}\to H^{1,s-1}_{\b,\loc}, \quad s \in \rr,
  \end{equation}
  and, dually,
  \begin{equation}
    \label{higherreg2}
    T^{*}: H^{-1,-s+1}_{\b,\c} \to \cD^{-s}, \quad s\in \rr.
  \end{equation}
  Hence for all $s\in \rr$,
  \begin{equation*}
    S = TT^{*} : H_{\b,\c}^{-1,s+1} \to H^{1,s-1}_{\b,\loc}.
  \end{equation*}
  \qeds

In the case when $H(t)\equiv H$ is time-independent, of particular importance are \emph{stationary states}, i.e.~those for which $\gamma(t)\equiv\gamma=\chi(H)$ for some Borel function $\chi$ with values in $[0,1]$.  If $0\notin\sp(H)$. Then the \emph{Dirac--Coulomb vacuum}, is obtained from the density matrix $\gamma=\one_{\opencl{-\infty,0}}(H)$. It is also natural to consider excited states: this corresponds to taking $\one_{\opencl{-\infty,\lambda}}(H)$ for some $\lambda\notin \sp(H) $.

More generally, since $H(t)$ is time-independent for $t\leq t_-$ and
$t\geq t_+$ then the \emph{in-state}, resp.~\emph{out-state} are defined
by taking $\gamma(t_0)=\one_{\opencl{-\infty,0}}(H(t_0))$ for some reference
time $t_0\leq t_-$, resp.~$t_0\geq t_+$.

\bep\label{t1} Suppose $t_0\leq t_-$ or $t_0 \geq t_+$.  If $\gamma(t_0)=\chi_-(H)$ for some $\chi_-\in S^0(\rr;[0,1])$  such that $\chi_-\equiv \one_{\opencl{-\infty,0}}$ outside of a neighborhood of $0$, then the corresponding pair $\Lambda^\pm$ defined in \eqref{dlg1}  are Hadamard regular bi-solutions.
\eep
\begin{proof}
  We show regularity for $\Lambda^{+}$; the statement for $\Lambda^{-}
  = S - \Lambda^{+}$ then
  follows immediately by the previous lemma.  Let $T$ be the propagator described in Section~\ref{sec:propagator}.
  We write
  \begin{equation*}
    \Lambda ^{+} = T (\one-\gamma(t_{0})) T^{*},
  \end{equation*}
  where $T: \cD^{s}\to H^{1,s-1}_{\b,\loc}$ and $T^{*}:
  H^{-1,s+1}_{\b,\c} \to \cD^{s}$ for all $s\in \rr$.  Because
  $\gamma(t_{0})$ commutes with $H(t_{0})$, $\gamma(t_{0})$ is bounded
  $\cD^{s}\to \cD^{s}$ for all $s\in \rr$ and so $\Lambda^{+}$ is
  regular.

  That $\Lambda^{\pm}$ is a bi-solution follows from its representation
 in terms of the solution
  operator $T$.  We must therefore only show that $\Lambda^{\pm}$ is
  Hadamard.  Let $\Lambda^{\pm}_{0}$ be $\Lambda^{\pm}$ restricted to the
  static region $I\times X$ with $I$ an open interval
  containing
  $\opencl{-\infty, t_-}$ or $\clopen{t_+,\infty}$.  First note that
  $\WF'(\Lambda_{0}^{\pm}) \subset
  \Sigma\times \Sigma$ by elliptic regularity.  Furthermore,
  \begin{equation*}
    \tilde{\chi}_{\mp}(i^{-1}\p_{t})\Lambda_{0}^{\pm} = 0 =
    \Lambda^{\pm}_{0} \tilde{\chi}_{\mp}(i^{-1}\p_{t})
  \end{equation*}
  on $I\times X$ for suitable $\tilde{\chi}_{\mp}\in S^{0}(\rr)$ so that
  $\tilde{\chi}_{\mp}(\tau) \equiv 0$ for $\pm\tau \gg 0$ and
  $\tilde{\chi}_{\mp}(\tau) \equiv 1$ for $\mp\tau \gg 0$.  While
  $\tilde{\chi}_{\mp}(i^{-1}\p_{t})$ is not a pseudodifferential
  operator on $M^{\inti}$, it is possible to find a properly supported
  $A \in \Psi^{0}(M^\circ)$ so that $A\tilde{\chi}_{\mp}(i^{-1}\p_{t})$ and
  $\tilde{\chi}_{\mp}(i^{-1}\p_{t})A$ are pseudodifferential and
  elliptic at any chosen point of $\Sigma^{\mp}$ and hence
  $\WF'(\Lambda^{\pm}_{0})\subset \Sigma^{\pm}\times \Sigma$ and
  $\WF'(\Lambda^{\pm}_{0})\subset \Sigma\times \Sigma^{\pm}$.  Thus,
  $\WF'(\Lambda_{0}^{\pm})\subset \Sigma^{\pm}\times \Sigma^{\pm}$
  over $(I \times X^{\inti})\times (I\times X^{\inti})$.  By
  Proposition~\ref{prop:stronghad} this implies
  $\WF_{\b}'(\Lambda_{0}^{\pm})\subset \dot\Sigma
  ^{\pm}\times\dot\Sigma^{\pm}$ over $(I\times X)\times (I\times X)$.
  Finally, by operator propagation of singularities, i.e.,
  Theorem~\ref{opp}, applied to the first and then to the second set
  of spacetime variables, we conclude $\WF_{\b}'(\Lambda^{\pm})\subset
  \dot\Sigma^{\pm}\times \dot \Sigma^{\pm}$ everywhere.
  \qed
\end{proof}

\medskip

In particular this proves Theorem  \ref{thm:main1}. 

We can also estimate the $\b$-wavefront set of Hadamard two-point functions a bit more precisely.

\bep\label{preciseWF} If $\Lambda^\pm$ are a pair of two-point functions and Hadamard bi-solutions then 
\beq\label{preciseWF2}
\wf'_\b( \Lambda^\pm)\subset  (\dSig^\pm\times\dSig^\pm)\cap \{ (q_1,q_2) \st q_1 \dot\sim q_2 \}.
\eeq
\eep
\proof  Since $\Lambda^++ \Lambda^-= S$, 
 using  Proposition \ref{prop:stronghad}   $\Lambda^+$ and $\Lambda^-$ have disjoint wavefront sets and we get 
\beq\label{eqlpm}
\wf'_\b( \Lambda^\pm )\subset (\dSig^\pm\times\dSig^\pm)\cap \wf'_\b(S).
\eeq
Recall that $S=S_+-S_-$ where $S_+$ is the retarded, and $S_-$ the advanced propagator. By the elliptic regularity statement in Theorem \ref{opp},
\beq\label{eq:wfbS}
\wf'_\b(S)\subset \big(\wf'_\b(S_+)\cup  \wf'_\b(S_-)\big) \cap (\dot\Sigma\times \dot\Sigma).
\eeq
Let now $q_2$ be any point in $\dot\Sigma$. Then we claim that for all $q_1\in \dot\Sigma$ such that $q_1$ is not connected with $q_2$ by a diffractive bicharacteristic (in particular $q_1\neq q_2$), $(q_1,q_2)\notin \wf'_\b(S_+)$ and $(q_1,q_2)\notin \wf'_\b(S_-)$.

 In fact, if $t_1<t_2$   this simply follows from support properties of $S_+$. Otherwise, we can take $q_1'$ with $t'_1<t_2$  and $q_1'\dot\sim q_1$, then $(S_{q_1'}\times \{ q_2\})\cap \wf'_\b(S_+)=\emptyset$ by the previous argument, which then implies $(S_{q_1}\times \{ q_2\})\cap \wf'_\b(S_+)=\emptyset$ by the propagation of singularities statement in Theorem    \ref{opp} (strictly speaking, its obvious generalization microlocalized  outside of the diagonal, as $S_+$ is not a bi-solution everywhere), in particular $(q_1,q_2)\notin \wf'_\b(S_+)$. The analogous argument applies to $S_-$.
 
 Using \eqref{eq:wfbS} we conclude that 
 \beq\label{eq:wfS}
 \wf'_\b(S)\subset \{ (q_1,q_2) \st q_1 \dot\sim q_2 \}.
 \eeq
  Plugged into \eqref{eqlpm} this yields  \eqref{preciseWF2}. \qeds

  \begin{remark}Note that here we have proved an estimate
    \eqref{eq:wfS} on the singularities of the Dirac--Coulomb
    propagator $S$, but one could wonder if diffraction does actually
    occur and if one could not replace $q_1 \dot \sim q_2$ by
    $q_1\sim q_2$. The work \cite{Baskin2023} conjectures that this is
    \emph{not} the case (at least in the time-independent case), and
    diffraction does occur indeed, albeit the resulting extra
    singularities are $1-\epsilon$ order less severe (with
    $\epsilon>0$ arbitrary) in the sense of local Sobolev regularity.
    From our results we conclude that the diffractive singularities
    are harmless if one deals only with Hadamard states in the
    renormalization. On the other hand, diffraction implies a
    significantly more singular behaviour if one considers states that
    are not Hadamard (for instance, this issue arises when one
    subtracts the Dirac or Dirac--Coulomb vacuum in a $t$-dependent
    situation).  It could also potentially have bad implications for
    quantities renormalized with the Hadamard parametrix as in
    \cite{Marecki2003,Hollands2001b,Zahn2014,Schlemmer2015,Zahn2015a},
    though this still has to be studied in more detail.
\end{remark}

\setcounter{equation}{0}

\appendix\section{Second quantization for fermions}    \label{s:appA}
\subsection{Fermionic quasi-free states}  \label{s:appA1}
In this appendix we briefly recall basics of second quantization in the case of charged fermions following textbook references \cite{baez,thaller,derger,Bratteli1997} and explain the implications of our results for the  renormalized quantum current and charge density.  We use the convention that sesquilinear forms are linear in the second argument.

Given a complex inner product space $\cV$ with inner product $\bra \cdot, \cdot \ket$, the objective of second quantization is to construct a Hilbert space $\cH$ and  \emph{charged field operators}, i.e.~operator-valued anti-linear maps $\cV\ni  v \mapsto \psi(v)\in B(\cH)$ which satisfy the \emph{canonical anti-commutation relations} for $\cV$, i.e. 
\beq\label{eq:CAR} 
 \begin{aligned}
&\{ \psi^{*}(v),\psi^{*}(w) \}=\{ \psi(v),\psi(w) \}=0, \\
&\{ \psi(v), \psi^{*}(w) \}= \bra v,w \ket \one_\cH, \ \ v, w\in \cV.
\end{aligned} 
\eeq

It is advantageous to view the problem in a representation-theoretic
way, and first consider the unital $C^*$-algebra ${\rm CAR}(\cV)$
which is formally generated by abstract elements $\psi(v)$,
$\psi^*(w)$, $v,w\in \cV$, satisfying the canonical anti-commutation
relations \eqref{eq:CAR} (see e.g.~\cite[\S17.2.3]{derger} and
\cite[\S2.5]{derger} for the precise definition; note that it makes
reference to the neutral formalism).  Next, one chooses a {state}
$\omega$ on ${\rm CAR}(\cV)$, i.e., a positive continuous linear
functional of norm $1$. Then the \emph{GNS construction} (named after
Gelfand, Naimark and Segal) provides a CAR representation associated
to $\omega$. More precisely, the GNS construction produces a Hilbert
space $\cH_\omega$ using the inner product $(A,B)\mapsto \omega(A^* B)$ (the null
space is quotiented out to ensure non-degeneracy and then one takes the completion), and it constructs
field operators  (operator-valued anti-linear maps)  $\cV \ni v \mapsto \hat\psi(v)\in B(\cH_\omega)$ and a distinguished
\emph{vacuum vector} $\Omega\in \cH_\omega$ such that
$$
\omega(\psi(v)\psi^{*}(w))= \bra \Omega, \hat\psi(v)\hat\psi^{*}(w)  \Omega \ket_{\cH_\omega}, \ \ v,w\in \cV.
$$

In the sequel we restrict our attention to {quasi-free states}, defined as follows (we will always tacitly assume them to be gauge-invariant).  

\begin{definition}
A state $\omega$ on ${\rm CAR}(\cV)$ is \emph{gauge-invariant} if and only if 
\beq\label{eq:gi}
\omega(\psi^*(v_1)\cdots\psi^*(v_n)\psi(w_1)\cdots\psi(w_m))=0, \quad n\neq m
\eeq
for all $v_1,\dots,v_n,w_1,\dots,w_m\in \cV$. It is also \emph{quasi-free} if in addition 
\beq\label{eq:qf}
\omega(\psi^*(v_1)\cdots\psi^*(v_n)\psi(w_1)\cdots\psi(w_n))=\sum_{\sigma\in S_n}{\rm sgn}(\sigma)\prod_{j=1}^n \omega(\psi^*(v_j)\psi(w_{\sigma(j)})), 
\eeq
for all $n\in \nn$, $v_1,\dots,v_n,w_1,\dots,w_n\in \cV$, where $S_n$
is the set of permutations of $\{1,\dots, n\}$.
\end{definition}

Note that the RHS of \eqref{eq:qf} is simply the determinant of   $(\omega(\psi^*(v_i)\psi(w_{j})))_{ij}$.

Thus, taking also into account the canonical anti-commutation
relations \eqref{eq:CAR}, a quasi-free state $\omega$ is uniquely determined by the
sesquilinear form $\omega(\psi(v)\psi^{*}(w))$, $v,w\in \cV$. There
exists a rather general converse statement; here we only give a
special case where $\cV$ is assumed to be
complete. 

\begin{proposition}Suppose that $\cV$ is complete. If $\gamma\in
  B(\cV)$ is a density matrix, i.e.~satisfies $0\leq \gamma \leq
  \one$, then there exists a unique quasi-free state $\omega$ on ${\rm
    CAR}(\cV)$ such that
$$
\omega(\psi(v)\psi^{*}(w))= \bra v, \gamma w \ket, \ \ v,w\in \c
V.
$$
\end{proposition}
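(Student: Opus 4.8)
\emph{Uniqueness} is essentially formal: by the anti-commutation relations~\eqref{eq:CAR} every monomial in the generators $\psi(v),\psi^{*}(w)$ reduces to a linear combination of Wick-ordered monomials $\psi^{*}(v_{1})\cdots\psi^{*}(v_{n})\,\psi(w_{1})\cdots\psi(w_{m})$; gauge invariance~\eqref{eq:gi} kills those with $n\neq m$, and~\eqref{eq:qf} expresses the remaining ones as polynomials in the numbers $\omega(\psi^{*}(v_{i})\psi(w_{j}))$, which by~\eqref{eq:CAR} are in turn fixed by $\omega(\psi(w_{j})\psi^{*}(v_{i}))=\bra w_{j},\gamma v_{i}\ket$. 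Hence any two quasi-free states with two-point function $\bra\cdot,\gamma\,\cdot\ket$ agree on the (dense) $*$-subalgebra generated by the $\psi(v),\psi^{*}(w)$, and so coincide. The content is therefore existence.

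For existence the plan is to realize $\omega$ as a vector state in an explicit Fock representation of Araki--Wyss type, which simultaneously produces the GNS data and --- crucially --- makes positivity automatic. Completeness of $\cV$ enters precisely here: since $0\leq\gamma\leq\one$, the bounded self-adjoint functional calculus yields positive contractions $\gamma^{1/2},(\one-\gamma)^{1/2}$ on $\cV$. Let $\overline{\cV}$ be the conjugate Hilbert space, $C\colon\cV\to\overline{\cV}$ the canonical anti-unitary, and form the graded tensor product $\cH_{\omega}=\Gamma_{\rm a}(\cV)\,\hat\otimes\,\Gamma_{\rm a}(\overline{\cV})$ of fermionic Fock spaces, with vacuum $\Omega$, carrying CAR operators $a^{\#}$ on the first factor and $b^{\#}$ on the second, the grading arranged so that operators of the two factors anti-commute. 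Define
\[
\hat\psi(v)=a\big(\gamma^{1/2}v\big)\,\hat\otimes\,\one\;+\;\one\,\hat\otimes\,b^{*}\big(C(\one-\gamma)^{1/2}v\big),\qquad v\in\cV,
\]
with $\hat\psi^{*}(w)$ its adjoint. A direct computation using the CAR of $a^{\#},b^{\#}$ and the identity $\bra Cu,Cu'\ket_{\overline{\cV}}=\bra u',u\ket_{\cV}$ shows that $v\mapsto\hat\psi(v)$ is anti-linear, $\{\hat\psi(v),\hat\psi(w)\}=0$, and
\[
\{\hat\psi(v),\hat\psi^{*}(w)\}=\bra\gamma^{1/2}v,\gamma^{1/2}w\ket\,\one+\bra(\one-\gamma)^{1/2}v,(\one-\gamma)^{1/2}w\ket\,\one=\bra v,w\ket\,\one,
\]
so the $\hat\psi(v)$ satisfy~\eqref{eq:CAR} for $\cV$. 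By the universal property of ${\rm CAR}(\cV)$ (equivalently, uniqueness of the $C^{*}$-norm, which forces $\|\psi(v)\|=\|v\|$) these operators extend to a representation $\pi$ of ${\rm CAR}(\cV)$ on $\cH_{\omega}$, and we set $\omega(A)=\bra\Omega,\pi(A)\Omega\ket_{\cH_{\omega}}$.

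Then $\omega$ is a state (a vector state is automatically positive, and $\omega(\one)=1$), and since $\hat\psi(v)\Omega$ involves only the $b^{*}$-term and $\hat\psi^{*}(w)\Omega$ only the $a^{*}$-term, only the $a$--$a^{*}$ contraction survives in $\bra\Omega,\hat\psi(v)\hat\psi^{*}(w)\Omega\ket$, giving $\omega(\psi(v)\psi^{*}(w))=\bra\gamma^{1/2}v,\gamma^{1/2}w\ket=\bra v,\gamma w\ket$ as required. Gauge invariance holds because the rotation $\hat\psi(v)\mapsto e^{i\vartheta}\hat\psi(v)$ is implemented by $e^{i\vartheta(N_{1}-N_{2})}$, the difference of the number operators of the two Fock factors, which fixes $\Omega$; and $\omega$ is quasi-free by Wick's theorem, since each $\hat\psi^{\#}$ is a single creation plus a single annihilation operator, so that vacuum expectations of products are sums over pairings of two-point functions, which together with gauge invariance is exactly~\eqref{eq:qf}. (Alternatively one may invoke the corresponding construction in~\cite{derger}.)

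The main obstacle --- and the reason to route through the Fock representation rather than defining $\omega$ directly by the determinant formula~\eqref{eq:qf} and extending by linearity --- is \emph{positivity}: establishing $\omega(A^{*}A)\geq0$ from scratch genuinely uses $0\leq\gamma\leq\one$ and is not a routine computation. The Araki--Wyss construction trades this for the transparent fact that $\omega$ is a vacuum expectation value, with the hypothesis $0\leq\gamma\leq\one$ entering only to make $\gamma^{1/2}$ and $(\one-\gamma)^{1/2}$ well defined. A minor secondary point is checking that the $\hat\psi(v)$ actually generate the abstract algebra ${\rm CAR}(\cV)$, rather than merely satisfying its relations, which is immediate from the uniqueness of the $C^{*}$-norm on the CAR algebra.
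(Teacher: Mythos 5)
Your argument is correct, and it is precisely the standard Araki--Wyss/Fock construction that the paper itself invokes: the proposition is stated there without proof as a textbook fact, with the general (non-projection) case explicitly referred to the Araki--Wyss representation in the cited references, and the special case $\gamma^2=\gamma$ illustrated by the same doubled-Fock-space formula you write down. Both your uniqueness reduction (Wick-ordering via the CAR, gauge invariance, and the determinant formula) and your existence construction (with positivity obtained for free from the vector-state realization, which is indeed the only nontrivial point) match the intended proof.
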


Now in the context of the Dirac equation in external potentials
(possibly with Coulomb singularity as in \S\ref{ss:main}), we apply the above proposition in
the particular case when we are given a
density matrix at some fixed reference time $t_0$, $\gamma(t_0)\in B( L^2(\rr^3;\cc^4))$, and we denote by $\cH_{t_0}$ the Hilbert space and by 
$L^2(\rr^3;\cc^4) \ni v\mapsto\hat\psi_{t_0}(v)\in B(\cH_{{t_0}})$ the field operators
obtained by GNS construction for
the associated quasi-free state $\omega_{t_0}$ on ${\rm CAR}(L^2(\rr^3;\cc^4))$.    These have the interpretation of  \emph{Cauchy data fields} at $t=t_0$.

For the sake of illustration we give a more direct description of
$\hat\psi_{t_0}$ (strictly speaking, a unitarily equivalent CAR representation), under the extra assumption that $\gamma(t_0)$ is a
projection. We denote $\cZ_{t_0}=\Ran(\one- \gamma(t_0))$. In this case
it is possible to obtain $\hat\psi_{t_0}$ as an operator-valued anti-linear map of the form
$$
\hat\psi_{t_0}(v)=a(  (\one -\gamma(t_0)) v, 0) + a^*(0, \wbar{\gamma(t_0) v})\in B(\cH_{{t_0}}), \ \ v\in L^2(\rr^3;\cc^4),
$$
acting on the \emph{fermionic  Fock space}
$$
\cH_{{t_0}}= \textstyle\bigoplus_{n=0}^\infty {\textstyle\bigwedge}^n(\cZ_{t_0} \oplus \wbar{\cZ_{t_0}})
$$ 
 over $\cZ_{t_0}\oplus \wbar{\cZ_{t_0}}$, where $\wbar{\cZ_{t_0}}$ is the image
 of $\cZ_{t_0}$ by the anti-linear involution denoted by $z \mapsto
 \wbar{z\,}\!$ which maps Cauchy data of solutions to Cauchy data of solutions
 of a charge reversed Dirac equation (it consists of taking the complex conjugate and then multiplying by $i \beta \alpha_2$). The wedge stands for
 anti-symmetrized tensor product. Above,  $$
 \cZ_{t_0} \oplus \wbar{\cZ_{t_0}} \mapsto   a(z_1,z_2),a^*(z_1,z_2)\in B(\cH_{t_0})
 $$ 
 are the fermionic annihilation and creation operators on $\cH_{t_0}$ (strictly speaking, families of operators), defined by $a^*(z_1,z_2)\Psi=(z_1,z_2) \wedge \Psi$. By convention ${\displaystyle\wedge}^0 (\cZ_{t_0} \oplus \wbar{\cZ_{t_0}})=\cc 1$ and then $\Omega:= 1\in \cH$ is the vacuum vector. The  general construction when $\gamma(t_0)$ is not necessarily a projection requires a further adjustment and is given by the \emph{Araki--Wyss representation}, see e.g.~\cite[\S17.2.3]{derger} and \cite[\S17.2.6]{derger}.

 The fields $\hat\psi_{t_0}$ play the role of Cauchy data of
 \emph{spacetime fields $\hat\psi$} constructed as follows. Observe
 that the  Cauchy data fields at an arbitrary time $t$ are given by 
  $\hat\psi_t(v)=\hat\psi_{t_0}(U(t_0,t)v)\Gamma(U(t_0,t))$ corresponding to the
  density $\gamma(t)=U(t,t_0)\gamma(t_0)U(t_0,t)$, where $U(t_0,t)$
  maps Cauchy data at time $t$ to Cauchy data at $t_0$ and $\Gamma(U(t,t_0))$ is the second quantization of $U(t_0,t)$, i.e.~its natural extension as a map between Fock spaces $\cH_t\to \cH_{t_0}$.  
  For each $t$,  $ C_\c^\infty(\rr^{3};\cc^4) \ni v \mapsto \hat\psi_t(v)$ is an (anti-linear, operator-valued, vectorial) distribution in the spatial variables, but we can also consider it as a distribution jointly in all space-time variables: this gives rise to the space-time fields $\hat\psi(f)$, i.e.
\beq\label{eq:spf}
C_\c^\infty(\rr^{1,3};\cc^4) \ni f \mapsto \hat\psi(f)\Psi :=\hat\psi_{t_0}( (S f)(t_0) ) (\Gamma(S) \Psi) \in B(\cH),
\eeq
where $S$ is the operator with Schwartz kernel $U(t_1,t_2)$ in the
time variables and $\Gamma(S)$ is defined  by extending the map
$\cZ_{t_0}\ni z \mapsto  (Sz)(t_0)$ to the Fock space $\cH_{t_0}$; we define
$\cH:=\Gamma(S) \cH_{t_0}$. The space-time fields $\hat\psi$ are  the
operator-valued solutions of the Dirac equation mentioned in the
introduction; when interpreted as distributions they are indeed
solutions of $D\hat\psi=0$. 

It is also possible to use the algebraic formalism on the space-time level directly. Namely, we can take $\cV$ to be the completion of $C_\c^\infty(\rr^{1,3};\cc^4)$ with respect to the inner product 
$$
f,g\mapsto \bra (Sf)(t_0),  (Sg)(t_0)   \ket_{L^2(\rr^3;\cc^4)} 
$$
Then $\omega_{t_0}$ induces a quasi-free state $\omega$  on ${\rm CAR}(\cV)$, and the space-time fields $\hat\psi(f)$ can be obtained by taking the GNS representation.

Note that especially when discussing space-time quantities, most of the literature uses the relativistic Dirac--Coulomb  equation
 \beq\label{eq:Dir}
 \big( i^{-1} \gamma^\mu (\pa_\mu+i A_{\mu} ) +m\big) u=0,
 \eeq
 where $\gamma^0=\beta$, $\gamma^j= \beta \alpha^j$ are the gamma
 matrices and we used Einstein summation convention. Note that
 \eqref{eq:Dir} is related to the operator $D=i \p_t + H(t)$ used here by composition with the $\gamma^0$ matrix, so the
 natural space-time inner product for the operator in \eqref{eq:Dir}
 involves a $\gamma^0$ factor, but switching between the two operators is
 straightforward.

Let now $\omega_{\rm ref}$ be another quasi-free state on ${\rm CAR}(\cV)$ associated to some density matrix $\gamma_{\rm ref}(t)$. We assume that  $\gamma(t)$ and $\gamma_{\rm ref}(t)$ are Hadamard in the terminology of \S\ref{ss:main} (we make the same assumptions on the potentials $A_\mu$ as in \S\ref{ss:main}).

We denote by $x\in \rr^{1,3}$ points in spacetime.  Using the usual formal notation for distributions, in our conventions the \emph{relative  current} is the distribution-valued vector 
\beq\label{eq:current}
\bea {}
j^\mu_{\omega,\omega_{\rm ref}}(x)&=\big( \omega(\psi^*(x)\gamma^0\gamma^\mu \psi(y))- \omega_{\rm ref}(\psi^*(x)\gamma^0\gamma^\mu \psi(y))  \big)|_{y=x}, \\ 
&=\big( \bra \Omega, \hat\psi^*(x)\gamma^0\gamma^\mu \hat\psi(y) \Omega\ket_\cH- \omega_{\rm ref}(\psi^*(x)\gamma^0\gamma^\mu \psi(y))  \big)|_{y=x}, 
\eea
\eeq
where $f\mapsto \psi(f)$ are the abstract fields in the $C^*$-algebra
${\rm CAR}(\cV)$. The relative current plays the role of vacuum expectation value of the quantum version of the classical current
$$
j^\mu_u(x)= u^*(x)\gamma^0\gamma^\mu u(x),
$$
which becomes ill-defined if one naively inserts a quantum field
$\hat\psi$ (which is in general a highly singular distribution even if $A_\mu\equiv 0$) instead of a smooth solution $u$.
 On the other hand, our results imply that  
 $$
 j^\mu_{\omega,\omega_{\rm ref}}(x)\in C^\infty(\rr; L^1_{\rm loc}(\rr^3))\cap C^\infty(\rr\times(\rr^3\setminus \{0\})), \ \ \mu=0,\dots,3, $$
 is well-defined.
 
 We illustrate this on the example of the \emph{relative charge density}, i.e.~$\rho_{\gamma,\gamma_{\rm ref}}(x)=j^0_{\omega,\omega_{\rm ref}}(x)$ (emphasizing the dependence on the density matrices in the notation as in the introduction). Since $(\gamma^0)^2=I$ it can be written as
 $$
 \bea
 \rho_{\gamma,\gamma_{\rm ref}}(x)&= \big( \bra \Omega, \hat\psi^*(x) \hat\psi(y) \Omega\ket_\cH- \omega_{\rm ref}(\psi^*(x) \psi(y))  \big)|_{y=x} \\
 & = \Tr_{\cc^4} \big(\Lambda^-(x,y)-\Lambda^-_{\rm ref}(x,y) \big)|_{x=y}, 
 \eea
$$
where  $\Lambda^-(x,y)$ is the spacetime Schwartz kernel (with values in 4$\times$4 matrices) of the operator
$$
\big(\Lambda^- f\big)(t)=\int_{-\infty}^\infty U(t,t_0)\gamma(t_0)  U(t_0,t_2) f(t_2)dt_2
$$
and $\Lambda^-_{\rm ref}$ is defined similarly using $\gamma_{\rm ref}(t_0)$. The on-diagonal restriction is well-defined by Propositions \ref{cor1} and \ref{prop:td}. Writing $x=(t,\bar{x})$, we get 
 $$
 \bea
 \rho_{\gamma,\gamma_{\rm ref}}(t,\bar{x})=\Tr_{\cc^4}(\gamma(t)- \gamma_{\rm ref}(t))(\bar{x},\bar{x}). 
 \eea
$$

  In a scattering situation one takes $\omega$ and $\omega_{\rm ref}$ to be the \emph{in} and \emph{out} vacuum respectively, and then  integrating $ \rho_{\gamma,\gamma_{\rm ref}}$ yields the charge created by switching on a time-dependent potential and then turning it off. 

\subsection{Wick ordering and quantum current} Let us now explain in what sense the relative current $j^\mu_{\omega,\omega_{\rm ref}}(x)$ is an  expectation value 
 of an operator-valued distribution called the \emph{renormalized} or \emph{Wick-ordered} (with respect to $\omega_{\rm ref}$) \emph{quantum current} $\leftw \psi^*\gamma^0\gamma^\mu\psi(x)\rightw_{\omega_{\rm ref}}$.

  Let $\cH_{\rm fin}\subset \cH$ be the subspace spanned by vectors of the form
$$
\hat\psi(f_1)\hat\psi(f_2) \cdots \hat\psi(f_k)  \Omega, \quad k\in \nn_0, \ f_1,\dots,f_k\in C_\c^\infty(M;\cc^4).
$$
A standard fact about the GNS representation (called the \emph{cyclicity} of $\Omega$) is that $\cH_{\rm fin}$ is dense.

The pointwise products of distributions such as $\hat \psi \hat\psi^{*}$, $\hat\psi^{*}\hat\psi$, $\hat\psi^{*} \gamma^0\gamma^\mu\hat\psi$, etc., are ill-defined and need to be replaced by  \emph{Wick squares} $\leftw\hat \psi\hat\psi^{*}\rightw_{\omega_{\rm ref}}$, $\leftw\hat\psi^{*} \hat\psi\rightw_{\omega_{\rm ref}}$, $\leftw\hat\psi^{*}\gamma^0\gamma^\mu\hat\psi\rightw_{\omega_{\rm ref}}$, etc. We focus on explaining the definition of $\leftw\hat\psi^{*}\hat\psi\rightw_{\omega_{\rm ref}}$. First, one introduces the \emph{Wick monomial}
$$
\leftw\hat\psi^{*}(f)\hat\psi(g)\rightw_{\omega_{\rm ref}} = \hat\psi^{*}(f)\hat\psi(g) - \omega_{\rm ref}\big( \psi^{*}(f)\psi(g) \big)  \one_\cH,
$$
or in function-like notation for distributions,
\beq
\leftw\hat\psi^{*}(x)\hat\psi(y)\rightw_{\omega_{\rm ref}} = \hat\psi^{*}(x)\hat\psi(y) - \omega_{\rm ref}\big( \psi^{*}(x)\psi(y) \big)  \one_\cH.
\eeq
Then,  $\leftw\hat\psi^{*} \hat\psi(x)\rightw_{\omega_{\rm ref}}$   is defined by restricting   $\leftw\hat\psi^{*}(x)\hat\psi(y)\rightw_{\omega_{\rm ref}}$  to the diagonal $x=y$. If well-defined, this gives in particular
  gives the vacuum expectation value
  $$
  \bea
  \bra  \Omega  ,  \leftw \hat\psi^{*}\hat\psi(x)\rightw_{\omega_{\rm ref}} \Omega \ket_\cH&= \big( \bra \Omega, \psi^*(x) \psi(y) \Omega\ket_\cH- \omega_{\rm ref}(\psi^*(x)\psi(y))  \big)|_{y=x} \\ &=  \rho_{\gamma,\gamma_{\rm ref}}(x).
  \eea
  $$

\begin{proposition} Under the hypotheses in \emph{\S\ref{ss:main}},  the  Wick square $\leftw\hat \psi^{*}\hat\psi(x)\rightw_{\omega_{\rm ref}}$ is well defined as a locally integrable function on $\rr^{1,3}$ with values in quadratic forms on $\cH_{\rm fin}$. 
\end{proposition}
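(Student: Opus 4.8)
The plan is to evaluate the quadratic form on the dense subspace $\cH_{\rm fin}$ and to reduce everything to two facts already available. First, by Lemma~\ref{Sregular} and Proposition~\ref{t1} (applied to both $\gamma$ and $\gamma_{\rm ref}$, e.g.\ the \emph{in}- and \emph{out}-vacua, or more generally two regular Hadamard density matrices at the same reference time), $\Lambda^\pm$ and $\Lambda^\pm_{\rm ref}$ are regular bi-solutions satisfying the Hadamard condition, so in particular $\Lambda^-,\Lambda^-_{\rm ref}\colon\Hcd{m}\to\Hl{m}$ continuously for every $m\in\rr$. Second, since $\Lambda^++\Lambda^-=S=\Lambda^+_{\rm ref}+\Lambda^-_{\rm ref}$ by construction, the difference satisfies the compatibility relation \eqref{lambdaaa}, so Propositions~\ref{cor1} and~\ref{prop:td} apply to it: $\Lambda^--\Lambda^-_{\rm ref}\colon\Hcminfd\to\dHlinf$ continuously, and hence it has a well-defined trace density $\Tr_{\cc^4}(\Lambda^--\Lambda^-_{\rm ref})|_{\rm diag}\in L^1_{\rm loc}(\rr^{1,3})$, of size $O(r^{-2-\epsilon})$ near $r=0$.

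First I would fix a generating vector $\Phi=\hat\psi(f_1)\cdots\hat\psi(f_k)\Omega\in\cH_{\rm fin}$, $f_i\in C_\c^\infty(M;\cc^4)$, and use the canonical anti-commutation relations \eqref{eq:CAR} together with the cyclicity of $\Omega$ to rewrite $\bra\Phi,\hat\psi^*(f)\hat\psi(g)\Phi\ket_\cH$ as the vacuum expectation of a product of $2(k+1)$ fields in the ordered form $\psi^*\cdots\psi^*\psi\cdots\psi$ to which the quasi-free expansion \eqref{eq:qf} applies verbatim. That expansion writes it as a \emph{finite} sum of products of two-point pairings $\omega(\psi^*(\cdot)\psi(\cdot))$, i.e.\ of components of $\Lambda^-$ (gauge invariance removes all other pairings). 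Isolating the single pairing in which $\hat\psi^*$ is contracted directly with $\hat\psi$, and subtracting the reference term $\omega_{\rm ref}(\psi^*(x)\psi(y))$, one obtains (suppressing the $\cc^4$ indices, which are summed, so that on the diagonal the first term below is exactly the trace density just discussed), in the formal kernel notation in the spacetime variables $x,y$,
\beq\label{eq:wickexp}
\bra\Phi,\ \leftw\hat\psi^*(x)\hat\psi(y)\rightw_{\omega_{\rm ref}}\Phi\ket_\cH=\big(\Lambda^-(x,y)-\Lambda^-_{\rm ref}(x,y)\big)\bra\Phi,\Phi\ket_\cH+\sum_{\alpha}c_\alpha\,A_\alpha(x)\,B_\alpha(y),
\eeq
where the sum is finite, the $c_\alpha$ are constants, and each $A_\alpha$, resp.\ $B_\alpha$, is obtained (up to taking components and complex conjugates) by applying $\Lambda^-$ to a function in $C_\c^\infty(M;\cc^4)\subset\bigcap_m\Hcd{m}$, hence lies in $H^{1,\infty}_{\b,\loc}(M;\cc^4)$ by regularity. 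The structural point, verified here, is that the $\omega_{\rm ref}$-Wick subtraction cancels exactly the would-be-singular direct-contraction term $\Lambda^-_{\rm ref}(x,y)$, leaving the residual on-diagonal singularity entirely in the difference $\Lambda^--\Lambda^-_{\rm ref}$.

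Next I would restrict \eqref{eq:wickexp} to the diagonal $x=y$ term by term. The first term is $\Tr_{\cc^4}(\Lambda^--\Lambda^-_{\rm ref})|_{\rm diag}(x)\cdot\bra\Phi,\Phi\ket_\cH$, which lies in $L^1_{\rm loc}(\rr^{1,3})$ by Propositions~\ref{cor1} and~\ref{prop:td} (using also $\Hcminfd\supset H^{0,-\infty}_{\b,\c}(M;\cc^4)$). Each connected term $A_\alpha(x)B_\alpha(y)$ is a genuine function on $M\times M$, a finite sum of products of $H^{1,\infty}_{\b,\loc}$ spinor components, so its restriction to $x=y$ is the pointwise product $A_\alpha(x)B_\alpha(x)$, which lies in $L^1_{\rm loc}$ because $H^1_{\rm loc}\subset L^4_{\rm loc}$ in four dimensions (indeed the infinite-order conormal regularity even bounds these terms by $O(r^{-\epsilon})$ near $r=0$, consistently with Theorem~\ref{thm2}). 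Summing the finitely many terms shows that $x\mapsto\bra\Phi,\leftw\hat\psi^*\hat\psi(x)\rightw_{\omega_{\rm ref}}\Phi\ket_\cH$ defines an element of $L^1_{\rm loc}(\rr^{1,3})$; sesquilinearity and polarization then extend this from the generating vectors to all of $\cH_{\rm fin}$, yielding the asserted locally integrable, quadratic-form-valued distribution. Smoothness in $t$ and on $\rr^{1,3}\setminus\{0\}$ follows from the same decomposition together with the corresponding properties of $\Lambda^--\Lambda^-_{\rm ref}$ recorded in Theorem~\ref{thm2}.

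The genuinely new analytic content — the locally integrable on-diagonal restriction of a difference of Hadamard two-point functions — is entirely carried by Propositions~\ref{cor1} and~\ref{prop:td}; everything else is formal. I expect the only real obstacle to be the bookkeeping in~\eqref{eq:wickexp}: one must check that among the Wick pairings the \emph{unique} contribution whose on-diagonal restriction can be rougher than that of an $H^{1,\infty}_{\b,\loc}$ function is the direct contraction of $\hat\psi^*(x)$ with $\hat\psi(y)$, and that the $\omega_{\rm ref}$-subtraction removes precisely its reference counterpart so that the leftover is the difference handled by Proposition~\ref{cor1}, while every other (``connected'') term inherits full conormal regularity from the regular-bi-solution mapping property and is therefore harmless.
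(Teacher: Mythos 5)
Your proposal is correct and follows essentially the same route as the paper: expand the matrix element on a generating vector of $\cH_{\rm fin}$ via the quasi-free property, observe that the $\omega_{\rm ref}$-subtraction cancels exactly the direct-contraction pairing so that the diagonal singularity sits entirely in $\Lambda^- - \Lambda^-_{\rm ref}$ (handled by Propositions~\ref{cor1} and~\ref{prop:td}), and note that the remaining ``connected'' terms are products of $H^{1,\infty}_{\b}$ functions of the form (two-point function applied to a test function), whose on-diagonal product is unproblematic. The only cosmetic differences are your invocation of $H^1_{\rm loc}\subset L^4_{\rm loc}$ for the product terms and of polarization for general vectors, where the paper simply cites the existence of the pointwise $\cc^4$ inner product and the gauge-invariance condition.
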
 
\proof  Let $\Psi\in \cH_{\rm fin}$ be of the form $\Psi=\hat\psi(f_1)\hat\psi(f_2) \cdots \hat\psi(f_k)  \Omega$. We compute using the quasi-free property
$$
\bea {}
&\bra  \Psi  ,  \hat\psi^{*}(f)\hat\psi(g)   \Psi \ket_\cH = \bra  \Omega ,  \hat\psi^{*}(f_k)\cdots \hat\psi^{*}(f_1) \hat\psi^{*}(f)\hat\psi(g)  \hat\psi(f_1) \cdots \hat\psi(f_k)  \Omega \ket_\cH \\
&=  \bra  \Omega ,  \hat\psi^{*}(f)\hat\psi(g)  \Omega \ket \| \Psi \|^2_\cH 
+ \sum_{i,j} \left(   c_{i,j} \bra  \Omega ,  \hat\psi^{*}(f)\hat\psi(f_i)  \Omega \ket_\cH \bra  \Omega ,  \hat\psi^{*}(f_j)\hat\psi(g)  \Omega \ket_\cH \right),
\eea
$$
where $c_{i,j}$ are linear combinations of products of two-point functions  not involving $f$ nor $g$. Thus, mixing function-like and distributional notation,
\beq\label{eq:exp}
\bea
\bra  \Psi  ,  \leftw \hat\psi^{*}\hat\psi(x)\rightw_{\omega_{\rm ref}} \Psi \ket_\cH&=  \| \Psi \|^2_\cH  \Big(\bra  \Omega ,  \hat\psi^{*}(x)\hat\psi(y)  \Omega \ket_\cH  -  \omega_{\rm ref}\big( \psi^{*}(x)\psi(y) \big)   \Big)|_{x=y}  \\
& \phantom{=\,} + \sum_{i,j} \left(   c_{i,j} \bra  \Omega ,  \hat\psi^{*}_\mu(x)\hat\psi(f_i)  \Omega \ket_\cH \bra  \Omega ,  \hat\psi^{*}(f_j)\hat\psi^\mu(x)  \Omega \ket_\cH \right)  \\
&=   \| \Psi \|^2_\cH   \Tr_{\cc^4}\big( \Lambda^-(x,y)-\Lambda^-_{\rm ref}(x,y) \big)|_{x=y}  \\
& \phantom{=\,} + \sum_{i,j,a} \left(   c_{i,j} \bra  \Omega ,  \hat\psi^{*}_a(x)\hat\psi(f_i)  \Omega \ket_\cH \bra  \Omega ,  \hat\psi^{*}(f_j)\hat\psi^a(x)  \Omega \ket_\cH \right) 
\eea
\eeq 
if the restriction to the diagonal and the pointwise product exist. The former is well-defined by Propositions \ref{cor1} and \ref{prop:td}.  The latter is as well since  $\bra  \Omega ,  \hat\psi^{*}(x)\hat\psi(f_i)  \Omega \ket_\cH =(\Lambda^+ f_i)(x) \in H^{1,\infty}_\b(M;\cc^4)$ and  similarly $\bra  \Omega ,  \hat\psi^{*}(f_j)\hat\psi(x) \Omega \ket_\cH \in   H^{1,\infty}_\b(M;\cc^4)$ so their pointwise $\cc^4$ inner product exists. Finally, in the case of general $\Psi$ we are easily reduced to computations as above using the gauge invariance condition \eqref{eq:gi}. \qed

\medskip

The discussion of the renormalized quantum current $\leftw\hat\psi^{*}\gamma^0\gamma^\mu\hat\psi(x)\rightw_{\omega_{\rm ref}}$ is entirely  analogous.

\medskip

{\small
\subsubsection*{Acknowledgments} The authors would like to thank
Christian Gérard and \'Eric Séré  for useful discussions.  DB
acknowledges partial support from NSF grant DMS--1654056.  MW acknowledges support from the
ANR-20-CE40-0018 grant. JW acknowledges partial support from NSF grant
DMS--2054424 and from Simons Foundation grant MPS-TSM-00007464;
he is also grateful for the hospitality of 
   Utrecht University in May 2025. The authors acknowledge support from the Institut Henri Poincaré (UAR 839 CNRS-Sorbonne Université), LabEx CARMIN (ANR-10-LABX-59-01), and  Erwin Schrödinger International Institute for Mathematics and Physics (ESI). \medskip }
  \bibliographystyle{abbrv}
  \bibliography{diraccoulomb}
 
\end{document}